\newcommand{\wniosek}{corollary}
\def \testowncommand#1{#1}
\newcommand{\st}{\mathop{:}}
\newcommand{\concept}[1]{\textbf{#1{}}}
\newcommand{\bigO}[1]{\ensuremath{\mathit{\testowncommand O\!\left(#1\right)}}}
\newcommand{\bigTh}[1]{\ensuremath{\mathit{\testowncommand \Theta\!\left(#1\right)}}}
\newcommand{\bigOm}[1]{\ensuremath{\mathit{\testowncommand \Omega\!\left(#1\right)}}}
\newcommand{\vn}{\testowncommand \ensuremath{n}\xspace} % size of input
\newcommand{\vm}{\testowncommand \ensuremath{{m}}\xspace} % size of memory
\newcommand{\vd}{\testowncommand \ensuremath{d}\xspace} % constant st: \vT log n = log_\vV n
\newcommand{\vP}{\testowncommand \ensuremath{{P}}\xspace} % size of a page
\newcommand{\vB}{\testowncommand \ensuremath{{B}}\xspace} % size of a block
\newcommand{\vM}{\testowncommand \ensuremath{{M}}\xspace} % size of cache
\newcommand{\vW}{\testowncommand \ensuremath{{W}}\xspace} % size of TLB
\newcommand{\vT}{\testowncommand \ensuremath{{\tau}}\xspace} % constant st: \vT log n = log_\vV n
\newcommand{\rr}{\testowncommand \ensuremath{\mathrm{rr}}\xspace}
\newcommand{\beladys}{\testowncommand \ensuremath{\mathrm{MIN}}\xspace}
\newcommand{\lru}{\testowncommand \ensuremath{\mathrm{LRU}}\xspace}
\newcommand{\islru}{\testowncommand \ensuremath{\mathrm{ISLRU}}\xspace}
\newcommand{\rrmin}{\testowncommand \ensuremath{\mathrm{RRMIN}}\xspace}
\newcommand{\lazyis}{\testowncommand \ensuremath{\mathrm{LIS}}\xspace}
\newcommand{\ocs}{\testowncommand \ensuremath{\mathrm{ISMIN}}\xspace}
\newcommand{\ifnotempty}[2]{\ifthenelse{\equal{#1}{}}{}{#2}} % size of input
\newcommand{\pro}[1]{\ensuremath{\mathit{\testowncommand \varphi_t\ifnotempty{#1}{\!\left(#1\right)}\xspace}}}
\newcommand{\prom}[2]{\ensuremath{\mathit{\testowncommand \varphi_t^{-#1}\ifnotempty{#2}{\!\left(#2\right)}}}}
\newcommand{\propl}[1]{\ensuremath{\mathit{\testowncommand \varphi_{t+1}\ifnotempty{#1}{\!\left(#1\right)}}}}
\newcommand{\TEM}{T_{\mathrm{EM}}}
\newcommand{\floor}[1]{\ensuremath{\left\lfloor#1\right\rfloor}}
\newcommand{\ceil}[1]{\ensuremath{\left\lceil#1\right\rceil}}
\renewcommand{\le}{\leqslant}
\renewcommand{\ge}{\geqslant}
\newcommand{\sift}{\mathit{sift}}
\newcommand{\ellrr}{\ell_{\rr}}
\newcommand{\ignore}[1]{}
\newcommand{\assign}{\ensuremath{\coloneqq}\xspace}
\newcommand{\seqset}[2]{\ensuremath{\left\{#1,\ldots,#2\right\}}\xspace}
\newcommand{\str}[2]{\ensuremath{#1\ldots#2}\xspace}
\newcounter{notedcaptioncounter}
\begin{document}

% Page heads
\markboth{Tomasz Jurkiewicz and Kurt Mehlhorn}{On a Model of Virtual Address Translation}

% Title portion
\title{On a Model of Virtual Address Translation\thanks{A preliminary version of this paper appeared in ALENEX 2013.}}
\author{Tomasz Jurkiewicz
\affil{Max Planck Institute for Informatics, Saarbr\"ucken, Germany
\and
the Saarbr\"ucken Graduate School of Computer Science}
Kurt Mehlhorn
\affil{Max Planck Institute for Informatics, Saarbr\"ucken, Germany}
}
% NOTE! Affiliations placed here should be for the institution where the
%       BULK of the research was done. If the author has gone to a new
%       institution, before publication, the (above) affiliation should NOT be changed.
%       The authors 'current' address may be given in the "Author's addresses:" block (below).
%       So for example, Mr. Abdelzaher, the bulk of the research was done at UIUC, and he is
%       currently affiliated with NASA.

\begin{abstract}
% !TEX root = journal.tex
Modern computers are not random access machines (RAMs).
They have a memory hierarchy, multiple cores, and a virtual memory.
We address the computational cost of the address translation in the virtual memory.

Starting point for our work on virtual memory is the observation that the analysis of some simple algorithms (random scan of an array, binary search, heapsort) in either the RAM model or the EM model (external memory model) does not correctly predict growth rates of actual running times.
We propose the VAT model (virtual address translation) to account for the cost of address translations and analyze the algorithms mentioned above and others in the model. 
The predictions agree with the measurements.
We also analyze the VAT-cost of cache-oblivious algorithms.

\end{abstract}

%!!!!!!!!!!!!!!!!!!!!!!!!!!! CATEGORIES !!!!!!!!!!!!!!!!!!!!!!!!!!!!
%\category{C.2.2}{Computer-Communication Networks}{Network Protocols}

\terms{Design, Algorithms, Performance}

\keywords{External memory, virtual address translation}

\acmformat{Tomasz Jurkiewicz and Kurt Mehlhorn, 2013. Computational Complexity of the Virtual Address Translation.}
% At a minimum you need to supply the author names, year and a title.
% IMPORTANT:
% Full first names whenever they are known, surname last, followed by a period.
% In the case of two authors, 'and' is placed between them.
% In the case of three or more authors, the serial comma is used, that is, all author names
% except the last one but including the penultimate author's name are followed by a comma,
% and then 'and' is placed before the final author's name.
% If only first and middle initials are known, then each initial
% is followed by a period and they are separated by a space.
% The remaining information (journal title, volume, article number, date, etc.) is 'auto-generated'.

\begin{bottomstuff}
A preliminary version of this article appeared in ALENEX 2013. The article is based on the first author's PhD-thesis.\\
Author's mail and web addresses:
\texttt{\{tojot, mehlhorn\}$\!\!\ \!\!\ \!\!\!$(at)mpi-inf.mpg.de},\\% WWW home page:
\texttt{http://www.mpi-inf.mpg.de/\textasciitilde \{tojot, mehlhorn\}}
\end{bottomstuff}

\maketitle

\section{Introduction}

The role of models of computation in algorithmics is to provide abstractions of real machines for algorithm analysis.
Models should be mathematically pleasing and have a predictive value.
Both aspects are essential.
If the analysis has no predictive value, it is merely a mathematical exercise.
If a model is not clean and simple, researchers will not use it.
The standard models for algorithm analysis are the RAM (random access machine) model~\cite{Sheperson-Sturgis} and the EM (external memory) model~\cite{Aggarwal-Vitter}.

The RAM model is by far the most popular model.
It is an abstraction of the von Neumann architecture.
A computer consists of a control and processing unit and an unbounded memory.
Each memory cell can hold a word, and memory access as well as logical and arithmetic operations on words take constant time.
The word length is either an explicit parameter or assumed to be logarithmic in the size of the input.
The model is very simple and has a predictive value. 

Modern machines have virtual memory, multiple processor cores, an extensive memory hierarchy involving several levels of cache memory, main memory, and disks.
The external memory model was introduced because the RAM model does not account for the memory hierarchy, and hence, the RAM model has no predictive value for computations involving disks.
We give more details on both models in Section~\ref{sec:ram-and-em}.

% In this paper we introduce an extension to the RAM model that handles virtual memory, the VAT model that is a tool for predicting asymptotic behavior of programs with low locality.

% !TEX root = dissertation.tex
This research started with a simple experiment.
We timed six simple programs for different input sizes, namely, permuting the elements of an array of size \vn, random scan of an array of size \vn, \vn random binary searches in an array of size \vn, heapsort of \vn elements,
introsort\footnote{Introsort is the version of quicksort used in modern versions of the STL. For the purpose of this paper, introsort is a synonym for quicksort.} of \vn elements, and sequential scan of an array of size \vn. 
For some of the programs, e.g., sequential scan through an array and quicksort, the measured running times agree very well with the predictions of the models.
\emph{However, the running time of random scan seems to grow as} $O(\vn\log\vn)$, \emph{and the running time of the binary searches seems to grow as} $O(\vn\log^2\vn)$, \emph{a blatant violation of what either model predicts.}
We give the details of the experiments in Section~\ref{experiments}. 

Why do measured and predicted running times differ?
Modern computers have virtual memories.
Each process has its own virtual address space $\{0,1,2,\ldots\}$.
Whenever a process accesses memory, the virtual address has to be translated into a physical address.
\emph{The translation of virtual addresses into physical addresses incurs cost}.
The translation process is usually implemented as a hardware-supported walk in a prefix tree, see Section~\ref{virtual memory} for details.
The tree is stored in the memory hierarchy, and hence, the translation process may incur cache faults.
The number of cache faults depends on the locality of memory accesses: the less local, the more cache faults.
The depth of the translation tree is logarithmic in the size of an algorithm's address space and hence, in the worst case, every memory access may lead to a logarithmic number of cache faults during the translation process.
For random scan and random binary searches, it apparently does. 
%In a nutshell, this is the explanation for the additional $\log n$-factor mentioned in the preceding paragraph. 

We propose an extension of the EM model, the VAT (Virtual Address Translation) model, that accounts for the cost of address translation, see Section~\ref{sec:VAT model}.
We show that we may assume that the translation process makes optimal use of the cache memory by relating the cost of optimal use with the cost under the LRU strategy, see Section~\ref{sec:VAT model}.
We analyze a number of programs, including the six mentioned above, in the VAT model and obtain good agreement with the measured running times, see Section~\ref{analysis of algorithms}.
We relate the cost of a cache-oblivious algorithm in the EM model to the cost in the VAT model, see Section~\ref{cache-oblivious algs}.
In particular,  cache-oblivious algorithms that do not need a tall-cache assumption incur no or little overhead.
%We also relate the cost of an algorithm in the EM model with a larger block size to the cost of the algorithm in the %VAT model, see Section~\ref{EM algs}. 
In Section~\ref{discussion}, we address comments made by reviewers and readers of the paper. 
We close with some suggestions for further research and consequences for teaching, see Section~\ref{conclusions}.

\paragraph{Related Work:}

It is well-known in the architecture and systems community that virtual memory and address translation comes at a cost.
Many textbooks on computer organization, e.g.,~\cite{hennessy}, discuss virtual memories.
The papers by Drepper~\cite{Drepper2007,Drepper2008} describe computer memories, including virtual translation, in great detail.
\cite{AMD64} provides further implementation details.

The cost of address translation has received little attention from the algorithms community.
The survey paper by N.~Rahman~\cite{rahman} on algorithms for hardware caches and TLB summarizes the work on the subject.
She discusses a number of theoretical models for memory.
All models discussed in~\cite{rahman} treat address translation atomically, i.e., the translation from virtual to physical addresses is a single operation.
However, this is no longer true.
In 64-bit systems, the translation process is a tree walk.
Our paper is the first that proposes a theoretical model for address
translation and analyzes algorithms in this model.

\section{The Random Access Machine and the External Memory Machine}\label{sec:ram-and-em}

A RAM machine consists of a central processing unit and a memory.
The memory consists of cells indexed by nonnegative integers.
A cell can hold a bitstring.
The CPU has a finite number of registers, in particular an accumulator and an address register.
In any one step, a RAM can either perform an operation (simple arithmetic or boolean operations) on its registers or access memory.
In a memory access, the content of the memory cell indexed by the content of the address register is either loaded into the accumulator or written from the accumulator.
Two timing models are used: in the unit-cost RAM, each operation has cost one, and the length of the bitstrings that can be stored in memory cells and registers is bounded by the logarithm of the size of the input; in the logarithmic-cost RAM, the cost of an operation is equal to the sum of the lengths (in bits) of the operands, and the contents of memory cells and registers are unrestricted.

An EM machine is a RAM with two levels of memory.
The levels are referred to as cache and main memory or memory and disk, respectively.
We use the terms cache and main memory.
The CPU can only operate on data in the cache.
Cache and main memory are each divided into blocks of $B$ cells, and data is transported between cache and main memory in blocks.
The cache has size $M$ and hence consists of $M/B$ blocks; the main memory is infinite in size.
The analysis of algorithms in the EM-model bounds the number of CPU-steps and the number of block transfers.
The time required for a block transfer is equal to the time required by $\Theta(B)$ CPU-steps.
The hidden constant factor is fairly large, and therefore, the emphasis of the analysis is usually on the number of block transfers.

\section{Some Puzzling Experiments}\label{experiments}

%\subsection{Seven Simple Programs}

We used the following seven programs in our experiments.
Let $A$ be an array of size \vn.
\begin{itemize}
\item permute: for $j \in [\vn-1 .. 0]$ do: $i \assign$ random$(0..j)$; swap$(A[i],A[j])$;
\item random scan: $\pi \assign$ random permutation; for $i$ from $0$ to $n-1$ do: $S \assign S+ A[\pi(i)]$;
\item \vn binary searches for random positions in $A$; $A$ is sorted for this experiment.
\item heapify
\item heapsort
\item quicksort
\item sequential scan
\end{itemize}

On a RAM, the first two, the last, and heapify are linear time \bigTh{\vn}, and the others are \bigTh{\vn\log\vn}.
Figure~\ref{data of first experiment} shows the measured running times for these programs divided by their RAM complexity; we refer to this quantity as \emph{normalized operation time}.
More details about our experimental methodology are available in Subsection \ref{sec:methodology}.
If RAM complexity is a good predictor, the normalized operation times should be approximately constant.
We observe that two of the linear time programs show linear behavior, namely, sequential access and heapify, that one of the \bigTh{\vn\log\vn} programs shows \bigTh{\vn\log\vn} behavior, namely, quicksort, and that for the other programs (heapsort, repeated binary search, permute, random access), the actual running time grows faster than what the RAM model predicts. 

\begin{center}
\emph{How much faster and why?}
\end{center}

Figure~\ref{data of first experiment} also answers the ``how much faster'' part of the question.
Normalized operation time seems to be a piecewise linear in the logarithm of the problem size; observe that we are using a logarithmic scale for the abscissa in this figure.
For heapsort and repeated binary search, normalized operation time is almost
perfectly piecewise linear, for permute and random scan, the piecewise linear
should be taken with a grain of salt.\footnote{We are leaving it as an open
  problem to give a satisfactory
  explanation for the bumpy shape of the graphs for permute and random access.}
The pieces correspond to the memory hierarchy.
\emph{The measurements suggest that the running times of permute and random scan grow like $\bigTh{n \log n}$ and the running times of heapsort and repeated binary search grow like $\bigTh{n \log^2 n}$.}

\begin{figure}[!h]
  	\begin{center}
	\begin{tabular}{cc}
	\adjustbox{valign=m}{\begin{sideways}\mbox{running time/RAM complexity}\end{sideways}}&
	\adjustbox{valign=m}{\includegraphics[width=0.9\textwidth]{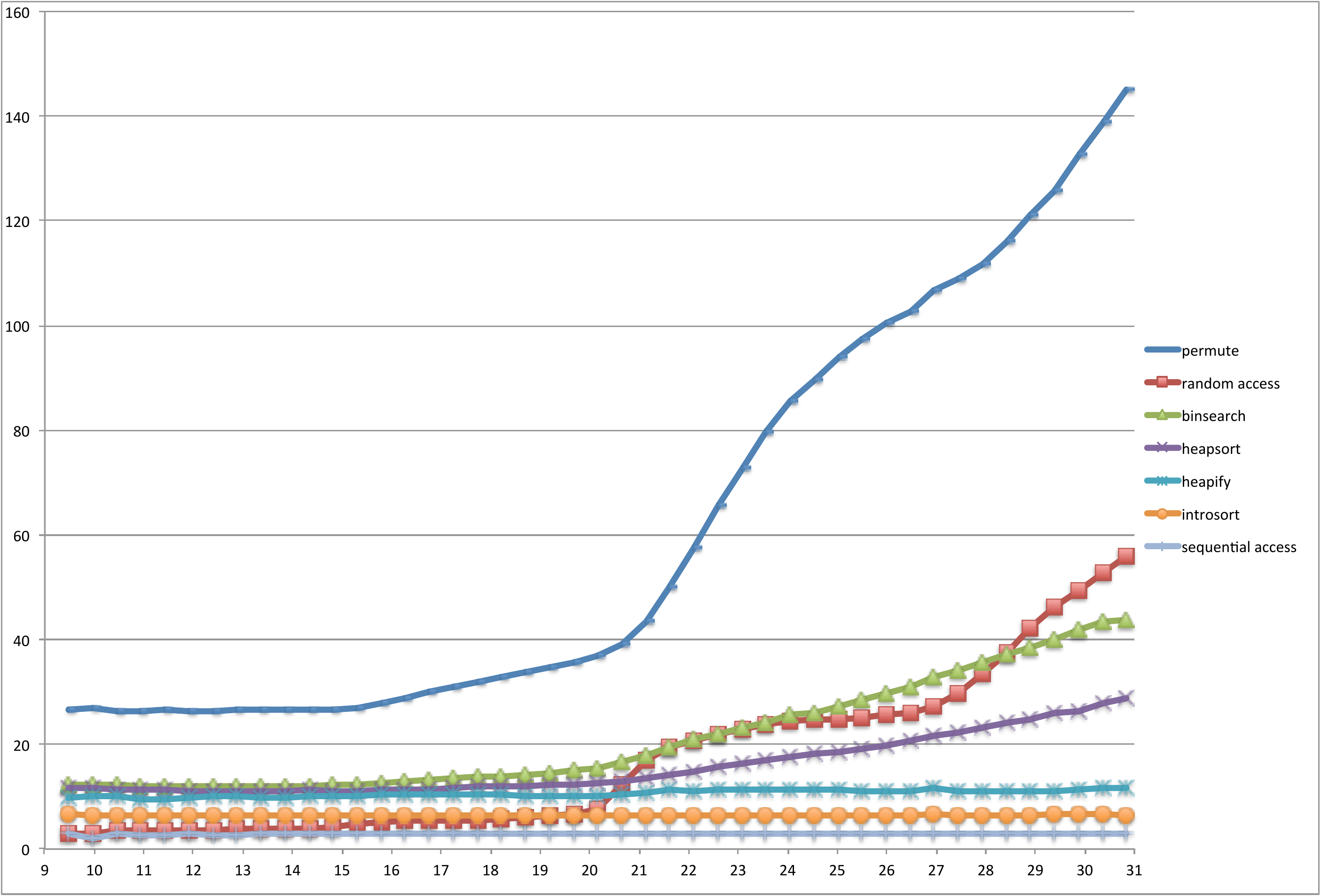}}\\
	&$\log(\text{input size})$
	\end{tabular}
  	\end{center}
\caption{
The abscissa shows the logarithm of the input size.
The ordinate shows the measured running time divided by the RAM-complexity (normalized operation time).
The normalized operation times of sequential access, quicksort, and heapify are
constant, the normalized operation times of the other programs are not.
}\label{data of first experiment}
\end{figure}

\subsection{Memory Hierarchy Does Not Explain It}

We argue in this section that the memory hierarchy does not explain the experimental findings. We give a detailed analysis of the cost of a random scan of an array of size $n$ in a hierarchical memory and relate it to the measured running time. We will see that the prediction by the model and the measured running times differ widely. A simpler argument for a one-level memory hierarchy will be given in Section~\ref{Relation to EM-model}. 

Let $s_i$, $i \ge 0$ be the size of the $i$-th level $C_i$ of the memory hierarchy; $s_{-1} = 0$.
We assume $C_i \subset C_{i+1}$ for all $i$.
Let $\ell$ be such that $s_\ell<\vn\le s_{\ell+1}$, i.e., the array fits into level $\ell+1$ but does not fit into level $\ell$.
For $i\le\ell$, a random address is in $C_i$ but not in $C_{i-1}$, with probability $(s_i - s_{i-1})/\vn$.
Let $c_i$ be the cost of accessing an address that is in $C_i$ but not in $C_{i-1}$.
The expected total cost in the external memory model is equal to
\begin{align*}
\TEM(\vn) \assign \vn\cdot\left(\frac{\vn-s_\ell}{\vn}c_{\ell+1}+\sum_{0\le i\le\ell} \frac{s_i-s_{i-1}}{\vn}c_i\right)= \vn c_{\ell+1} - \sum_{0 \le i \le \ell} s_i (c_{i+1} - c_i).
\end{align*}
This is a piecewise linear function whose slope is $c_{\ell+1}$ for $s_\ell < \vn\le s_{\ell+1}$.
The slopes are increasing but change only when a new level of the memory hierarchy is used.
Figure~\ref{random scan: measured by EM} shows the measured running time of random scan divided by EM-complexity as a function of the logarithm of the problem size.
Clearly, the figure does not show the graph of a constant
function.\footnote{The semi-log plot of a function of the form $n \mapsto (n \log (n/a))/(bn - c)$ with $a,b,c > 0$ is
  convex. Note that $\TEM(n) = bn - c$, where $b$ and $c$ depend on the level
  of the memory hierarchy required for $n$ and that Figure~\ref{data of first
    experiment} suggests that the actual running time grows like $n \log
  (n/a)$. The plot may be interpreted as the plot of a piecewise convex
  function and hence does not contradict the conclusion drawn from
  Figure~\ref{data of first experiment}.}

\newlength{\myskip}\settowidth{\myskip}{58}

\begin{figure}[h]
\begin{center}
	\begin{tabular}{cc}
	\adjustbox{valign=m}{\begin{sideways}\mbox{running time/EM complexity}\end{sideways}}&
	\adjustbox{valign=m}{\includegraphics[width=0.9\textwidth]{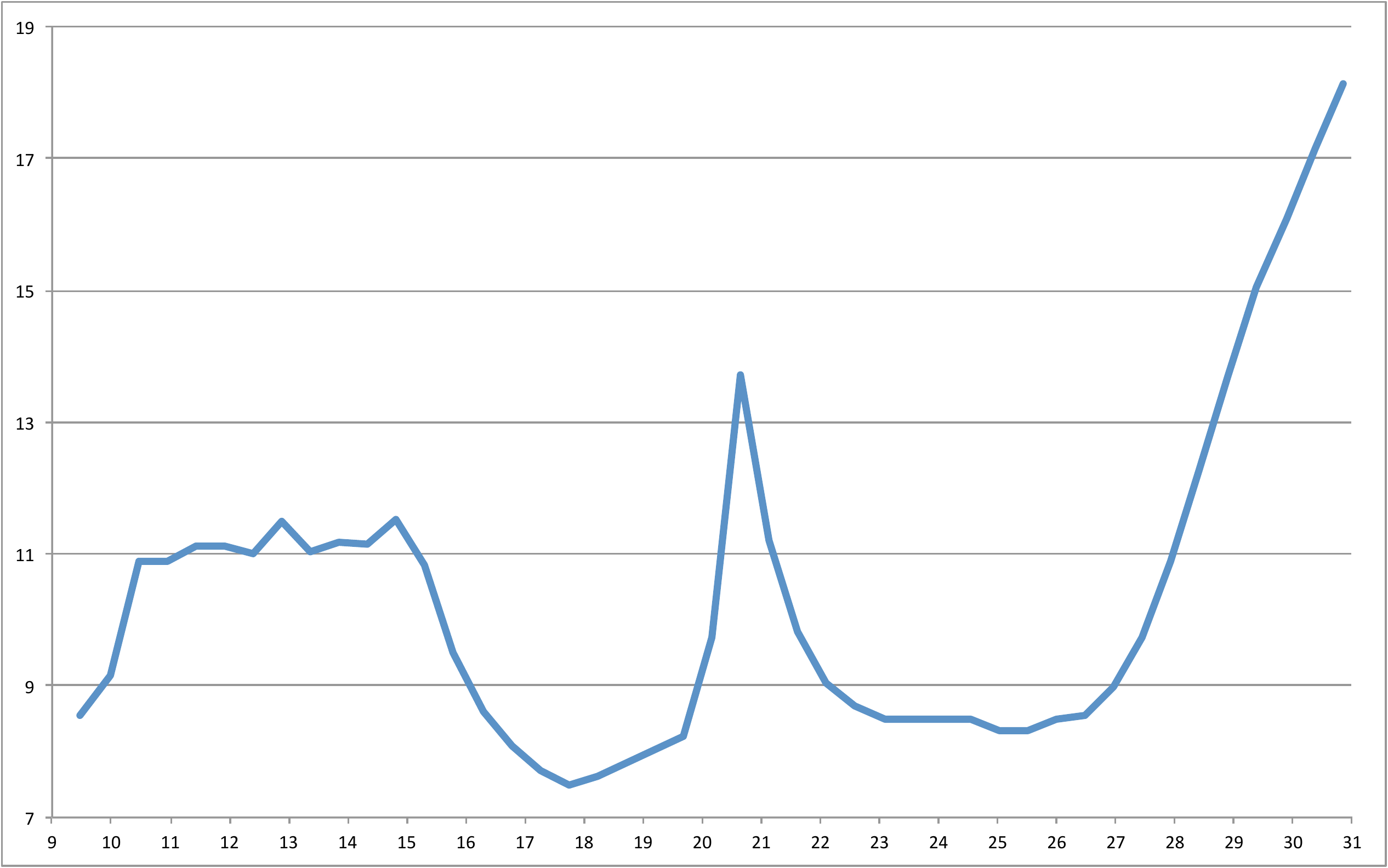}}\\
	&$\log(\text{input size})$
	\end{tabular}
\caption{\label{random scan: measured by EM} The running time of random scan divided by the EM-complexity. 
We used the following parameters for the memory hierarchy: the sizes are taken from the machine specification, and the access times were determined experimentally.}\medskip\footnotesize
\begin{tabular}{|c|c|c|c|}
\hline
Memory&Size&$\log(\text{maximum}$&Access Time\\
Level&&number of elements)&in Picoseconds\\
\hline
L1&32kiB&12\hspace{\myskip}\mbox{}&4080\\
L2&256kiB&15\hspace{\myskip}\mbox{}&4575\\
L3&12MiB&20,58&9937\\
RAM&&&38746\\
\hline
\end{tabular}
\end{center}
\end{figure}

%However, even though for our experiment we guarantee that $C_{\ell+1}$ was RAM
%memory, measurements exhibit com%plexity \bigTh{\vn\log\vn} rather than
%linear.

\subsection{Methodology}\label{sec:methodology}
Programs used for the preparation of Figure \ref{data of first experiment} were compiled by gcc in version ``Debian 4.4.5-8'', and run on Debian Linux in version 6.0.3, on a machine with an Intel Xeon X5690 processor (3,46 GHz, 12MiB\footnote{KiB and MiB are modern, non-ambiguous notations for $2^{10}$ and $2^{20}$ bytes, respectively.
For more details, refer to \url{http://en.wikipedia.org/wiki/Binary_prefix}.} Smart Cache, 6,4 GT/s QPI). 
The caption of Figure~\ref{random scan: measured by EM} lists further machine parameters.
In each case, we performed multiple repetitions and took the minimum measurement for each considered size of the input data.
We chose the minimum because we are estimating the cost that must be incurred.
We also experimented with average or median; moreover, we performed the experiments on other machines and operating systems and obtained consistent results in each case.
We grew input sizes by factors of $1.4$ to exclude the influence of memory associativity, and we made sure that the largest problem size still fitted in the main memory to eliminate swapping.

For each experiment, we computed its normalized operation time, which we define as the measured execution time divided by the RAM complexity.
This way, we eliminate the known factors.
The resulting function represents cost of a single RAM-operation in relation to the problem size.

We use semi-log plots for showing normalized operation cost as a function of
the logarithm in the input size. In such a plot, linear functions of the
logarithm of the input size are easily identified as straight lines. 

\section{Virtual Memory}\label{virtual memory}

Virtual addressing was motivated by multi-processing.
When several processes are executed concurrently on the same machine, it is convenient and more secure to give each program a linear address space indexed by the nonnegative integers.
However, theses addresses are now virtual and no longer directly correspond to physical (real) addresses.
Rather, it is the task of the operating system to map the virtual addresses of all processes to a single physical memory.
The mapping process is hardware supported.

%\subsection{Virtual Address Translation}\label{VAT model}

The memory is viewed as a collection of pages of $\vP = 2^p$ cells (= addressable units).
Both virtual and real addresses consist of an \emph{index} and an \emph{offset}.
The index selects a page and the offset selects a cell in a page.
The index is broken into $d$ segments of length $k = \log K$.
For example, for the long addressing mode of the processors of the AMD64 family (see http://en.wikipedia.org/wiki/X86-64)  the numbers are: $\vd = 4$, $k = 9$, and $p = 12$; the remaining 16 bits are used for other purposes. The choice of $k$ is not arbitrary. A page consists of $2^{12}$ bytes. An address consists of 8 bytes and hence a node of the translation tree requires $2^9 \cdot 2^3 = 2^{12}$ bytes. Thus nodes fit exactly into pages. 

% For example, for one of the most commonly used 64 bit addressing modes on processors of the AMD64 family (see http://en.wikipedia.org/wiki/X86-64) the numbers are: $\vd = 4$, $k = 9$, and $p = 12$; the remaining 16 bits are used for other purposes.
\ignore{
\begin{figure}[!h]\centering%\begin{center}
	\includegraphics[width=\textwidth]{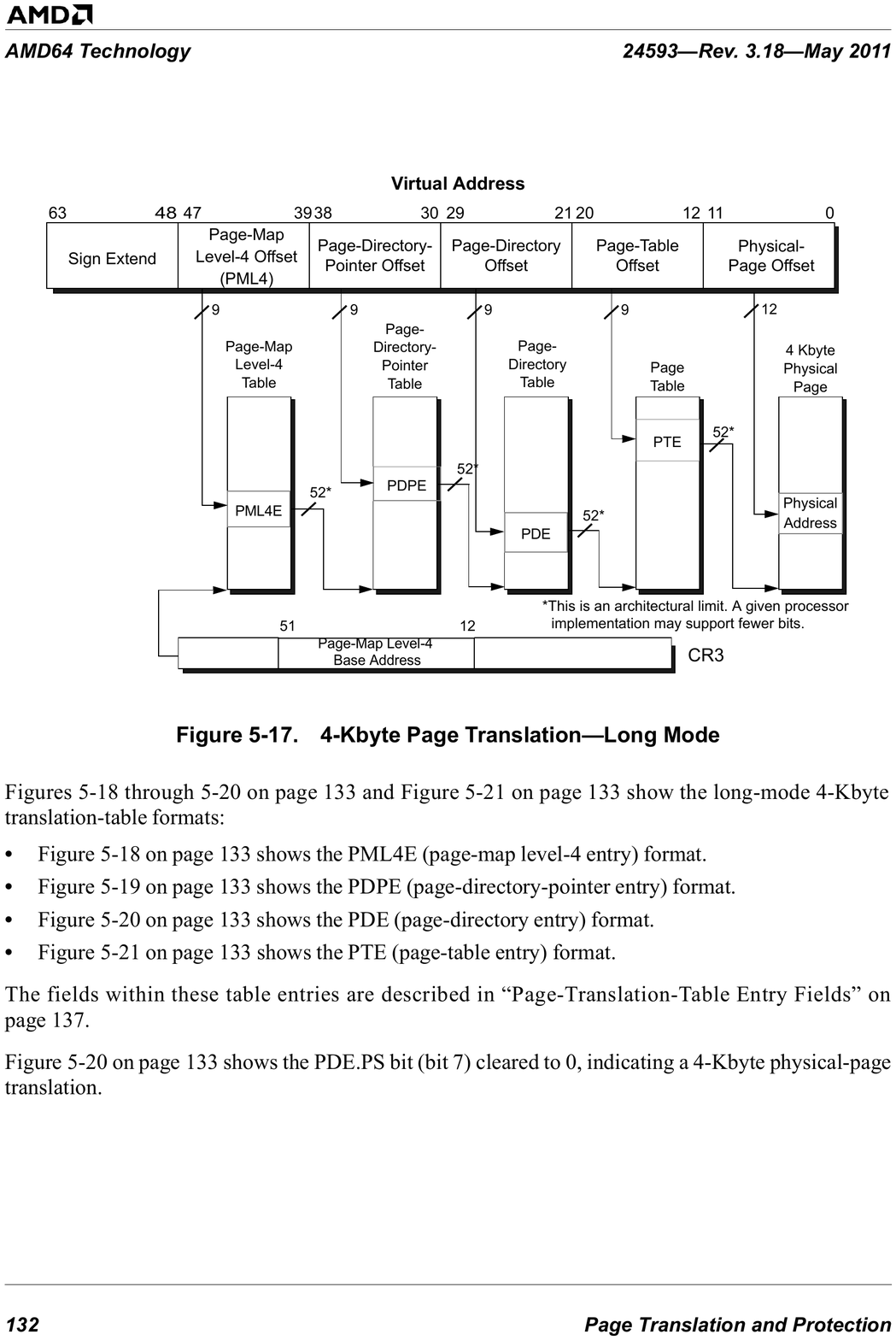}
\caption{Structure of a Virtual Address on a AMD64 class processor, figure from \cite{AMD64}. Note that offsets have distinct historical names, as this system was originally not designed to have multiple levels (Page-Map Level 4 Offset; Page-Directory Pointer Offset; Page-Directory Offset; and Page-Table Offset).
}\label{fig:translation}
%\end{center}
\end{figure}
}
Logically, the translation process is a walk in a tree with outdegree $K$; this tree is usually called the page table~\cite{Drepper2008,hennessy}.
The walk starts at the root; the first segment of the index determines the child of the root, the second segment of the index determines the child of the child, and so on.
The leaves of the tree store indices of physical pages.
The offset then determines the cell in the physical address, i.e., offsets are not translated but taken verbatim.
Here quoting \cite{AMD64}:
\begin{quote}
``Virtual addresses are translated to physical addresses through hierarchical translation tables created and managed by system software.
Each table contains a set of entries that point to the next-lower table in the translation hierarchy.
A single table at one level of the hierarchy can have hundreds of entries, each of which points to a unique table at the next-lower hierarchical level.
Each lower-level table can in turn have hundreds of entries pointing to tables further down the hierarchy.
The lowest-level table in the hierarchy points to the translated physical page.

Figure \ref{fig:translation} on page \pageref{fig:translation} shows an overview of the page-translation hierarchy used in long mode.
Legacy mode paging uses a subset of this translation hierarchy.
As this figure shows, a virtual address is divided into fields, each of which is used as an offset into a translation table.
The complete translation chain is made up of all table entries referenced by the virtual-address fields.
The lowest-order virtual-address bits are used as the byte offset into the physical page.''
\end{quote}
\begin{figure}[!h]
	\includegraphics[width=\textwidth]{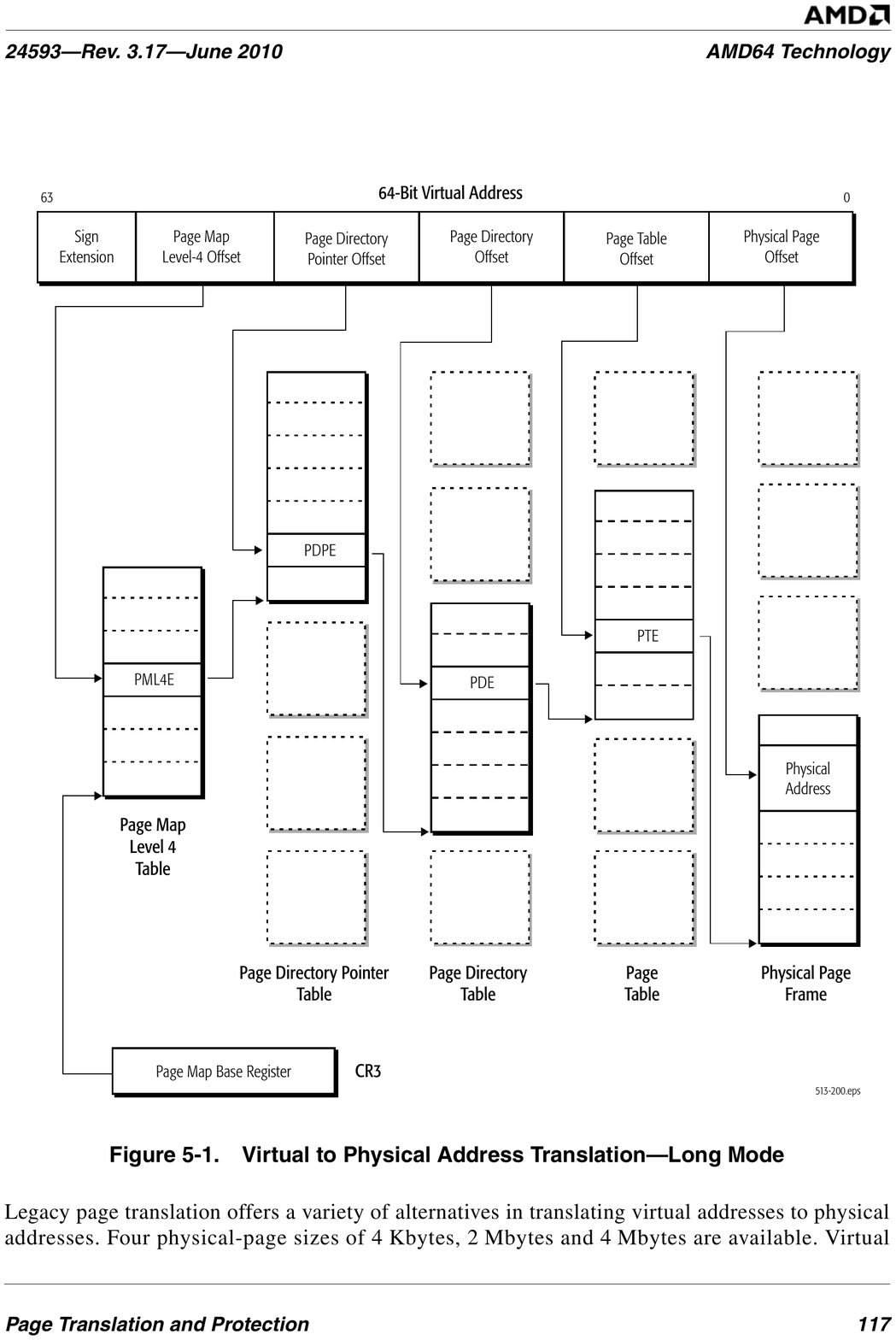}
\caption{Virtual to Physical Address Translation in AMD64, figure from \cite{AMD64}. Note that levels of the prefix tree have distinct historical names, as this system was originally not designed to have multiple levels (Page Map Level 4 Table; Page Directory Pointer Table; Page Directory Table; and Page Table).
}\label{fig:translation}
\end{figure}

Due to its size, the page table is stored in the RAM, but nodes accessed during the page table walk have to be brought to faster memory.
A small number of recent translations is stored in the translation-lookaside-buffer (TLB).
The TLB is a small associative memory that contains physical indices indexed by the virtual ones.
This is akin to the first level cache for data.
Quoting \cite{AMD64} further:
\begin{quote}\label{quo:locality-principle}
``Every memory access has its virtual address automatically translated into a physical address using the page-translation hierarchy.
\emph{Translation-lookaside buffers} (TLBs), also known as \emph{page-translation caches}, nearly eliminate the performance penalty associated with page translation.\linebreak
TLBs are special on-chip caches that hold the most-recently used virtual-to-physical address translations.
Each memory reference (instruction and data) is checked by the TLB. If the translation is present in the TLB, it is immediately provided to the processor, thus avoiding external memory references for accessing page tables.

TLBs take advantage of the \emph{principle of locality}. That is, if a memory address is referenced, it is likely that nearby memory addresses will be referenced in the near future.''
\end{quote}

\section{VAT, The Virtual Address Translation Model}\label{sec:VAT model}

\begin{figure}[!h]
\centering{\begin{tabular}{lr}
$P$       &  page size ($P = 2^p$)\\
$K$       & arity of translation tree ($K = 2^k$)\\
$d$        & depth of translation tree ($= \ceil{\log_K (\text{max used virtual address})}$)\\
$W$      & size of TC cache \\
$\tau$   & cost of a cache fault (number of RAM instructions)
\end{tabular}}
\caption{Notation \label{fig: essential quantities}}
\end{figure}

VAT machines are RAM machines that use virtual addresses.
We concentrate on the virtual memory of a single program.
Both real (physical) and virtual addresses are strings in $\{0, K-1\}^\vd\seqset{0}{\vP-1}$. Any such string corresponds to a number in the interval $[0,K^d P - 1]$ in a natural way.
The $\{0,K-1\}^\vd$ part of the address is called an \emph{index}, and its length \vd is an execution parameter fixed prior to the execution.
It is assumed that $\vd=\lceil\log_K(\text{maximum used virtual address}/\vP)\rceil$.
The $\seqset{0}{\vP-1}$ part of the address is called page offset and \vP is the page size.
The translation process is a tree walk.
We have a $K$-ary tree $T$ of height \vd.
The nodes of the tree are pairs $(\ell,i)$ with $\ell \ge 0$ and $i \ge 0$.
We refer to $\ell$ as the layer of the node and to $i$ as the number of the node.
The leaves of the tree are on layer zero and a node $(\ell, i)$ on layer $\ell \ge 1$ has $K$ children on layer $\ell-1$, namely the nodes $(\ell-1,Ki+a)$, for $a=0\ldots K-1$.
In particular, node $(\vd,0)$, the root, has children $(\vd-1,0),\ \ldots,\ (\vd-1,K-1)$.
The leaves of the tree are physical pages of the main memory of a RAM machine. In order to translate virtual address $x_{\vd-1}\ldots x_0 y$, we start in the root of $T$, and then follow the path described by \str{x_{\vd-1}}{x_0}. We refer to this path as the \emph{translation path} for the address. 
The path ends in the leaf $(0,\sum_{0 \leqslant i \leqslant \vd-1} x_i K^i)$. The offset $y$ selects the $y$-th cell in this page.

\begin{figure}[!h]
\begin{center}
\includegraphics[width=0.9\textwidth]{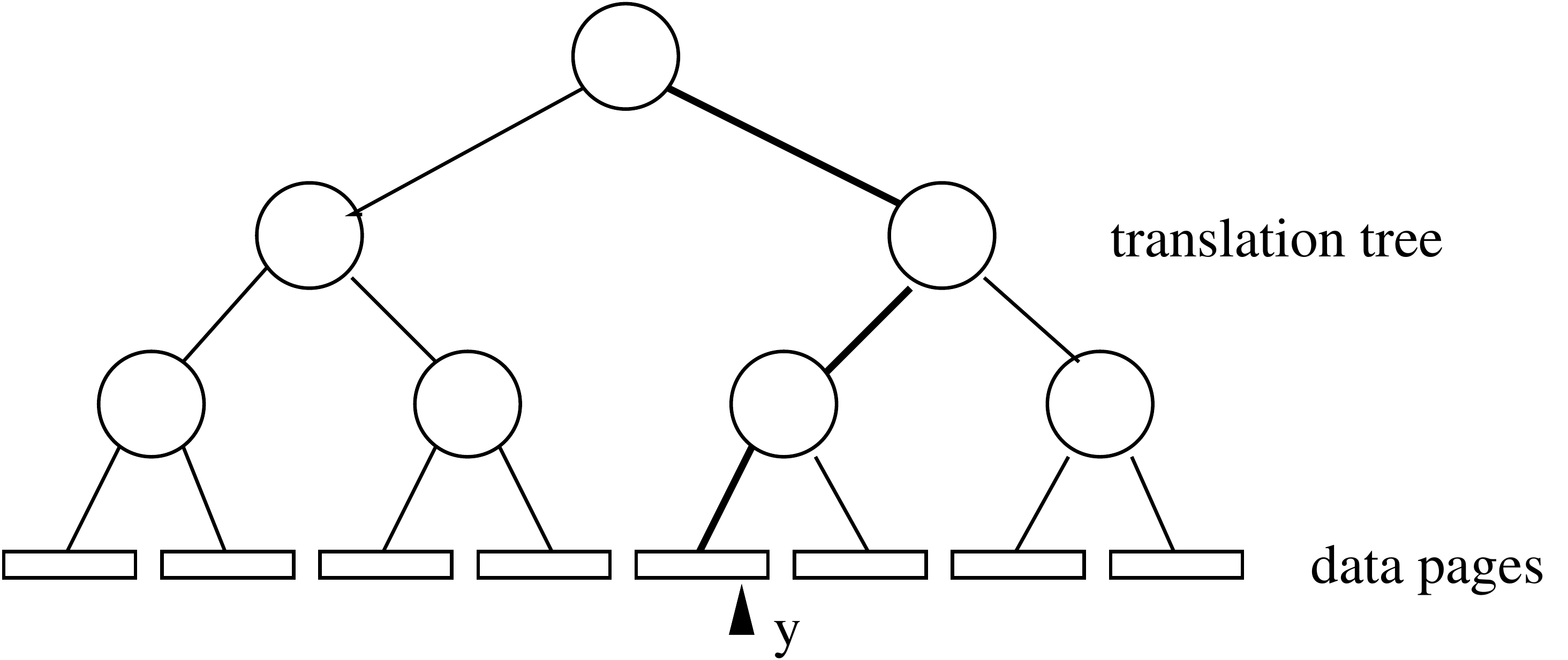}
\end{center}
\caption{\label{fig:translationtree} The pages holding the data are shown at the bottom and the translation tree is shown above the data pages. The translation tree has fan-out $K$ and depth $d$; here $K = 2$ and $d = 3$. The translation path for the virtual index 100 is shown. The offset $y$ selects a cell in the physical page with virtual index 100. The nodes of the translation tree and the data pages are stored in memory. Only nodes and data pages in fast memory (cache memory) can be accessed directly, nodes and data pages currently in slow memory have to brought into fast memory before they can accessed. Each such move is a cache fault. In the EM-model only cache faults for data pages are counted, in the VAT-model, we count cache faults for all nodes of the translation tree. }
\end{figure}

The translation process uses a translation cache TC that can store \vW nodes of the translation tree. Note that a node is either an internal node and then points to $K$ other nodes or is a leaf and then stores data. The cache is updated by insertions and evictions.
Let $a$ be a virtual address, and let $v_d,v_{d-1},\ldots,v_0$ be its translation path. Here, $v_d$ is the root of the translation tree, $v_d$ to $v_1$ are internal nodes of the translation tree, and $v_0$ is a data page.
Translating $a$ requires accessing all nodes of the translation path in order.
Only nodes in the TC can be accessed. 
The translation of $a$ ends when $v_0$ is accessed.

The \emph{number of cache faults incurred by the memory access} is the number of insertions performed during the translation process, and the \emph{cost of the memory access} is \vT times the number of cache faults.
The number of cache faults is at least the number of nodes of the translation path that are not
present in the cache at the beginning of the translation. 
Figure~\ref{fig: essential quantities} summarizes the notation. 

We close the introduction of the model with a trivial, but useful observation. 

\begin{lemma}\label{lem: large D} Let $D \ge 0$ and $i \ge 0$ be integers. The translation paths for addresses $i$ and $i + D$ differ in at least the last $\max(0,\ceil{\log_K (D/P)})$ nodes. \end{lemma}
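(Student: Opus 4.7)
Let $m=\max(0,\lceil \log_K(D/P)\rceil)$. The case $m=0$ is vacuous, so assume $m\ge 1$. The plan is to show that for every layer $\ell\in\{0,1,\ldots,m-1\}$, the node at layer $\ell$ on the translation path for $i$ differs from the node at layer $\ell$ on the translation path for $i+D$; since those are precisely the last $m$ nodes of each path, this yields the claim.

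The key observation is that the node at layer $\ell$ on the path for a virtual address $a$ depends only on which ``super-block'' of size $K^\ell P$ contains $a$. More precisely, unwinding the description of the translation tree in the statement, the layer-$\ell$ node on the translation path for address $a=x_{d-1}\ldots x_0\,y$ is $(\ell,\sum_{j=\ell}^{d-1} x_j K^{j-\ell})=(\ell,\lfloor a/(K^\ell P)\rfloor)$. Therefore the nodes at layer $\ell$ for $i$ and $i+D$ coincide if and only if $\lfloor i/(K^\ell P)\rfloor=\lfloor (i+D)/(K^\ell P)\rfloor$.

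Next I would apply the elementary fact that if $M$ is a positive integer and $D\ge M$, then $\lfloor (i+D)/M\rfloor>\lfloor i/M\rfloor$ for every $i\ge 0$: writing $i=qM+r$ with $0\le r<M$ gives $\lfloor (i+D)/M\rfloor = q+\lfloor (r+D)/M\rfloor \ge q+1$. I apply this with $M=K^\ell P$. To verify the hypothesis $D\ge K^\ell P$ for $\ell\le m-1$, I use the definition of $m$: from $\lceil \log_K(D/P)\rceil=m\ge 1$ one gets $m-1<\log_K(D/P)$, hence $D>K^{m-1}P$; since $D$, $K$ and $P$ are integers ($K$ and $P$ being powers of two), this strict inequality upgrades to $D\ge K^{m-1}P+1\ge K^\ell P$ for every $\ell\le m-1$. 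Thus the layer-$\ell$ nodes of the two translation paths differ for each such $\ell$, which is exactly what the lemma asserts.

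There is no real obstacle here; the only care needed is to keep the tree's indexing convention straight (root at layer $d$, leaves at layer $0$, so the ``last $m$ nodes'' correspond to layers $0,\ldots,m-1$) and to exploit that $D$ and $K^\ell P$ are integers when passing from a strict to a non-strict inequality.
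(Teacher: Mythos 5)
Your proposal is correct and rests on the same key fact as the paper's proof: the layer-$\ell$ node of a translation path is determined by which block of $K^\ell P$ consecutive addresses the address lies in, so a gap $D > K^\ell P$ forces the layer-$\ell$ nodes (and hence all nodes below) to differ. The paper states this as a one-line contrapositive, while you spell it out with the explicit floor computation; the substance is identical.
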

\begin{proof} Let $u_d,\ldots,u_0$ and $v_d,\ldots,v_0$ be the translation paths for addresses $i$ and $i + D$, respectively. If $u_\ell = v_\ell$, then $D \le K^\ell P$. As a consequence, if $D > K^\ell P$, the translation path differ in the last $\ell + 1$ nodes.  \end{proof}

\subsection{Relation to EM-Model}\label{Relation to EM-model}

In the EM-model, one counts only cache faults caused by data pages, in the VAT-model one counts cache faults caused by all translation tree nodes.

A comparison of jumping scan and random scan illustrates the difference; this example is due to Jirka Kataijanen (personal communication). A jumping scan (with stride $P$) is a sequential scan of an array with stride $P$, i.e., cells $0$, $P$, $2P$, $3P$, \ldots are accessed in order. In the EM-model each access causes one page fault and hence the EM-cost of jumping scan and random scan are identical. 

In the VAT-model, a random scan is much more costly than a jumping scan. By Lemma~\ref{lem: random scan}, the cost of a random scan of an array of size $n$ is at least $\vT \vn\log_K(\vn/(\vP\vW))$. However, the cost of a jumping scan is at most $\tau n(1 + 1/(K-1))$. Observe (see Figure~\ref{fig:translationtree}) that $K$ subsequent data pages share the last internal of the translation path and hence the number of page faults caused by nodes of the translation tree is bounded by $\sum_{i \ge 1} n/K^i = n/K \cdot K/(K - 1) = n/(K-1)$. 

A second comparison is also informative. Assume that pages are accessed in random order and once a page is accessed all cells of the page are accessed. With respect to cache faults, $n$ such accesses are equivalent to $n/P$ random accesses. In the EM-model the number of cache faults is $n/P$, which is the same as for a linear scan, and in the VAT-model the number is at least $(\vn/P)\log_K(\vn/(\vP^2\vW))$.

\subsection{TC Replacement Strategies}
Since the TC is a special case of a cache in a classic EM machine, the following classic result applies.

\begin{lemma}[\cite{Sleator-Tarjan:List-Update-Paging,FrigoEtAl12}]\label{lem:lru<2min}
An optimal replacement strategy is at most by factor 2 better than LRU\footnote{LRU is a strategy that always evicts the Least Recently Used node.} on a cache of double size, assuming both caches start empty. 
\end{lemma}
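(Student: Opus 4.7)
The plan is to invoke the classical phase-partition argument of Sleator and Tarjan, instantiated with $k = 2\vW$ for LRU and $h = \vW$ for OPT. Let $\sigma$ be the sequence of translation-tree nodes accessed during translations, which is exactly what the TC serves. I partition $\sigma$ into consecutive \emph{phases} based on LRU's behavior: starting from the first access, a phase ends as soon as LRU has incurred exactly $2\vW$ cache faults within it, at which point the next phase begins (the last phase may be shorter). The empty-cache assumption is convenient because it aligns the starting state of both schedules.

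The structural heart of the proof is the claim that within a single phase the $2\vW$ pages on which LRU faults are pairwise distinct. Suppose for contradiction some page $p$ is faulted on twice within the phase. After the first fault, $p$ sits in LRU's cache of size $2\vW$, and LRU can only evict $p$ once $2\vW$ distinct other nodes have been referenced more recently than $p$. Those $2\vW$ intervening references would already be faults or accesses attributable to the phase, contradicting the choice of phase length. Consequently, each phase references at least $2\vW$ distinct nodes.

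I then compare with OPT. At the beginning of an arbitrary (non-initial) phase, OPT's cache of size $\vW$ holds at most $\vW$ of the $\ge 2\vW$ distinct nodes that will be referenced in the phase; so OPT must bring in at least $\vW$ fresh nodes, incurring at least $\vW$ faults per phase. Summing across all phases gives
\[
 \lru_{2\vW}(\sigma) \;\le\; 2\vW \cdot (\text{\# phases}), \qquad \text{OPT}_{\vW}(\sigma) \;\ge\; \vW \cdot (\text{\# phases}),
\]
and the ratio $2\vW/\vW = 2$ yields the claim. The cost measured in RAM instructions is just $\vT$ times the number of cache faults, so the factor carries over verbatim.

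The main subtleties I anticipate are bookkeeping around the \emph{first} phase — where OPT's cache is empty rather than holding $\vW$ items, so one must argue either that OPT incurs $\vW$ faults filling it (still giving ratio $2$) or use the sharper $k-h+1$ lower bound per phase — and verifying that the TC really behaves as a classical paging cache for the purpose of applying the result. The latter is immediate: translation-tree nodes are fixed-size items (one per page), the TC stores at most $\vW$ of them, and insertions/evictions follow the standard EM-model semantics described in Section~\ref{sec:VAT model}. Hence the cited result applies directly, and no adaptation to the VAT setting is needed beyond the observation that the cost of a memory access decomposes additively over cache faults along a translation path.
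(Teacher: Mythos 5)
The paper does not actually prove this lemma --- it is quoted from Sleator--Tarjan and Frigo et al.\ --- so what you are really doing is reconstructing the classical competitive-analysis argument. Your overall architecture (phase partition by LRU faults, a per-phase distinctness claim, a counting argument against OPT) is the standard one and the final factor-2 bound is correct, but the step you call the ``structural heart'' is false as stated. Within a phase in which \lru on a cache of size $2\vW$ incurs exactly $2\vW$ faults, the faulted pages need \emph{not} be pairwise distinct: take cache contents $\{q,a_1,\dots,a_{2\vW-1}\}$ right after a fault on $q$, then hit $a_1,\dots,a_{2\vW-1}$ (no faults), fault on a fresh page $r$ (which evicts $q$, now the least recently used), and then fault on $q$ again --- two faults on $q$ among only three faults, which fits comfortably inside one phase. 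Your contradiction breaks exactly where you write that the $2\vW$ intervening distinct references ``would already be faults or accesses'': only references to pages outside the cache are faults, so $2\vW$ intervening \emph{accesses} do not exceed the phase's fault budget, and no contradiction arises.

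The conclusion you actually need --- each phase references at least $2\vW$ distinct nodes --- does survive, but via a case split rather than a contradiction: either the $2\vW$ faults hit $2\vW$ distinct pages, or some page $p$ is faulted on twice, in which case the eviction of $p$ forces $2\vW$ distinct pages other than $p$ to be referenced inside the phase, giving at least $2\vW+1$ distinct references. With that repair, your OPT bound ($\ge 2\vW$ distinct pages referenced, at most $\vW$ of them resident at the phase start, hence $\ge \vW$ OPT faults per phase) and your first-phase compulsory-miss accounting go through. You should also say a word about the \emph{last}, possibly incomplete phase, where OPT may fault fewer than $\vW$ times while LRU faults up to $2\vW$ times; the slack from the first phase (where OPT incurs $2\vW$ compulsory misses against LRU's $2\vW$) absorbs this and keeps the overall ratio at $2$ under the stated empty-start assumption. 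Your observation that the TC is a standard paging cache over fixed-size items, so the classical result transfers verbatim, is correct and is precisely the (implicit) stance the paper takes.
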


This result is useful for upper bounds and lower bounds.
LRU is easy to implement.
In upper bound arguments, we may use any replacement strategy and then appeal to the Lemma.
In lower bound arguments, we may assume the use of LRU.
For TC caches, it is natural to assume the initial segment property.

\begin{definition}
An \concept{initial segment} of a rooted tree is an empty tree or a connected subgraph of the tree containing the root.
A TC has the \concept{initial segment property (ISP)} if the TC contains an initial segment of the translation tree.
A TC replacement strategy has ISP if under this strategy a TC has ISP at all times.
\end{definition}
\begin{proposition}
Strategies with ISP exist only for TCs with $\vW>\vd$.
\end{proposition}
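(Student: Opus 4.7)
The plan is to show that any TC following an ISP strategy must, at some moment during a nontrivial execution, contain a complete root-to-leaf translation path, whose length $d+1$ forces $\vW \ge d+1$.

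First I would recall the structural observation that in a rooted tree, any connected subgraph that contains the root necessarily contains, for every node it includes, the entire path from that node up to the root. This is the only property of initial segments that the argument uses, but it is the decisive one.

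Next I would consider an arbitrary address whose translation must be carried out (any sequence of memory accesses will do, since the very first memory access already invokes the translation process). Let $v_d,v_{d-1},\ldots,v_0$ be the corresponding translation path, so that $v_d$ is the root and $v_0$ is the leaf holding the physical page. Because only nodes residing in the TC can be accessed, in particular the leaf $v_0$ must lie in the TC at the moment it is read; just after this (last) insertion, ISP still must hold. By the structural observation above applied to the initial segment currently stored in the TC, all ancestors $v_1,v_2,\ldots,v_d$ of $v_0$ must simultaneously be present. Thus at that instant the TC contains at least the $d+1$ nodes of the translation path, so $\vW \ge d+1$, i.e.\ $\vW > \vd$.

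There is no real obstacle here; the only point that deserves a careful sentence is the justification that the leaf and all its ancestors must coexist in the cache at the same moment, which follows directly from the definition of an initial segment together with the rule that only nodes in the TC may be accessed. The converse direction (that ISP strategies do exist once $\vW > \vd$) is not claimed by the proposition and need not be addressed, although it is trivial: always keep the root plus any partial path one has walked, evicting deepest nodes first.
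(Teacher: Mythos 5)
Your argument is correct: since the model requires every node of the translation path $v_d,\ldots,v_0$ (including the data page $v_0$) to be in the TC when it is accessed, and ISP forces all ancestors of any cached node to be cached simultaneously, the TC must hold $d+1$ nodes at the moment $v_0$ is accessed, giving $\vW \ge \vd+1$. The paper states this proposition without proof, and your argument is exactly the intended one-line justification, so nothing further is needed.
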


ISP is important because, as we show later, ISP can be realized at no additional cost for LRU and at little additional cost for the optimal replacement strategy.
Therefore, strategies with ISP can significantly simplify proofs for upper and lower bounds.
Moreover, strategies with ISP are easier to implement.
Any implementation of a caching system requires some way to search the cache.
This requires an indexing mechanism.
RAM memory is indexed by the memory translation tree.
In the case of the TC itself, ISP allows to integrate the indexing structure into the cached content.
One only has to store the root of the tree at a fixed position or store the location of the root in a special register, called the Page Map Base Register in Figure~\ref{fig:translation}.

We establish the following relations in this section:
   \[ \text{MIN}(\vW)\leqslant\text{ISMIN}(\vW)\leqslant\text{ISLRU}(\vW)\leqslant\text{LRU}(\vW)\leqslant2\text{MIN}(\vW/2)\]
    \[ \text{MIN}(\vW)\geqslant\text{ISMIN}(\vW+d) \qquad
    \text{ISLRU}(\vW)\geqslant\text{LRU}(\vW+d); \]
here $W$ is the size of the translation cache, $d$ is the index length, LRU is the least-recently-used replacement strategy, MIN is the optimal cache replacement stragegy, ISLRU is LRU with the ISP-property, and ISMIN is the optimal replacement strategy with the ISP-property.

\subsection{Eager Strategies and the Initial Segment Property}
Before we prove an ISP analogue of Lemma~\ref{lem:lru<2min}, 
we need to better understand the behavior of replacement strategies with ISP. 
For classic caches, premature evictions and insertions do not improve efficiency.
We will show that the same holds true for TCs with ISP.
This will be useful because we will use early evictions and insertions in some of our arguments.

\begin{definition}\label{def:lazy}
A replacement strategy is \concept{lazy} if it performs an insertion of a missing node, only if the node is accessed directly after this insertion, and performs an eviction only before an insertion for which there would be no free cell otherwise. 
In the opposite case the strategy is \concept{eager}.
Unless stated otherwise, we assume that a strategy being discussed is lazy.
\end{definition}

Eager strategies can perform replacements before they are needed and can even insert nodes that are not needed at all.
Also, they can insert and re-evict, or evict and re-insert nodes during a single translation.
We eliminate this behavior \emph{translation by translation} as follows.
Consider a fixed translation and define the sets of \concept{effective evictions and insertions} as follows.
\begin{align*}
EE&=\{evict(a)\st \text{there are more $evict(a)$ than $insert(a)$ in the translation.}\}\\
EI&=\{insert(a)\st \text{there are more $insert(a)$ than $evict(a)$ in the translation.}\}
\end{align*}
Please note that in this case ``there are more'' means ``there is \emph{one} more'' as there cannot be two $evict(a)$ without an $insert(a)$ between them, or two $insert(a)$ without $evict(a)$.
\begin{proposition}\label{pro:effective-equivallence}
The effective evictions and insertions modify the content of the TC in the same way as the original evictions and insertions.
\end{proposition}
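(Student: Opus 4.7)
The plan is to argue node by node: for each node $a$ of the translation tree, I would show that both the original operation sequence and the effective subsequence leave $a$ in the same state (cached or not) at the end of the translation. Since the TC's contents are determined by the set of cached nodes, this suffices.

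First I would fix $a$ and restrict both sequences to operations touching $a$. The alternation observation already recorded in the paper --- that two $evict(a)$ operations cannot occur without an intervening $insert(a)$, and symmetrically --- is the key structural fact and follows from the invariants that $evict(a)$ requires $a$ to be currently cached while $insert(a)$ requires $a$ to be currently absent. A consequence is that the counts of $evict(a)$ and $insert(a)$ differ by at most one, and that the presence of $a$ in the TC flips with every operation on $a$.

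Then a case analysis on whether $a$ is present at the start and at the end of the translation covers everything. If the presence of $a$ is unchanged, the two counts are equal, so nothing involving $a$ lands in $EE \cup EI$, and the original net effect on $a$ is also nil. If the presence of $a$ flips, the counts differ by exactly one, precisely one effective operation touches $a$ --- an eviction if $a$ left the cache, an insertion if it entered --- and both sequences produce the same outcome for $a$. Running this argument independently for every $a$ yields the proposition.

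The main thing to be careful about is interpreting \emph{``modify the content in the same way''} as \emph{``leave the TC in the same state after the translation,''} and resisting the temptation to read anything stronger such as claiming the effective sequence is literally executable in situ in the order given without intermediate inserts or evictions; the proposition only compares the net effect on the cache contents. Once that reading is fixed, the proof is essentially a definitional unpacking driven by the alternation lemma.
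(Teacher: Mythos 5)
Your argument is correct and matches the paper's intent: the paper states this proposition without an explicit proof, relying precisely on the alternation observation (no two $evict(a)$ without an intervening $insert(a)$ and vice versa) that you use, and your node-by-node parity/case analysis is the natural formalization of that. Your reading of ``modify the content in the same way'' as equality of the net effect on the cache contents is also the intended one, since the in-situ executability issues are handled separately in Proposition~\ref{pro:insertions-evictions} and Lemma~\ref{lem:eager-to-lazy}.
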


% by taking sequences of all evictions and insertions of the translation and skipping pairs of an eviction and insertion of the same node.
%For every node we count how many times was it evicted and inserted during a translation in order to determine whether it was effectively evicted, effectively inserted or restored to the initial state.
\begin{proposition}\label{pro:insertions-evictions}
During a single translation while a strategy with ISP is in use:
\begin{enumerate}
\item No node from the current translation path is effectively evicted, and all the nodes missing from the current translation path are effectively inserted.
%Otherwise access to the leaf at the end of the translation would be impossible.
\item If a node is effectively inserted, no ancestor or descendant of it is effectively deleted.
Subject to obeying the size restriction of the TC, we may therefore reorder effective insertions and effective deletions with respect to each other (but not changing the order of the insertions and not changing the order of the evictions). 
\end{enumerate}
\end{proposition}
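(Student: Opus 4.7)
The plan is to derive both parts of the proposition from a single invariant: at the end of the translation the entire translation path $v_d,\ldots,v_0$ lies in the TC. This holds because $v_0$ is accessed last, so it must be present in the TC when the translation terminates, and then ISP forces every ancestor of $v_0$ (which is exactly $v_1,\ldots,v_d$) to be present as well.

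For part (1) I would fix a node $v_i$ on the translation path and examine its net insertion/eviction count. Since $v_i \in TC_{\mathrm{end}}$, two cases arise. If $v_i \in TC_{\mathrm{start}}$, the insertions and evictions of $v_i$ during the translation must balance, so $v_i$ appears in neither $EE$ nor $EI$. If $v_i \notin TC_{\mathrm{start}}$, there is exactly one more insertion than eviction of $v_i$, so $v_i$ is effectively inserted. In either case $v_i$ is not effectively evicted, and every path-node missing at the start is effectively inserted, as claimed.

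For part (2), suppose $u$ is effectively inserted, so $u \notin TC_{\mathrm{start}}$ and $u \in TC_{\mathrm{end}}$. If $w$ is an ancestor of $u$, ISP applied at the end forces $w \in TC_{\mathrm{end}}$, which prevents $w$ from being effectively evicted (that would require $w \in TC_{\mathrm{start}}$ and $w \notin TC_{\mathrm{end}}$). If $w$ is a descendant of $u$, ISP applied at the start forces $w \notin TC_{\mathrm{start}}$, again ruling out effective eviction. For the reordering claim, part (2) shows that effectively inserted nodes and effectively evicted nodes are pairwise incomparable in the tree, so their operations act on disjoint sets and the net effect on TC contents is $TC_{\mathrm{end}} = (TC_{\mathrm{start}}\setminus D)\cup I$ independent of the interleaving. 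The only genuine constraint is the size bound $W$; a safe schedule preserving the internal order of insertions and of evictions is to perform all effective deletions first and then all effective insertions, since the intermediate cache size is then bounded by $\max(|TC_{\mathrm{start}}|,|TC_{\mathrm{end}}|) \le W$.

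The main subtlety I anticipate is that an eager strategy can effectively insert nodes that are not on the current translation path, so part (1) cannot be pushed beyond path-nodes. Fortunately, the argument for part (2) never invokes the translation path and relies only on ISP at the beginning and at the end of the translation, so it applies uniformly to lazy and eager ISP strategies and the reordering conclusion survives in the eager case as well.
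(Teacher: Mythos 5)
Your proof is correct. The paper states this proposition without proof, so there is nothing to compare against directly; your argument --- characterizing effective insertion and eviction of a node as membership in $TC_{\mathrm{end}}\setminus TC_{\mathrm{start}}$ and $TC_{\mathrm{start}}\setminus TC_{\mathrm{end}}$ respectively, and then invoking ISP at the two endpoints together with the fact that the whole translation path is resident at the moment $v_0$ is accessed --- is a complete and clean justification of both parts. The one subtlety worth making explicit is that part (1) hinges on taking the ``end'' of a translation to be the access of $v_0$ (as the paper's definition of a translation stipulates); if leftover eager evictions occurring after that access were charged to the current translation, a path node could in principle be effectively evicted. You correctly observe that part (2), which is what the proof of Lemma~\ref{lem:eager-to-lazy} actually needs for its reordering step, is insensitive to this choice of window because it uses only ISP at the start and at the end.
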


\begin{lemma}\label{lem:eager-to-lazy}
Any eager replacement strategy with ISP can be transformed into a lazy replacement strategy with ISP with no efficiency loss. 
\end{lemma}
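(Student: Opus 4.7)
The plan is to carry out the transformation translation-by-translation. Fix a single translation and let $\sigma$ be the sequence of evictions and insertions the eager strategy performs during it. Starting from $\sigma$, I will produce a new sequence $\sigma'$ that is lazy, maintains ISP at every intermediate step, costs at most as many insertions (i.e.\ cache faults) as $\sigma$, and ends with the same TC content.

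\emph{Step 1 (drop non-effective operations).} I replace $\sigma$ by the subsequence of its effective evictions $EE$ and effective insertions $EI$, preserving their relative order. By Proposition~\ref{pro:effective-equivallence} the final TC content is unchanged. By Proposition~\ref{pro:insertions-evictions}(1), the nodes in $EI$ are exactly the nodes of the current translation path that were missing from the TC at the beginning of the translation, and no node of the translation path is in $EE$. In particular, every remaining insertion is ``necessary'' for the translation to succeed, so this pruning cannot cost extra faults.

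\emph{Step 2 (postpone operations to make them lazy).} I now reorder. Traverse the translation path $v_d,v_{d-1},\ldots,v_0$ from the root down. When the walk reaches a node $v_i$ that is in the cache, do nothing; when it reaches a node $v_i$ that is not in the cache, do the following just before the access: if the cache is full, perform one eviction from $EE$ (in the order these evictions appeared originally); then insert $v_i$. By construction each insertion is performed immediately before its access and each eviction is performed immediately before an insertion that needs the space, so the resulting schedule is lazy in the sense of Definition~\ref{def:lazy}. The capacity bound $\vW$ is preserved because the net change in cache occupancy across the whole translation is the same as for the original schedule, which already respected $\vW$; a routine induction on the prefix of the translation processed so far shows the occupancy never exceeds the maximum reached under $\sigma$.

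\emph{Step 3 (check ISP is preserved at every moment).} Since the walk proceeds top-down and ancestors $v_d,\ldots,v_{i+1}$ have already been accessed (and therefore sit in the cache) by the time we insert $v_i$, the insertion of $v_i$ only adds a child to a node already in the initial segment, so the segment property is preserved. For evictions, Proposition~\ref{pro:insertions-evictions}(2) tells us that no effective eviction lies on a path to any effective insertion; combined with the fact that the TC had ISP at the start and every preceding operation has kept ISP, the node we evict must be a leaf of the current initial segment, and removing it again yields an initial segment.

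The step I expect to be the main obstacle is Step 3, and specifically the verification that at the moment we choose to evict, the chosen node is indeed a leaf of the current initial segment. The payoff comes entirely from Proposition~\ref{pro:insertions-evictions}(2): effective evictions are disjoint from the ancestor-descendant closure of the effective insertions, so the portion of the initial segment where evictions happen is unaffected by the ongoing insertions, and leaves there remain leaves throughout. Once this is in place, the bookkeeping on cache occupancy and the fault-count comparison (lazy faults $= |EI| \le$ eager faults) are immediate.
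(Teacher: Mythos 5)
Your overall strategy is the paper's: work translation by translation, reduce to the effective evictions $EE$ and effective insertions $EI$, then reschedule so that path nodes are inserted exactly when accessed and evictions happen only to make room. Your occupancy argument and your ISP argument for evictions are fine; in fact one can check that the original order of effective evictions is automatically descendant-first (if $evict(a)\in EE$ then $a$ is never reinserted, so no descendant of $a$ can be inserted afterwards, so every stored descendant must be effectively evicted earlier), which is why your ``evict in original $EE$ order'' coincides with the paper's ``evict a lowest effective eviction.''

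The genuine gap is in Step 1: Proposition~\ref{pro:insertions-evictions}(1) says that all missing nodes of the current translation path are effectively inserted, \emph{not} that these are the only effective insertions. An eager strategy may, by definition, insert nodes that are not needed in the current translation at all (prefetching for later translations), and such insertions can be effective. Your Step 2 schedules only the insertions of current-path nodes, so these off-path effective insertions silently disappear. Then the TC content at the end of the translation no longer matches that of the original strategy, and your translation-by-translation induction breaks: the sets $EE$ and $EI$ of the \emph{next} translation are defined relative to the original strategy's cache content, which your modified strategy no longer reproduces. The paper closes exactly this hole by deferring all leftover effective evictions and insertions to the beginning of the next translation (evictions first, in descendant-first order, then insertions in ancestor-first order), which restores operational equivalence at translation boundaries without adding insertions or violating ISP or the capacity bound. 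Your proof needs this deferral step (or a separate global invariant relating the two cache contents); as written, the claim ``$EI$ is exactly the missing path nodes'' is false for a general eager strategy.
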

\begin{proof}
We modify the original evict/insert/access sequence \emph{translation by translation}.
Consider the current translation and let $EI$ and $EE$ be the set of effective insertions and evictions.
We insert the missing nodes from the current translation path exactly at the moment they are needed.
Whenever this implies an insertion into a full cache, we perform one of the lowest effective evictions, where lowest means that no children of the node are in the TC.
There must be such an effective eviction because, otherwise, the original sequence would overuse the cache as well.
When all nodes of the current translation path are accessed, we schedule all remaining effective evictions and insertions at the beginning of the next translation; first the evictions in descendant-first order and then the insertions in ancestor-first order.
The modified sequence is operationally equivalent to the original one, performs no more insertions, and does not exceed cache size.
Moreover, the current translation is now lazy.
\end{proof}

\subsection{ISLRU, or LRU with the Initial Segment Property}
Even without ISP, LRU has the property below.
\begin{lemma}\label{lem:lru-many-misses}
When the LRU policy is in use, the number of TC misses in a translation is equal to the layer number of the highest missing node on the translation path.
\end{lemma}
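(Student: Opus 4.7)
The plan is to prove the following sharpening: if $v_\ell$ is the highest missing node on the translation path at the start of the translation, then during the translation the accesses to $v_d, \ldots, v_{\ell+1}$ are hits and the accesses to $v_\ell, v_{\ell-1}, \ldots, v_0$ are all misses. The lemma then follows by counting the misses in this tail.

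The proof rests on two facts. First, at every moment LRU maintains the invariant that the TC contains exactly the (up to) $W$ nodes with the largest last-access times, and each eviction removes the unique cache resident with smallest last-access time. Second, writing $t_j$ for the last-access time of $v_j$, we have $t_{j-1} \le t_j + 1$ for every $j$ with $1 \le j \le d$: every translation that ever touched $v_{j-1}$ first touched $v_j$ exactly one step earlier, so the most recent access to $v_j$ can lag behind the most recent access to $v_{j-1}$ by at most one. Both facts are immediate from the definitions.

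I then argue by downward induction on $i$, from $i = \ell$ to $i = 1$, that the access to $v_{i-1}$ is a miss. Suppose we have just missed on $v_i$ and inserted it (evicting the then-LRU if the cache was full). If $v_{i-1}$ was not in the TC just before this insertion, it remains absent when accessed, and the access is a miss. Otherwise $t_{i-1}$ lies among the top $W$ last-access times while $t_i$ does not, so $t_{i-1} > t_i$; combined with $t_{i-1} \le t_i + 1$ this forces $t_{i-1} = t_i + 1$. But then $t_{i-1}$ equals the smallest of the top $W$ access times, i.e., $v_{i-1}$ is the unique LRU at that instant, so the eviction triggered by $v_i$'s insertion removes exactly $v_{i-1}$. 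The subsequent access to $v_{i-1}$ is therefore a miss.

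The main obstacle is precisely this last subcase: a priori the eviction forced by $v_i$'s insertion could target some unrelated cache node and spare $v_{i-1}$, breaking the induction. The tight sandwich $t_i < t_{i-1} \le t_i + 1$, together with the integrality of access times, pins $v_{i-1}$ as the unique LRU at that instant and closes the argument.
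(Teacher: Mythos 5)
Your proof is correct, and it takes a genuinely different route from the paper's. The paper proves the lemma by first characterizing the entire content of the LRU cache globally: concatenate all translation paths, keep only the last occurrence of each node, and observe that the TC is the suffix of length $W$ of this sequence; from this it deduces that the TC holds at most two \emph{incomplete} paths, with the least recent one being evicted top-down while the current one is inserted top-down, so ``the gap is never reduced.'' You instead work locally: the single inequality $t_{j-1}\le t_j+1$ (a parent is accessed exactly one step before its child in whichever translation last touched the child) combined with the standard ``top-$W$ by last access'' characterization of LRU pins the eviction victim at each step of a downward induction, showing the victim is exactly the next node $v_{i-1}$ of the current path. Your argument is more self-contained and arguably more rigorous than the paper's somewhat terse ``the claim now easily follows''; the paper's version buys a global structural picture of the cache that it immediately reuses to motivate ISLRU. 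Two small points you leave implicit but which are easily closed: in the second subcase the cache must in fact be full (if no eviction has ever occurred, a resident $v_{i-1}$ forces $v_i$ to be resident too, contradicting the miss), and the hits on $v_d,\dots,v_{\ell+1}$ only update recency without evicting, so $v_\ell$ is still the highest missing node when it is reached.
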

\begin{proof}
The content of the LRU cache is easy to describe.
Concatenate all translation paths and delete all occurrences of each node except the last.
The last $W$ nodes of the resulting sequence form the TC.
Observe that an occurrence of a node is only deleted if the node is part of a latter translation path.
This implies that the TC contains at most two incomplete translation paths, namely, the least recent path that still has nodes in the TC and the current path.
The former path is evicted top-down and the latter path is inserted top-down.
The claim now easily follows. 
Let $v$ be the highest missing node on the current translation path.
If no descendant of $v$ is contained in the TC, the claim is obvious.
Otherwise, the topmost descendant present in the TC is the first node on the part of the least recent path that is still in the TC.
Thus, as the current translation path is loaded into the TC, the least recent path is evicted top-down.
Consequently, the gap is never reduced. 
\end{proof}

The proof above also shows that whenever LRU detaches nodes from the initial segment, the detached nodes will never be used again.
This suggests a simple (implementable) way of introducing ISP to LRU.
If LRU evicts a node that still has descendants in the TC, it also evicts the descendants.
The descendants actually form a single path.
Next, we use Lemma~\ref{lem:eager-to-lazy} to make this algorithm lazy again.
It is easy to see that the resulting algorithm is the ISLRU, as defined next.
\begin{definition}
\concept{ISLRU} (Initial Segment preserving LRU) is the replacement strategy that always evicts the lowest descendant of the least recently used node.
\end{definition}
Due to the construction and Lemma \ref{lem:eager-to-lazy}, we have the following.
\begin{proposition}\label{pro:islru<lru}
ISLRU for TCs with $\vW>\vd$ is at least as good as LRU.
\end{proposition}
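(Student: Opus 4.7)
The plan is to follow the construction sketched immediately before the definition of ISLRU, formalize it, and then invoke Lemma~\ref{lem:eager-to-lazy}. Concretely, I would first define an intermediate eager replacement strategy $\mathrm{LRU}'$ that behaves exactly like LRU, except that whenever LRU would evict a node $v$ that still has a descendant in the TC, $\mathrm{LRU}'$ additionally evicts all descendants of $v$ currently in the TC. By the observation made in the proof of Lemma~\ref{lem:lru-many-misses}, such a descendant set lies on a single path in the translation tree, and none of these detached nodes is ever accessed again under the reference stream. Hence the extra evictions carried out by $\mathrm{LRU}'$ trigger no additional insertions relative to LRU.

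Next, I would verify by a straightforward induction on the trace that $\mathrm{LRU}'$ satisfies ISP. Starting from an ISP configuration (for $\vW>\vd$ the initial empty TC qualifies), whenever $\mathrm{LRU}'$ evicts a node it simultaneously removes all of that node's descendants currently in the TC, so the remaining cached nodes still form an initial segment. When a new node is inserted during a translation, its parent is by definition already in the TC (because the translation path is loaded top-down, and a translation always starts at the root), so ISP is preserved by insertions as well.

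I would then apply Lemma~\ref{lem:eager-to-lazy} to $\mathrm{LRU}'$: this yields a lazy replacement strategy with ISP that performs no more insertions than $\mathrm{LRU}'$, and therefore no more than LRU. The final step is to check that this lazy strategy is precisely ISLRU as defined. Under the eager-to-lazy reordering, only effective insertions and evictions survive, and each effective eviction in $\mathrm{LRU}'$ is triggered by LRU's choice of a least-recently-used node $v$ together with its (at most one) chain of descendants; when rescheduled lazily just before an insertion into a full TC, it materialises as the eviction of the lowest descendant of the least recently used node, which is exactly ISLRU's rule. Combining the two inequalities (LRU $\geqslant \mathrm{LRU}' \geqslant$ ISLRU) gives the proposition.

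The main obstacle I anticipate is the last identification step: establishing that the lazy form of $\mathrm{LRU}'$ literally coincides with the definition of ISLRU. One has to argue that, at the moment a cell must be freed for a new insertion in the current translation, the node whose lowest descendant the lazy strategy removes is indeed the least recently used node of the TC at that moment, despite the bookkeeping complication that $\mathrm{LRU}'$ may have evicted some nodes earlier than LRU would. A careful invariant relating the LRU ordering in LRU and the set of nodes present in $\mathrm{LRU}'$'s cache at corresponding time points should handle this; the ISP property and the fact that detached nodes are never re-accessed make the invariant stable.
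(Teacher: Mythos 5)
Your argument is exactly the paper's: the proposition is proved there by the same construction (augment LRU with eager eviction of the detached descendants, which by the proof of Lemma~\ref{lem:lru-many-misses} are never hit again before LRU would evict them anyway, then apply Lemma~\ref{lem:eager-to-lazy} and identify the resulting lazy strategy with ISLRU). The identification step you flag as the main obstacle is precisely the step the paper dismisses with ``it is easy to see,'' so your proposal is, if anything, more explicit about where the remaining work lies.
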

\begin{remark}
In fact, the proposition holds also for $\vW\leqslant\vd$, even though ISLRU no longer has ISP in this case.
\end{remark}

\subsection{\ocs: The Optimal Strategy with the Initial Segment Property}
\begin{definition}
\concept{ISMIN} (Initial Segment property preserving MIN) is the replacement strategy for TCs with ISP that always evicts from a TC the node that is not used for the longest time into the future among the nodes that are not on the current translation path and have no descendants. 
Nodes that will never be used again are evicted before the others in arbitrary descendant--first order.
%, that is \ocs that never evicts node $w$ with stored descendants that will never be used again. 
\end{definition}
\begin{theorem}\label{the:isp-is-opt}
\ocs is an optimal replacement strategy among those with ISP.
\end{theorem}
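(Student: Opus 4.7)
The plan is to adapt Belady's classical exchange argument for the optimality of MIN to the ISP setting. By Lemma~\ref{lem:eager-to-lazy}, any competing ISP strategy OPT may be assumed lazy. I show that OPT can be modified step by step so it coincides with \ocs, with each modification preserving ISP and not increasing the total VAT cost; iterating establishes $\text{cost}(\ocs)\le\text{cost}(\mathrm{OPT})$.

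Let $t$ be the first step at which OPT's eviction differs from \ocs's: \ocs evicts a leaf $a$ of the current initial segment and OPT evicts a different leaf $b$; both are off the current translation path. The \ocs tie-breaking rule implies that either $a$ is never used again, or the next access to $\mathrm{subtree}(a)$ is strictly later than the next access to $\mathrm{subtree}(b)$. I construct $\mathrm{OPT}'$ that agrees with OPT through step $t-1$, evicts $a$ at step $t$ with the same insertion as OPT (ISP is preserved because $a$ is itself a leaf), and afterwards mirrors OPT's decisions; the caches then differ only by the swap $\{a\}\leftrightarrow\{b\}$, and $\mathrm{OPT}'$ waits for the first of two resolving events. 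In Case~(A), OPT evicts $a$ at some step $t_1$ before any access to $\mathrm{subtree}(b)$, so $\mathrm{OPT}'$ evicts $b$ at step $t_1$ instead (valid since no descendant of $b$ has been inserted into $\mathrm{OPT}'$'s cache), and the caches coincide at equal cost. In Case~(B), the first access to $\mathrm{subtree}(b)$ occurs at step $t_1$; OPT must insert $b$ together with the missing nodes on the translation path below $b$, while $\mathrm{OPT}'$ already holds $b$ and inserts only the nodes below, saving exactly one insertion, which $\mathrm{OPT}'$ spends on an extra insertion of $a$ paired with a cost-free additional eviction, again synchronizing the caches. If neither event ever occurs, the \ocs rule forces $a$ to also be unaccessed, and the swap persists harmlessly.

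The main obstacle is preserving ISP for $\mathrm{OPT}'$ during the interval between step $t$ and the resolving step. The delicate case is when OPT evicts some ancestor $z$ of $b$ at an intermediate step, an eviction that is invalid for $\mathrm{OPT}'$ since $z$ still has $b$ as a descendant in $\mathrm{OPT}'$'s cache. My plan is to resolve the swap eagerly at this step by performing the Case~(B) exchange early: $\mathrm{OPT}'$ additionally evicts $b$ (a cost-free extra eviction, since VAT charges only insertions) and inserts $a$ in its place, after which $\mathrm{OPT}'$'s cache matches OPT's. The subtle remaining point is a careful charging argument showing that this premature use of the ``Case~(B) saving'' never overspends what the saving pays for, exploiting the fact that OPT's premature eviction of $z$ forces OPT itself to pay for re-inserting $z$ alongside $b$ at the first subsequent access to $\mathrm{subtree}(b)$. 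Iterating the single-step exchange over all disagreements reduces OPT to \ocs and establishes the theorem.
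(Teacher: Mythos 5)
Your overall architecture---take the first point of disagreement, maintain a single-node difference between OPT and $\mathrm{OPT}'$ going forward, and close it out when the exchanged node is either evicted by OPT or needed again---is essentially the paper's own proof; your Cases~(A) and~(B) are its two main cases. The genuine gap is exactly at the point you flag as ``subtle'': the intermediate step where OPT evicts an ancestor $z$ of $b$. Your eager resolution has $\mathrm{OPT}'$ evict $b$ and $z$ (free) and insert $a$, after which the two caches coincide. But that insertion of $a$ is one insertion OPT does not perform, and since the caches are identical from that moment on, every subsequent cost is paid equally by both strategies; in particular the later re-insertion of $z$ and $b$ at the next access to $\mathrm{subtree}(b)$, which you hope to charge against, is paid by $\mathrm{OPT}'$ as well. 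So this exchange step strictly increases the cost by one and the induction breaks. Deferring the insertion (leaving a hole) does not help either: evictions are free in the VAT cost, so a spare slot never buys back an insertion, and $\mathrm{OPT}'$ still pays one extra when $a$ is eventually accessed, with no banked saving to cover it.

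The repair is not to cash in the exchange at this step but to let it migrate upward. When OPT evicts $z$, let $\mathrm{OPT}'$ evict the leaf $b$ instead (ISP-preserving and free) and \emph{keep} $z$; the tracked difference becomes ``OPT holds $a$, $\mathrm{OPT}'$ holds $z$.'' Note $z$ cannot be an ancestor of $a$ (OPT still holds $a$, so ISP forbids OPT from evicting $a$'s ancestors), and since $z$ is an ancestor of $b$ its next use is no later than $b$'s, hence earlier than $a$'s by the \ocs{} eviction rule. Therefore, before $a$ is ever needed, OPT must re-insert whatever node is currently tracked, and at that moment $\mathrm{OPT}'$ saves one insertion---which is precisely what later pays for $\mathrm{OPT}'$ inserting $a$ when $a$ is accessed. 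This is the role of Rules 1--3 in the paper's proof, where the tracked node is called the \emph{special} node. With that modification, and with your Cases~(A) and~(B) restated for the migrating tracked node rather than for $b$ itself, the argument closes.
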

\begin{proof} 
Let $R$ be any replacement strategy with ISP, and let $t$ be the first point in time when it departs from \ocs.
We will construct $R'$ with ISP that does not depart from \ocs, including time $t$, and has no more TC misses than $R$. 
Let $v$ be the node evicted by \ocs at time $t$.

We first assume that $R$ evicts $v$ at some later time $t'$ without accessing it in the interval $(t,t']$.
Then $R'$ simply evicts $v$ at time $t$ and shifts the other evictions in the interval $[t,t')$ to one later replacement. 
Postponing evictions to the next replacement does not cause additional insertions and does not break connectivity.
It may destroy laziness by moving an eviction of a node directly before its insertion.
In this case, $R'$ skips both.
Since no descendant of $v$ is in the TC at time $t$, and $v$ will not be used for the longest time into the future, none of its children will be added by $R$ before time $t'$; therefore, the change does not break the connectivity.

We come to the case that $R$ stores $v$ until it is accessed for the next time, say at time $t'$.
Let $a$ be the node evicted by $R$ at time $t$.
$R'$ evicts $v$ instead of $a$ and remembers $a$ as being \concept{special}.
We guarantee that the content of the TCs in the strategies $R$ and $R'$ differs only by $v$ and the current special node until time $t'$ and is identical afterwords. 
To reach this goal, $R'$ replicates the behavior of $R$ except for three situations.
\begin{enumerate}
\item If $R$ evicts the parent of the special node, $R'$ evicts the special node to preserve ISP and from then on remembers the parent as being special.
As long as only Rule 1 is applied, the special node is an ancestor of $a$.
\item If $R$ replaces some node $b$ with the current special node, $R'$ skips the replacement and from then on remembers $b$ as the special node.
Since $a$ will be accessed before $v$, Rule 2 is guaranteed to be applied, and hence, $R'$ is guaranteed to save at least one replacement.
\item At time $t'$, $R'$ replaces the special node with $v$, performing one extra replacement.
\end{enumerate}
%Any algorithm with ISP can be made lazy with no efficiency loss by \ref{lem:eager-to-lazy}, and 
We have shown how to turn an arbitrary replacement strategy with ISP into \ocs without efficiency loss.
This proves the optimality of \ocs.
\end{proof}
We can now state an ISP-aware extension of Lemma~\ref{lem:lru<2min}.

\begin{theorem}\label{the:equivalences}
\begin{align*}
\beladys(\vW)\leqslant \ocs(\vW) \leqslant \islru(\vW) \leqslant \lru(\vW) \leqslant 2 \beladys(\vW/2),
\end{align*}
where \beladys is an optimal replacement strategy and $A(s)$ denotes a number of insertions performed by replacement strategy $A$ to an initially empty TC of size $s>\vd$ for an arbitrary but fixed sequence of translations.
\end{theorem}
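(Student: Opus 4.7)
The plan is to assemble the chain of inequalities from results already established, reading them off one at a time.

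For $\beladys(\vW)\le\ocs(\vW)$, I would argue directly from definitions: \beladys is optimal among \emph{all} replacement strategies for a TC of size \vW, while \ocs is only required to be optimal within the (restricted) subclass of strategies satisfying ISP. Since every strategy with ISP is, in particular, a strategy, the unconstrained minimum is no larger than the constrained minimum.

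For $\ocs(\vW)\le\islru(\vW)$, I would invoke Theorem~\ref{the:isp-is-opt}, which states that \ocs is optimal among strategies with ISP. Since \islru (for $\vW>\vd$) has ISP by construction — it was defined precisely to enforce the initial segment property on top of LRU — the theorem applies and yields this inequality. For the middle inequality $\islru(\vW)\le\lru(\vW)$, I would just cite Proposition~\ref{pro:islru<lru}, which gave exactly this statement.

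Finally, for $\lru(\vW)\le 2\beladys(\vW/2)$, I would appeal to the classical Sleator--Tarjan bound, stated above as Lemma~\ref{lem:lru<2min}: LRU on a cache of size \vW performs at most twice as many insertions as an optimal offline strategy on a cache of size $\vW/2$, provided both caches start empty (which is the assumption of the theorem).

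Chaining these four inequalities gives the claim. There is no real obstacle to overcome here; the work of the section is already done in Theorem~\ref{the:isp-is-opt}, Proposition~\ref{pro:islru<lru}, and Lemma~\ref{lem:lru<2min}, and the present theorem is essentially a summary that stitches those results together into a single display. The only minor caveat is the hypothesis $\vW>\vd$ needed for ISP to be realizable (and thus for \islru and \ocs to be well-defined as ISP-respecting strategies); this is already incorporated in the statement of the theorem.
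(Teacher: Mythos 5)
Your proof is correct and follows exactly the paper's own argument: each of the four inequalities is read off from the definition of optimality, Theorem~\ref{the:isp-is-opt}, Proposition~\ref{pro:islru<lru}, and Lemma~\ref{lem:lru<2min}, respectively. Your added remark about the hypothesis $\vW>\vd$ is a sensible (and accurate) clarification that the paper leaves implicit.
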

\begin{proof}
\beladys is an optimal replacement strategy, so it is better than \ocs.
\ocs is an optimal replacement strategy among those with ISP, so it is better than ISLRU.
ISLRU is better than LRU by Proposition~\ref{pro:islru<lru}.
$\lru(\vW)<2\beladys(\vW/2)$ holds by Lemma \ref{lem:lru<2min}.
\end{proof}

\subsection{Tighter Relationships}\ \\
Theorem~\ref{the:equivalences} implies $\lru(W) \le 2 \islru(W/2)$ and $\ocs(W) \le 2 \beladys(W/2)$. In this section, we sharpen both inequalities. 

\begin{lemma}
$\lru(\vW + \vd) \le \islru(\vW)$. 
\end{lemma}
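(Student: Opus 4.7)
My plan is to couple $\islru(\vW)$ and $\lru(\vW+\vd)$ on the same sequence of individual memory accesses produced by the translations and establish, by induction over the accesses, the invariant $S_I(t) \subseteq S_L(t)$, where $S_I$ and $S_L$ are the current contents of the ISLRU and LRU caches respectively. Once the invariant holds, any access that is a hit for ISLRU is also a hit for LRU, so on every translation the LRU-misses form a subset of the ISLRU-misses. Since both strategies can be taken lazy by Lemma~\ref{lem:eager-to-lazy}, insertions equal misses, and summing over translations gives $\lru(\vW+\vd) \le \islru(\vW)$.

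Proving the invariant is straightforward when the current access is a hit in both caches, or a hit in $S_L$ with a miss in $S_I$: in the latter, ISLRU inserts a node that is already in $S_L$ and possibly evicts a node of $S_I$, which preserves $S_I\subseteq S_L$. The delicate case is a miss in both caches while both are full: LRU evicts its least recently used node $u_L$, whereas ISLRU evicts $u_I$, the lowest descendant in $S_I$ of its own least recently used node (or that node itself if none). The invariant can fail only if $u_L \in S_I$ and $u_L \neq u_I$; since $S_I\subseteq S_L$ and $u_L$ is LRU of $S_L$, $u_L$ is automatically LRU of $S_I$, so the failure reduces to $u_L$ having a descendant in $S_I$ at the instant of eviction.

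I would rule out this configuration by analyzing the intermediate cache states within the current translation. Observe first that $u_L$ cannot lie on the current translation path, for otherwise the access to $u_L$ during the translation would refresh its age and contradict $u_L$ being LRU at the end. Hence $u_L$ remains the LRU of $S_I$ throughout the entire translation, and every earlier miss in the translation causes ISLRU to evict the then-lowest descendant of $u_L$ from $S_I$, in bottom-up order. The $\vd$ extra slots of LRU over ISLRU form exactly the ``buffer'' in which LRU holds the descendants that ISLRU discards: by ISP, the descendants of $u_L$ in $S_I$ form a subtree under $u_L$ whose nodes lie on a root-to-leaf path of the translation tree, so their number is at most $\vd$, matching the size gap $|S_L|-|S_I|$.

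The main obstacle is the precise bookkeeping making the ``ISLRU empties $u_L$'s subtree before LRU reaches $u_L$'' argument rigorous: the number of prior misses in the translation must be at least the initial number of descendants of $u_L$ in $S_I$, and the LRU-evictions at those prior misses must lie entirely in $S_L\setminus S_I$ so as not to break the invariant at intermediate stages. I expect this to reduce to the identity $|S_L\setminus S_I|=\vd$ once both caches are full, and an inductive pairing that matches each ISLRU descendant-eviction with an LRU eviction drawn from the $\vd$-node buffer, ultimately forcing the two strategies to evict $u_L$ in lockstep at the last relevant miss of the translation.
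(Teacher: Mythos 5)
Your high-level plan is the same as the paper's: show that $\lru$ on the cache of size $\vW+\vd$ always stores a superset of what $\islru$ stores on the cache of size $\vW$, so every $\islru$-hit is an $\lru$-hit and laziness turns this into the claimed inequality. The paper, however, gets the superset property almost for free, because $\islru$ was \emph{constructed} from $\lru$ (evict the detached descendants of the node $\lru$ evicts, then re-lazify via Lemma~\ref{lem:eager-to-lazy}), and the proof of Lemma~\ref{lem:lru-many-misses} already characterizes the $\lru$ cache content explicitly: concatenate the translation paths, keep only the last occurrence of each node, and take the last $\vW$ (resp.\ $\vW+\vd$) nodes. From that characterization the two caches differ by at most one partial path of at most $\vd$ nodes, which is exactly what the extra $\vd$ cells absorb.

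Your version has a genuine gap precisely at the point where all the work is. In the case ``both caches full, both miss, $\lru$ evicts $u_L$ with $u_L\in S_I$ and $u_L$ still has descendants in $S_I$,'' you need (i) that the descendants of $u_L$ in $S_I$ form a single path of length at most $\vd$, and (ii) that the evictions of the two strategies can be paired so that $\islru$ has flushed that path (and $u_L$ itself) no later than the step at which $\lru(\vW+\vd)$ reaches $u_L$. For (i) you appeal to ISP, but ISP only guarantees a connected subtree containing the root; it does not prevent the stored descendants of $u_L$ from branching. The single-path property is a specific consequence of the LRU eviction order (it is exactly what the proof of Lemma~\ref{lem:lru-many-misses} establishes: the cache contains at most two incomplete paths, evicted top-down and inserted top-down), not of ISP. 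For (ii) you explicitly defer the bookkeeping (``I expect this to reduce to\dots''), and that lockstep pairing is the entire content of the lemma, so the proof is not complete as written. Both gaps close at once if you invoke the concatenate-and-truncate description of the $\lru$/$\islru$ cache contents from Lemma~\ref{lem:lru-many-misses} instead of re-deriving the structure by induction on accesses.
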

\begin{proof}
\vd nodes are sufficient for LRU to store one extra path, hence, from the construction, LRU on a larger cache always stores a superset of nodes stored by ISLRU.
Therefore, it causes no more TC misses because it is lazy.
\end{proof}

\begin{theorem}\label{the:ismin<min}
$\ocs(\vW + \vd) \le \beladys(\vW)$.
\end{theorem}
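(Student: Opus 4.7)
My plan is to exhibit, for any request sequence $\sigma$, an ISP strategy $R$ on a cache of size $W+d$ whose number of insertions is at most $\beladys(W)$ on $\sigma$. Since $\ocs(W+d)$ is optimal among ISP strategies with cache of this size, the claim will then follow.

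The natural candidate proceeds by simulation: $R$ shadows $\beladys$'s cache content on $W$ of its cells and devotes the remaining $d$ cells to an ``ancestor buffer'' that holds the internal nodes required to complete the $W$ cells into an initial segment of the translation tree. Whenever $\beladys$ accesses a node during a translation, $R$ accesses the same node---either in its shadow portion, or in the ancestor buffer, or by incurring an insertion precisely when $\beladys$ does. Thus $R$'s insertion count cannot exceed $\beladys$'s, provided the buffer stays within $d$ cells.

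The central obstacle is exactly this bound on the buffer. A priori, it can fail dramatically: if $\beladys$ happens to hold $W$ deeply-nested leaves in disjoint subtrees, the ancestor-closure of its cache can contain as many as $\Omega(Wd)$ extra internal nodes. My plan to sidestep this is to first modify $\beladys$ via an exchange argument in the spirit of Theorem~\ref{the:isp-is-opt}: whenever $\beladys$ would evict an internal node $w$ while a descendant of $w$ is still in the cache, we reroute the eviction to that descendant. Because any access to a descendant traverses $w$, the next-use time of $w$ is no later than the next-use time of any of its descendants, so by Belady's criterion this rerouting cannot increase the total number of insertions.

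I expect the hardest part of the proof to be controlling the cascade of such exchanges globally, so that the modified $\beladys$ maintains at most $d$ simultaneously missing ancestors at any time---as opposed to the naive $\Omega(Wd)$ bound---using the fact that missing ancestors generated by a single reroute lie on a root-to-leaf chain of depth at most $d$. Once this invariant is established, the buffer of $R$ always fits in $d$ cells, the cache of $R$ never exceeds $W+d$ cells and remains an initial segment, and $R$'s insertion count is at most that of the (modified, hence also the original) $\beladys(W)$, yielding the theorem.
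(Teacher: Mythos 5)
Your reduction---exhibit some ISP strategy on $\vW+\vd$ cells matching $\beladys(\vW)$, then invoke the optimality of \ocs among ISP strategies---is the right frame, and you have correctly located the obstacle: the ancestor closure of \beladys's cache can have size $\Omega(\vW\vd)$, so a $\vd$-cell buffer cannot simply complete a verbatim shadow of that cache into an initial segment. The trouble is that your proposed repair does not remove this obstacle. The exchange you describe---evict a stored descendant in place of an internal node $w$ that still has a descendant in cache---is already how the (descendant-first) \beladys behaves: every access to a descendant of $w$ traverses $w$ first, so $w$'s next use is never later, and the paper's Lemma~\ref{lem:no-evictions} and Lemma~\ref{lem:only-leaves} show \beladys never evicts such a $w$. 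Your modification is therefore essentially vacuous. The actual source of the non-initial-segment structure is the \emph{round-robin phase} (Corollary~\ref{col:beladys-behaviour}): when the cache is full, \beladys walks down the tail of the current translation path by repeatedly evicting the just-accessed parent to insert its child, leaving the leaf stored beneath a chain of missing ancestors. These gaps are created one path at a time, persist across later translations (round robin on a new path evicts only nodes of that path), and hence accumulate over many stored paths; the total number of simultaneously missing ancestors is not bounded by $\vd$, so the invariant on which your construction rests appears to be false, not merely unproven.

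The paper escapes this by \emph{not} shadowing \beladys's cache verbatim. In Lemma~\ref{lem:ocp} it maintains a partial bijection $\varphi_t$ mapping every node stored by \beladys to a (not necessarily proper) \emph{ancestor} stored by the simulating strategy \rrmin; storing promoted copies rather than the nodes themselves is what keeps the $\vW$ regular cells an initial segment of the same cardinality, with a single extra cell \rr absorbing the round-robin walk down the tail of the current path. Lemma~\ref{lem:connected-TC} then spends the $\vd$ extra cells only to legalize that one special cell while preserving ISP throughout. If you want to salvage your plan, the missing idea is this promotion map: replace ``\beladys's content plus fill-in'' by ``ancestors of \beladys's content,'' and argue separately that the promoted strategy never inserts more often than \beladys does.
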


The proof of Theorem~\ref{the:ismin<min} is lengthy. We will first derive some
properties of Belady's optimal algorithm $\beladys(\vW)$ and then transform any
$\beladys(\vW)$-strategy in two steps into a $\ocs(\vW + \vd)$-strategy. 

%\subsection{Belady's MIN Algorithm}
Recall that Belady's algorithm \beladys, also called the clairvoyant algorithm, is an optimal replacement policy.
The algorithm always replaces the node that will not be accessed for the longest time into the future.
An elegant optimality proof for this approach is provided in \cite{min-proof}.
\beladys does not differentiate between nodes that will not be used again.
Therefore, without loss of generality, let us from now on consider the descendant--first version of \beladys.
For any point in time, let us call all the nodes that are still to be accessed in the current translation \concept{the required nodes}.
The required nodes are exactly those nodes on the current translation path that are descendants of the last accessed node (or the whole path if the translation is only about to begin).
\begin{lemma}\label{lem:no-evictions}
\begin{enumerate}
\item\label{cla:no-evictions-2} Let $w$ be in the TC.
	As long as $w$ has a descendant $v$ in the TC that is not a required node, \beladys will not evict $w$.
\item\label{cla:no-evictions-3} If $\vW>\vd$, \beladys never evicts the root.
\item\label{cla:no-evictions-1} If $\vW>\vd$, \beladys never evicts a required node.
%\end{lemma}
%\begin{lemma}
\end{enumerate}
\end{lemma}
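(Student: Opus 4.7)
The plan is to invoke the standard characterization of MIN: at each eviction, evict the node whose next access lies farthest in the future, breaking ties in the descendant-first order the paper has already fixed. For each node $u$ currently in the TC, let $N(u)$ denote the time of its next access, or $\infty$ if $u$ is never accessed again. All three parts reduce to comparing $N$-values on pairs of nodes related by ancestry in the translation tree.

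For part (1), the key observation is that if $v$ is a (non-required) descendant of $w$, then $w$ lies strictly above $v$ on every future translation path that reaches $v$. Hence, if $v$ is accessed again, then so is $w$ and $N(w) < N(v)$. If $v$ is never accessed again, then either $w$ is also never accessed again and the descendant-first tie-break evicts $v$ before $w$, or $w$ is accessed again and $N(v) = \infty > N(w)$. In every case $v$ is a strictly better eviction candidate than $w$, so MIN will not select $w$ at this step.

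For part (3), suppose MIN is about to insert a missing node $v_\ell$ of the current translation path into a full TC. The required nodes still to be accessed are $v_\ell, v_{\ell-1}, \ldots, v_0$, and of these only the $\ell$ proper descendants can currently be in the TC, so the TC holds at most $\ell \le d$ required nodes. Since $W > d$, at least one non-required node is present. Every required node is accessed before the current translation ends, whereas any non-required node, if accessed again, is accessed no earlier than the root step of a later translation; so non-required nodes have strictly larger $N$-values than required ones, and MIN evicts from the non-required set. Part (2) then follows by applying the ancestry argument of part (1) to the root itself: for any non-required $u \ne \text{root}$, the next translation accessing $u$ traverses the root first, so $N(\text{root}) < N(u)$, while if $u$ is never accessed again, descendant-first tie-breaking places $u$ above the root. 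Thus the root is dominated by every other non-required node in the TC. Such a node must exist, for if the only non-required node in the TC were the root, then at most $W-1 \le d$ nodes would be required, giving $W = d+1$ and forcing $\ell = d$, i.e., the inserted node is the root itself, contradicting the assumption that the root is in the TC.

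The main obstacle is the small counting argument inside part (2) that guarantees a non-root non-required node in the TC; it is the only place where $W > d$ is squeezed to its boundary. Everything else is a uniform case analysis on whether nodes are ever accessed again and on their ancestor--descendant relation, which is exactly what the descendant-first version of MIN is tailored to handle.
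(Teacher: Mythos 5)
Your proof is correct and takes essentially the same route as the paper's: part (1) by comparing next-access times of an ancestor with a non-required descendant (plus the descendant-first tie-break), and parts (2)--(3) by using $W>d$ to exhibit a non-required node (resp.\ a non-required proper descendant of the root) in the full TC at the moment of eviction. The only cosmetic difference is that you derive the root claim from the required-node claim, whereas the paper gets it directly from part (1) by noting that a full TC missing a path node must contain a node off the current translation path.
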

\begin{proof}
Ad. \ref{cla:no-evictions-2}. If $v$ will be accessed ever again, then $w$ will be used earlier (in the same translation), and so, \beladys evicts $v$ before $w$.
If $v$ will never be accessed again, then \beladys evicts it before $w$ because it is the descendants--first version.
Ad. \ref{cla:no-evictions-3}. Either the TC stores the whole current translation path, and no eviction occurs, or there is a cell in the TC that contains a node off the current translation path; hence, the root is not evicted as it has a non-required descendant in the TC.
Ad. \ref{cla:no-evictions-1}. Either the TC stores the whole current translation path, or there is a cell $c$ in the TC with content that will not be used before any required node; hence, no required node is the node that will not be needed for the longest time into the future.
\end{proof}
\begin{\wniosek}\label{col:root-forever}
If $\vW>\vd$, \beladys inserts the root into the TC as the first insertion during the first translation, and never evicts it.
\end{\wniosek}
\begin{lemma}\label{lem:only-leaves}
If $\vW>\vd$, \beladys evicts only (non-required) nodes with no stored descendants or the node that was just used.
\end{lemma}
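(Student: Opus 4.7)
The plan is to leverage the three properties already established in Lemma~\ref{lem:no-evictions} and then do a short case analysis on whether the evicted node has stored descendants. Throughout, I fix an eviction event of MIN at which some node $w$ is removed from the TC, and I let $v_d, v_{d-1}, \ldots, v_0$ denote the current translation path, with $v_i$ the most recently accessed node (so the required nodes are $v_{i-1}, v_{i-2}, \ldots, v_0$).

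First, I would record the immediate consequences of Lemma~\ref{lem:no-evictions}: since $\vW > \vd$, the evicted node $w$ is not the root (part~\ref{cla:no-evictions-3}) and is not required (part~\ref{cla:no-evictions-1}); moreover, by part~\ref{cla:no-evictions-2}, every descendant of $w$ currently in the TC must be a \emph{required} node — otherwise MIN would not have evicted $w$. This is the structural backbone of the argument.

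Next, I would case-split on whether $w$ has any stored descendant at the moment of eviction. In the first case there are none, and $w$ falls into the first alternative of the statement, so there is nothing more to do. In the second case, every stored descendant of $w$ lies in $\{v_{i-1},\ldots,v_0\}$, which is a chain on the translation path. Hence $w$ itself is an ancestor of some $v_k$ with $k < i$, so $w = v_j$ for some $j$; since $w$ is not required, we must have $j \ge i$.

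The final step — and the one place where I have to be careful about the timing of operations — is to rule out $j > i$. Under the lazy convention (Definition~\ref{def:lazy}), evictions happen only immediately before the insertion of the node that is about to be accessed; consequently, between the access of $v_i$ and the current eviction no replacement has taken place, so $v_i$ is still in the TC. If $j > i$, then $v_i$ would be a \emph{non-required} descendant of $w$ in the TC, directly contradicting part~\ref{cla:no-evictions-2} of Lemma~\ref{lem:no-evictions}. Therefore $j = i$, i.e.\ $w = v_i$ is exactly the node that was just used, completing the second alternative. The only subtle ingredient is this timing point about $v_i$ remaining in the cache; all the tree-structural reasoning is then an immediate consequence of the previously proved lemma.
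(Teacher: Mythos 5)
Your argument is correct and uses essentially the same ingredients as the paper's proof: parts~\ref{cla:no-evictions-1}, \ref{cla:no-evictions-2}, and \ref{cla:no-evictions-3} of Lemma~\ref{lem:no-evictions}, plus the observation that the just-accessed node is still cached and non-required. The paper organizes the case split by whether the evicted node lies on the current translation path rather than by whether it has stored descendants, but the two decompositions are interchangeable and your extra care about the timing of evictions under laziness only makes the argument more explicit.
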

\begin{proof}
If \beladys evicts a node on the current translation path, it cannot be a descendant of the just translated node (Lemma \ref{lem:no-evictions}, claim \ref{cla:no-evictions-1}), it also cannot be an ancestor of the just translated node (Lemma \ref{lem:no-evictions}, claim \ref{cla:no-evictions-2}).
Hence, only the just translated node is admissible.
If the algorithm evicts a node off the current translation path, it must have no descendants (Lemma \ref{lem:no-evictions}, claim \ref{cla:no-evictions-2}).
\end{proof}
\begin{lemma}
If \beladys has evicted the node that was just accessed, it will continue to do so for all the following evictions in the current translation.
We will refer to this as \concept{round robin} approach.
\end{lemma}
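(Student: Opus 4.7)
The plan is to prove the lemma by a single-step transition argument and then iterate. Concretely, assume that when \beladys makes room for $v_i$ (the child on the translation path of the just-accessed node $v_{i+1}$) it evicts $v_{i+1}$. I would show that at the very next insertion, making room for $v_{i-1}$, \beladys again evicts the just-accessed node, which is now $v_i$. Iterating along the remaining suffix of the translation path yields the round robin behaviour.

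The driving observation is that $v_{i+1}$ is the parent of $v_i$ in the translation tree, so every future access to $v_i$ is immediately preceded, within the same translation, by an access to $v_{i+1}$. Measuring time by the number of memory accesses, this gives the strict inequality $T_{v_{i+1}} < T_{v_i}$, where $T_u$ denotes the time of the next access to $u$ (with the convention $T_u = \infty$ if $u$ is never accessed again). By Lemma~\ref{lem:only-leaves}, the candidates for eviction when $v_{i-1}$ is inserted are the just-accessed $v_i$ together with the non-required nodes currently in the cache that have no descendants in the cache. It suffices to show that every such non-required descendantless $w \ne v_i$ satisfies $T_w < T_{v_i}$; the strict comparison (and the descendant-first tie-breaking for nodes never accessed again) then forces \beladys to pick $v_i$.

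To establish $T_w < T_{v_i}$ I would show that $w$ was already a valid candidate at the earlier step, where \beladys chose $v_{i+1}$. Three facts need to be checked: $w$ was in the cache at the earlier step (only $v_{i+1}$ was removed in between and $w \ne v_{i+1}$); $w$ was non-required at that step (since $w \notin \{v_{i-1},\ldots,v_0\}$ and $w \ne v_i$); and, crucially, $w$ was descendantless at that step. Combining this membership in the earlier candidate set with Belady's choice of $v_{i+1}$ yields $T_w \le T_{v_{i+1}} < T_{v_i}$, completing the single-step argument.

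The step I expect to require the most care is the ``descendantless'' verification: one must rule out the possibility that $w$ becomes descendantless only because $v_i$ has entered (or $v_{i+1}$ has left) the cache. The resolution is to observe that if $w$ were an ancestor of $v_{i+1}$ it would also be an ancestor of $v_i$; but $v_i$ sits in the cache at the step where $w$ is descendantless, a contradiction. This is the one place where the specific structure of the translation path, rather than just Belady's priority rule, is doing real work, and it is what makes the round robin behaviour forced rather than merely consistent.
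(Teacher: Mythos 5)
Your proposal is correct and follows essentially the same route as the paper's proof: the driving fact in both is that a node on the translation path is always accessed before its descendants on that path, so the just-evicted ancestor's next use precedes the just-accessed node's next use, and Lemma~\ref{lem:only-leaves} restricts the eviction candidates to the just-used node and descendantless non-required nodes. Your write-up merely makes explicit the bookkeeping (that the candidate set does not acquire new members between consecutive insertions) that the paper leaves implicit.
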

\begin{proof}
If \beladys has evicted a node $w$ that was just accessed, it means that all the other nodes stored in the TC will be reused before the evicted node.
Moreover, all subsequent nodes traversed after $w$ in the current translation will be reused even later than $w$ if at all.
In case of $\vW>\vd$, the claim holds by Lemma \ref{lem:only-leaves}.
\end{proof}
\begin{\wniosek}\label{col:beladys-behaviour}
During a single translation, \beladys proceeds in the following way:\begin{enumerate}
\item It starts with the \concept{regular phase} when it inserts missing nodes of a connected path from the root up to some node $w$, as long as it can evict nodes that will not be reused before just used ones.
\item It switches to the \concept{round robin phase} for the remaining part of the path.
\end{enumerate}

It is easy to see that for $\vW>\vd$, in the path that was traversed in the round robin fashion, informally speaking, all gaps move up by one.
For each gap between stored nodes, the very TC cell that was used to store the node above the gap now stores the last node of the gap.
Storage of other nodes does not change.
This way, the number of nodes from this path stored in the TC does not change either.
However, it reduces numbers of stored nodes on side paths attached to the path.
\end{\wniosek}

In order to reach our goal, we will prove the following lemmas by modifying an optimal replacement strategy into intermediate strategies with no additional replacements.
\begin{lemma}\label{lem:ocp}
There is an eager replacement strategy on a TC of size $\vW+1$ that, except for a single special cell, has ISP and causes no more TC misses than an optimal replacement strategy on a TC of size \vW with no restrictions.
\end{lemma}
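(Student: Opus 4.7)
The plan is to take any optimal replacement strategy $R$ on a cache of size $\vW$ (which, without loss of generality, we may assume behaves like $\beladys$ in the descendant-first form analyzed in Corollary~\ref{col:beladys-behaviour}) and simulate it using an eager strategy $R'$ on a cache of size $\vW+1$. We designate one cell as the \emph{special} cell; the remaining $\vW$ cells are required to hold an initial segment of the translation tree at all times. The target invariant is that $R'$'s cache always contains $R$'s cache (and possibly one extra node), so that every translation is served by $R'$ and every insertion of $R'$ corresponds to an insertion of $R$.

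The preparatory step is to control the structure of $R$'s cache using the earlier characterization of $\beladys$. By Lemma~\ref{lem:no-evictions} and Lemma~\ref{lem:only-leaves}, regular-phase evictions remove only leaves of the stored subgraph and therefore \emph{never create new ISP-violators} (nodes whose parent is absent). Violators arise only from the round-robin phase of Corollary~\ref{col:beladys-behaviour}, where $R$ evicts the just-used internal node of the current path. Moreover, these violators are descendants of that path's round-robin starting node and, once the current translation ends, become descendantless and are therefore available for regular-phase eviction in the next translation. Hence the ``out-of-ISP'' portion of $R$'s cache is at any time confined to a single subpath hanging off the last round-robin phase, and one additional cell is enough to absorb it.

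With this in hand, I would construct $R'$ translation by translation. Each insertion of $R$ is matched by an insertion of $R'$ into the initial segment; every regular-phase eviction of $R$ is copied verbatim, which preserves ISP. At the moment $R$ enters round robin and evicts the just-used node $w$, $R'$ instead retains $w$ in the initial segment and \emph{eagerly} pre-loads the next child of $w$ into the special cell (if it is not already there); subsequent round-robin steps on the same path update the special cell by shifting it one level down, evicting from the initial segment only when the step is actually a descent to a not-yet-stored node. When the current translation ends, the special cell's content is either absorbed into the initial segment (if needed again in the next translation) or evicted in the same step that $R$ cleans up its corresponding violator. By construction, $R'$ is eager, its $\vW$ regular cells always form an initial segment, and every insertion that $R'$ makes is charged to a distinct insertion of $R$.

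The hard part of the plan will be the bookkeeping that shows (i) a single special cell genuinely suffices, i.e.\ that two round-robin phases belonging to different translations never force $R'$ to track two simultaneous violators, and (ii) the eager manipulations of the special cell do not secretly add an insertion. Point (i) will require a careful amortized argument, using Corollary~\ref{col:beladys-behaviour}'s observation that round-robin ``moves gaps up by one,'' to show that the previous translation's violator becomes descendantless and is released from the initial segment before the next round-robin begins to fill the special cell. Point (ii) reduces to checking that each eager move into the special cell corresponds to an insertion $R$ would perform anyway, which can be verified by aligning the two simulations step for step and invoking Lemma~\ref{lem:eager-to-lazy} should any remaining laziness need to be normalized at the end.
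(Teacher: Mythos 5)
There is a genuine gap, and it lies exactly where you flag the ``hard part.'' Your plan rests on the containment invariant that $R'$'s cache always contains $\beladys$'s cache, together with the claim that the ISP-violating portion of $\beladys$'s cache is at all times confined to a single dangling subpath that one special cell can absorb. That claim is false. Corollary~\ref{col:beladys-behaviour} says that each round-robin phase shifts \emph{every} gap on the current path up by one; over successive translations, $\beladys$'s cache can therefore contain a path broken into several disconnected segments (and additional disconnected remnants on side paths, since the round-robin phase ``reduces numbers of stored nodes on side paths''). Once $\beladys$'s cache has two or more nodes whose parents are absent, a superset of it that forms an initial segment plus one extra cell does not exist: you would have to also store all ancestors filling the gaps, which can cost up to $\vd$ extra cells, not one. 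So the invariant you want to maintain is unachievable, and the ``amortized argument'' you defer to in point (i) cannot be made to work because its premise fails.

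The paper's proof avoids this by abandoning containment altogether. It runs $\beladys$ on the side and maintains a partial bijection $\varphi_t$ that maps each node stored by $\beladys$ to an \emph{ancestor} of that node stored by the simulating strategy \rrmin. The simulating cache thus holds a \emph{different}, connected set of nodes of the same cardinality; the only guarantee needed is Proposition~\ref{pro:promotions}, that on every root-to-leaf path \rrmin stores at least as many nodes as $\beladys$, which suffices to ensure no extra insertions. The delicate step your proposal is missing is the insertion rule: when $\beladys$ inserts a node $e$ that \rrmin already stores (because some descendant of $e$ is mapped to it), one follows the chain $e \sqsupset \varphi_t^{-1}(e) \sqsupset \varphi_t^{-2}(e) \sqsupset \cdots$ down to an unstored node $g$ and inserts the highest unstored ancestor of $g$ instead, updating $\varphi$ along the chain. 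Without some such ``promotion'' mechanism that decouples what $R'$ stores from what $\beladys$ stores, the single special cell cannot do the job. I would suggest either adopting the ancestor-mapping invariant or finding a replacement for your structural claim about $\beladys$'s violators; as written, the argument does not go through.
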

\begin{proof}
%\subsection{Proof of Lemma \ref{lem:ocp}}
We introduce a replacement strategy \rrmin\footnote{Round Robin MIN}.
We add a special cell \rr to the TC, and we refer to the remaining \vW cells as \concept{regular TC}.
We will show that the cell \rr allows us to preserve ISP in the regular TC with no additional TC misses.
We start with an empty TC, and we run \beladys on a separate TC of size \vW on a side and observe its decisions.

We keep track of a partial bijection\footnote{A partial bijection on a set is a bijection between two subsets of the set.} \pro{} on the nodes of the translation tree.
We put one timestamp $t$ on every TC access and one more between every two accesses in the regular phase of \beladys.
We position evictions and insertions between the timestamps, at most one of each between two consecutive accesses.
At time $t$, \pro{} maps every node stored by \beladys in its TC to a node stored by \rrmin in its regular TC.
Function \pro{} always maps nodes to (not necessarily proper) ancestors in the memory translation tree.
We denote this as $\pro{a}\sqsubseteq a$, and in case of proper ancestors as $\pro{a}\sqsubset a$.
We say that $a$ is a witness for \pro{a}.
\begin{proposition}\label{pro:promotions}
Since the partial bijection \pro{} always maps nodes to ancestors, for every path of the translation tree, \rrmin always stores at least as many nodes as \beladys.
\end{proposition}
In order to prove the Lemma \ref{lem:ocp}, we need to show how to preserve the properties of the bijection \pro{} and ISP.
In accordance with Corollary \ref{col:beladys-behaviour}, \beladys inserts a number of highest missing nodes in the regular phase and uses the round robin approach on the remaining ones.

Let us first consider the case when \beladys has only the regular phase and inserts the complete path.
In this case, we substitute evictions and insertions of \beladys with these described below.

Let \beladys evict a node $a$.
If \pro{a} has no descendants, \rrmin evicts it.
In the other case, we find \pro{b} a descendant of \pro{a} with no descendants on his own.
\rrmin evicts \pro{b}, and we set $\propl{b}\assign\pro{a}$.
Clearly, we have preserved the properties of \propl{}\footnote{\propl{} is equal to \pro{} on all arguments not explicitly specified.}, and ISP holds.

Now let \beladys insert a new node $e$.
At this point, we know that both \rrmin and \beladys store all ancestors of $e$.
If \rrmin has not yet stored $e$, \rrmin inserts it, and we set $\propl{e}\assign e$.
If $e$ is already stored, it means it has a witness \prom{1}{e} that is a proper descendant of $e$.
We a find a sequence $e\sqsupset\prom{1}{e}\sqsupset\prom{2}{e}\sqsupset\ldots\sqsupset\prom{k}{e}=g$ that ends with $g$ \rrmin has not stored yet.
Such $g$ exists because \prom{1}{} is an injection on a finite set and is undefined for $e$.
We set $\propl{h}\assign h$ for all elements of the sequence except $g$.
\rrmin inserts the highest not stored ancestor $f$ of $g$, and we set $\propl{g}\assign f$.
Note that the inserted node $f$ might not be a required node.
Properties of \pro{} are preserved, and \rrmin did not disconnect the tree it stores.
Also, \rrmin performed the same number of evictions and insertions as \beladys.
Note as well that for all nodes on the translation path, \pro{} is identity.
Finally, Proposition \ref{pro:promotions} guarantees that all accesses are safe to perform at the time they were scheduled.

Now let us consider the case when \beladys has both regular and round robin phase.
Assume that the regular phase ends with the visit of node $v$.
At this point, \beladys stores the (nonempty for $\vW>\vd$ due to Corollary \ref{col:root-forever}) initial segment $p_v$ of the current path ending in $v$; it does not contain $v$'s child on the current path, and it contains some number (maybe zero) of required nodes.
Starting with $v$'s child, \beladys uses the round robin strategy.
Whenever it has to insert a required node, it evicts its parent.
Let $\ell_r$ and $\ellrr$ be the number of evictions in the regular and round robin phase, respectively.

\rrmin also proceeds in two phases. In the first phase, \rrmin simulates the regular phase as described above.
\rrmin also performs $\ell_r$ evictions in the first phase and \pro{} is the identity on $p_v$ at the end of the first phase;
this holds because $\pro{}$ maps nodes to ancestors and since \beladys contains $p_v$ in its entirety at the end of the regular phase.
Let $d'$ be the number of nodes on the current path below $v$;
\beladys stores $d' - \ellrr$ of them at the beginning of the round robin phase, which it does not have to insert, and it does not store $\ellrr$ of them, which it has to insert.
Since $\pro{}$ is the identity on $p_v$ after phase 1 of the simulation and maps the $d' -\ellrr$ required nodes stored by \beladys to ancestors, \rrmin stores at least the next $d'-\ellrr$ required nodes below $v$ in the beginning of phase 2 of the simulation.
In the round robin phase, \rrmin inserts the required nodes missing from the regular TC one after the other into $\rr$, disregarding what \beladys does.
Whenever \beladys replaces a node $a$ with its child $g$, in case of $\vW>\vd$ we fix $\pro{}$ by setting $\propl{g}\assign\pro{a}$.
By Proposition~\ref{pro:promotions}, \rrmin does no more evictions than \beladys.
Therefore, as it also preserves ISP in the regular TC, Lemma \ref{lem:ocp}
holds.\end{proof}

\begin{lemma}\label{lem:connected-TC}
There is a replacement strategy with ISP on a TC of size $\vW+\vd$, that causes no more TC misses than a general optimal replacement strategy on a TC of size \vW.
\end{lemma}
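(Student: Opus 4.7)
The plan is to augment the strategy \rrmin{} of Lemma~\ref{lem:ocp} by $\vd - 1$ extra cells, yielding a new strategy $S$ on $\vW + \vd$ cells that has the full ISP. Recall that \rrmin{} uses $\vW + 1$ cells ($\vW$ regular cells with ISP plus one special cell $\rr$) and incurs at most $\beladys(\vW)$ insertions. The invariant that $S$ will maintain is: at every moment, the set of nodes held by $S$ equals the union of \rrmin's regular TC with the entire translation-tree path from the root down to the node currently stored in $\rr$.

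With this invariant, ISP for $S$ is immediate, since \rrmin's regular TC is an initial segment and we extend it by a single root-to-node path, so that the union is again an initial segment. The cell count is also easy to bound: the regular TC uses at most $\vW$ cells, the path to $\rr$ consists of at most $\vd + 1$ nodes, and the root is shared between the two (a non-empty ISP regular TC always contains the root), giving $|S| \le \vW + \vd$.

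The crucial step is to show that $S$'s insertion count does not exceed \rrmin's. Because $S$'s stored set always contains \rrmin's entire TC (the regular cells together with $\rr$, which sits at the bottom of the stored path), every access that misses $S$ also misses \rrmin{} at the same step; thus $S$'s insertions are at most \rrmin's, and hence, by Lemma~\ref{lem:ocp}, at most $\beladys(\vW)$. Operationally, $S$ mirrors every insertion of \rrmin{} and delays an eviction whenever the evicted node is still needed on the path to the current $\rr$; missing ancestors of a newly installed $\rr$-node are either already present in the extras from the previous step or are accessed right now, in which case both $S$ and \rrmin{} miss simultaneously.

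The main obstacle is to verify that the $\vW + \vd$ capacity is never exceeded during transitions of $\rr$, particularly at the boundary between translations, where $\rr$ may still hold the previous translation's leaf while the new translation traverses a different path. I will handle this by observing that at the start of the round-robin phase of a new translation, any stale extra cells from the old tail can be evicted for free (evictions do not count as insertions), while each subsequent round-robin access inserts exactly one node into an extra cell and simultaneously incurs a miss for \rrmin, keeping both the cell count and the insertion count synchronized.
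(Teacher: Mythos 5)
Your proposal is correct and follows essentially the same route as the paper: simulate \rrmin{} from Lemma~\ref{lem:ocp} and devote $\vd$ extra cells to keeping the entire root-to-$\rr$ path resident, which is exactly what the paper's $\vd$ ``marked'' cells of \lazyis{} accomplish, and your transition rules (delayed evictions of path ancestors, reclassifying a node already held in an extra cell, discarding the stale tail) match its three cases. The only step you omit is the concluding appeal to Lemma~\ref{lem:eager-to-lazy} to convert the resulting eager strategy back into a lazy one.
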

\begin{proof}
%\subsection{Proof of Lemma \ref{lem:connected-TC}}
In order to prove the lemma, we will show how to use additional \vd regular cells in the TC to provide functionality of the special cell \rr while preserving ISP in the whole TC.
We run the \rrmin algorithm aside on a separate TC of size $\vW+1$, and
we introduce another replacement strategy, which we call \lazyis\footnote{Lazy strategy preserving the Initial Segments property}, on a TC of size $\vW+\vd$.
\lazyis starts with an empty TC where \vd cells are marked.
\lazyis preserves the following invariants.
\begin{enumerate}
\item The set of nodes stored in the unmarked cells by \lazyis is equal to the set of nodes stored in the regular TC by \rrmin.
\item The set of nodes stored in the marked cells by \lazyis contains the node stored in the cell \rr by \rrmin.
\item Exactly \vd cells are marked.
\item \lazyis has ISP.
\item No node is stored twice (once marked, once unmarked).
\end{enumerate}
Whenever \rrmin can replicate evictions/insertions of \lazyis without violating the invariants, it does.
Otherwise, we consider the following cases.
\begin{enumerate}
\item Let \rrmin in the regular phase evict a node $a$ that has marked descendants in \lazyis.
Then, \lazyis marks the cell containing $a$ and unmarks and evicts one of the marked nodes with no descendants that does not store the node stored in \rr.
Such a node exists because the only other case is that the marked cells contain all nodes of some path excluding the root, and the leaf is stored in \rr.
Therefore, $a$ is the root, but the root is never evicted due to ISP.
\item In the regular phase, \rrmin inserts a node $c$ to an empty cell while \lazyis already stores $c$ in a marked cell.
In this case, \lazyis unmarks the cell with $c$ and marks the empty cell.
\item In the round robin phase, \rrmin replaces the content of the cell \rr, \lazyis (if needed) replaces the content of an arbitrary marked node with no descendants that is not on the current translation path.
Since the root is always in the TC  and there are \vd marked cells, such a cell always exists.
ISP is preserved, as the parent of this node is already in the TC.
\end{enumerate}
At this stage, if we drop notions of \pro{} and marked nodes, \lazyis becomes an eager replacement strategy on a standard TC.
Therefore, we can use Lemma \ref{lem:eager-to-lazy} to make it lazy.
This concludes the proof of Lemma \ref{lem:connected-TC}.\end{proof}

Since \ocs is an optimal strategy with ISP, Theorem~\ref{the:ismin<min} follows from Lemma~\ref{lem:connected-TC}.

% In the remainder of this section, some lemmas and theorems require the assumption $\vW>\vd$, and some do not.
% However, even for the latter theorems, we sometimes only give the proof for the case $\vW>\vd$.

\begin{remark}
We believe that the requirement for \vd additional cache size is essentially optimal.
Consider the scenario when we access memory cells uniformly at random.
Informally speaking, \beladys will tend to permanently store the first $\log_K(\vW)$ levels of the translation tree because they are frequently used and will use a single cell to traverse the lower levels.
In order to preserve ISP, we need $\vd-\log_K(\vW)+1$ additional cells for
storing the current path. Thus only little improvement seems to be possible. 
\end{remark}

\begin{conjecture}
The strategy of storing higher nodes (Lemma \ref{lem:ocp}) and using extra \vd cells to not evict nodes from the current translation path (Lemma \ref{lem:connected-TC}) can be used to add ISP to any replacement strategy without efficiency loss.
\end{conjecture}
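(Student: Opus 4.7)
The plan is to extend the two-stage construction of Lemmas~\ref{lem:ocp} and~\ref{lem:connected-TC} from \beladys to an arbitrary replacement strategy $R$ on a TC of size \vW. Stage~1 would produce a strategy $R_1$ on a TC of size $\vW+1$, with ISP on \vW{} \concept{regular} cells plus a single round-robin cell \rr, whose total number of insertions is at most that of $R$. Stage~2 would then reuse the construction of Lemma~\ref{lem:connected-TC} to absorb the role of \rr into \vd additional cells so that the final strategy has ISP on all $\vW+\vd$ cells. Stage~2 uses only the external interface exposed by \rrmin{} (an ISP regular area plus one extra cell), so it should apply essentially verbatim once Stage~1 is done; the real work is Stage~1.

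Throughout Stage~1, $R_1$ maintains a partial bijection \pro{} from nodes currently stored by $R$ to nodes stored by $R_1$ in its regular TC, with the invariant $\pro{a}\sqsubseteq a$ for every $a$. By the analogue of Proposition~\ref{pro:promotions}, this guarantees that on every root-to-leaf path $R_1$ stores at least as many nodes as $R$, so $R_1$ is never forced to insert when $R$ is not. When $R$ inserts the highest missing ancestor $e$ of the currently accessed node, $R_1$ follows the chain $e \sqsupset \prom{1}{e} \sqsupset \prom{2}{e} \sqsupset \cdots$ until it reaches a node $g$ not yet stored in the regular TC, inserts the highest missing ancestor of $g$, and rebinds \pro{} along the chain, exactly as in Lemma~\ref{lem:ocp}. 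When $R$ evicts a node $a$ off the current translation path, $R_1$ evicts a lowest descendant of \pro{a} (preserving ISP) and rebinds the witness to \pro{a}. When $R$ evicts a node on the current translation path, the eviction is funnelled through \rr, mirroring the round-robin phase of \rrmin.

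The main obstacle is that an arbitrary $R$ enjoys none of the structural properties of \beladys{} used in Corollary~\ref{col:beladys-behaviour} and Lemma~\ref{lem:no-evictions}: $R$ may evict the root, evict required nodes mid-translation, and interleave regular-style and round-robin-style evictions arbitrarily. Consequently, the phase-based proof of Lemma~\ref{lem:ocp} has to be replaced by an induction on the translation step, and the invariant $\pro{a}\sqsubseteq a$ must be preserved through every such pathological eviction without ever forcing $R_1$ to exceed $\vW+1$ cells. The crux is to show that, for each such eviction, either $R_1$ genuinely needs no extra insertion, or the eventual extra insertion can be amortised against a future forced miss of $R$ on the very same node. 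Producing a clean amortisation of this kind seems to require more delicate bookkeeping than the MIN case, and this is presumably why the authors record the statement as a conjecture rather than a theorem.
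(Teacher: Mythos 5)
This statement is left open in the paper---it appears only as a conjecture, precisely because the machinery of Lemmas~\ref{lem:ocp} and~\ref{lem:connected-TC} does not obviously survive the passage from \beladys to an arbitrary strategy---so there is no proof of record to compare against, and your proposal does not supply one either. What you describe is a plan whose decisive step you yourself flag as unproven: that every ``pathological'' action of $R$ can be absorbed or amortised against a future forced miss of $R$. That amortisation \emph{is} the conjecture. Every structural fact the paper actually uses to drive the simulation---Lemma~\ref{lem:no-evictions}, Lemma~\ref{lem:only-leaves}, Corollary~\ref{col:beladys-behaviour}---is a property of \beladys, not of replacement strategies in general, and your ``induction on the translation step'' names no substitute for them.

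To make the gap concrete: the correctness of the insertion rule in Lemma~\ref{lem:ocp} does not rest only on the invariant $\pro{a}\sqsubseteq a$, but on the fact that when \beladys (with $\vW>\vd$) is about to access the node at depth $j$ of the current path $\pi$, it holds the \emph{entire} prefix of $\pi$ down to depth $j$, because it inserts the path top-down and never evicts a required node. Only then does Proposition~\ref{pro:promotions} guarantee that the ISP simulator, which necessarily stores a prefix of $\pi$, also holds that node. An arbitrary $R$ may access $v_d$, evict it, access $v_{d-1}$, evict it, and so on, holding a single node of $\pi$ at each access; Proposition~\ref{pro:promotions} then only guarantees that $R_1$ holds the root, while ISP forces $R_1$ to hold $d-j+1$ nodes before it may touch $v_j$. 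These are insertions $R$ never pays for at that moment, and whether they can always be charged to later misses of $R$ is exactly what is open. (A smaller but real objection: Stage~2 is not interface-agnostic as you claim---the proof of Lemma~\ref{lem:connected-TC} branches explicitly on the regular versus round-robin phases of \rrmin, notions undefined for the $R_1$ you would construct.) Neither point refutes the conjecture, but the proposal restates the difficulty rather than resolving it.
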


\section{Analysis of Algorithms}\label{analysis of algorithms}

In this section, we analyze the translation cost of some algorithms as a function of the problem size \vn and memory requirement $m$.
For all the algorithms analyzed, $\vm = \bigTh{\vn}$. 

In the RAM model, there is a crucial assumption that usually goes unspoken, namely, the size of a machine word is logarithmic in the number of memory cells used.
If the words were shorter, one could not address the memory.
If the words were longer, one could intelligently pack multiple values in one cell.
This technique can be used to solve NPC problems in polynomial time.
This effectively puts an upper bound on \vn, namely, $\vn<2^\text{word length}$, while asymptotic notations make sense only when \vn can grow to infinity.
However, this is not a limitation of  the RAM model, it merely shows that to handle bigger inputs, one needs more powerful machines.

In the VAT model there is also a set of assumptions on the model constants.
The assumptions bound \vn by machine parameters in the same sense as in the RAM model.
However, unlike in the RAM model, they can hardly go unspoken.
We call them the \emph{asymptotic order relations} between parameters.
%Let us introduce an another parameter \vm that stands for the last memory address used, and so $\vd\assign\lceil\log(\vm/\vP)\rceil$.
%Please note that whenever \vm is polynomial in \vn, logarithms of these values are asymptotically equivalent.
The assumptions we found necessary for the analysis to be meaningful are as follows:
\begin{enumerate}
\item\label{ass:tdp} $1 \leqslant \vT d \leqslant P$; moving a single
  translation path to the TC costs more than a single instruction, but not more
  than size-of-a-page many instructions, i.e., if at least one instruction is performed for each cell in a page, the cost of translating the index of the page can be amortized. 
\item $K \ge 2$, i.e., the fanout of the translation tree is at least two.
\item $\vm/\vP \le K^\vd \le 2\vm/\vP$, i.e., the translation tree suffices to translate all addresses but is not much larger.
	As a consequence, $\log (\vm/\vP) \le \vd \log K = \vd k \le \log (2m/P) = 1 + \log(\vm/\vP)$, and hence, $\log_K (\vm/\vP) \le \vd \le \log_K (2m/P) = (1 + \log (\vm/\vP))/k$.  
\item $\vd \le \vW < \vm^\theta, \text{ for }\theta\in(0,1)$, i.e., the translation cache can hold at least one translation path, but is significantly smaller than the main memory.
\end{enumerate}

\subsection{Sequential Access} We scan an array of size \vn, i.e., we need to translate addresses $b$, $b+1$, \ldots, $b + \vn - 1$ in this order, where $b$ is the base address of the array.
The translation path stays constant for \vP consecutive accesses, and hence, at most $2 \vn/\vP$ indices must be translated for a total cost of at most $\vT\vd(2 + \vn/\vP)$.
By assumption \eqref{ass:tdp}, this is at most $\vT\vd(\vn/\vP + 2)\le \vn + 2\vP$.

The analysis can be sharpened significantly.
We keep the current translation path in the cache, and hence, the first translation incurs at most $\vd$ faults. 
The translation path changes after every $\vP$-th access and hence changes at most a total of $\ceil{\vn/\vP}$ times.
Of course, whenever the path changes, the last node changes.
The next to last node changes after every $K$-th change of the last node 
 and hence changes at most $\ceil{\vn/(\vP K)}$ times.
In total, we incur $$\vd + \sum_{0 \le i \le \vd} \ceil{\frac{\vn}{\vP K^i}} < 2\vd + \frac{K}{K-1} \frac{\vn}{\vP}$$ cache faults; of these faults, at most $1 + n/P$ are caused by data pages and the remaining ones are causes by internal nodes of the translation tree.
The cost is therefore bounded by $2\tau d + 2\tau n/P \le 2\vP + 2\vn/\vd$, which is asymptotically
smaller than the RAM complexity.

\subsection{Random Access}
In the worst case, no node of any translation path is in the cache.
Thus the total translation cost is bounded by $\vT \vd \vn$.
This is at most $\tau n \log_K (2\vn/\vP))$.

We will next argue a lower bound.
We may assume that the TC satisfies the initial segment property.
The translation path ends in a random leaf of the translation tree.
For every leaf, some initial segment of the path ending in this leaf is cached.
Let $u$ be an uncached node of the translation tree of minimal depth, and let $v$ be a cached node of maximal depth.
If the depth of $v$ is larger by two or more than the depth of $u$, then it is better to cache $u$ instead of $v$ (because more leaves use $u$ instead of $v$).
Thus, up to one, the same number of nodes is cached on every translation path, and hence, the expected length of the path cached is at most $\log_K\vW$, and hence, the expected number of faults during a translation is $\vd - \log_K \vW$.
The total expected cost is therefore at least
$\vT\vn(\vd-\log_K\vW)\geqslant\vT\vn\log_K\vn/(\vP\vW)$, 
which is asymptotically larger than the RAM complexity.

\begin{lemma}\label{lem: random scan}
The memory access cost of a random scan of an array of size $n$ is at least $\vT \vn\log_K (\vn/(\vP\vW))$ and at most $\vT n \log_K (2\vn/\vP)$. 
\end{lemma}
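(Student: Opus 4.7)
My plan is to prove the two bounds separately. The upper bound is essentially immediate: every virtual address requires traversing its entire translation path of length $\vd$, and even if nothing is cached, the cost per access is at most $\vT \vd$. This gives a global upper bound of $\vT \vd \vn$. I would then invoke assumption~3 on the asymptotic order of parameters, which gives $\vd \le \log_K(2\vm/\vP)$, and since $\vm = \bigTh{\vn}$ for this problem, $\vd \le \log_K(2\vn/\vP)$. Combining these yields the stated upper bound of $\vT \vn \log_K(2\vn/\vP)$.

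For the lower bound, I would first reduce to the ISP setting. By Theorem~\ref{the:ismin<min}, $\ocs(\vW + \vd) \le \beladys(\vW)$, so proving the bound for strategies with the initial segment property on a cache of size $\vW + \vd$ is sufficient (the extra $\vd$ cells only affect lower-order terms since $\vd \le \vW$, so $\log_K((\vW+\vd)) \le 1 + \log_K \vW$, which is absorbed). Under ISP, the cached nodes form an initial segment of the translation tree, and the cost of a random access is $\vT$ times the number of nodes on its translation path that lie outside this segment, i.e., $\vd$ minus the depth of the deepest cached ancestor on the path.

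The combinatorial heart of the argument is to show that under ISP the expected cached depth along a random root-to-leaf path is at most $\log_K \vW$. For this I would use a simple exchange argument: if there exist a cached node $v$ and an uncached node $u$ with $\mathrm{depth}(v) \ge \mathrm{depth}(u) + 2$, then swapping them keeps the initial-segment property (since $u$'s parent is still cached) and strictly more leaves have a deeper cached ancestor, so the expected number of faults on a random access strictly decreases. Therefore the optimal initial segment is as balanced as possible. A balanced initial segment of $\vW$ nodes has height at most $\log_K \vW$, so the expected cached depth along a random path is at most $\log_K \vW$, and the expected number of faults per access is at least $\vd - \log_K \vW$. Using $\vd \ge \log_K(\vn/\vP)$ from assumption~3 gives an expected total cost of at least $\vT \vn (\log_K(\vn/\vP) - \log_K \vW) = \vT \vn \log_K(\vn/(\vP\vW))$, as required.

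The main obstacle is being careful about the constant-size slack: the passage from $\vW$ to $\vW + \vd$ (from the ISP reduction), the rounding in ``balanced'' initial segments, and the off-by-one in the depth of the deepest cached ancestor versus the number of cached nodes on a path. All of these contribute only additive $O(1)$ or $O(\log_K \vd)$ terms that are dominated by the leading $\vT \vn \log_K(\vn/(\vP \vW))$, provided $\vn/(\vP\vW)$ grows with $\vn$ (which is guaranteed by assumption~4, $\vW < \vm^\theta$ for $\theta < 1$). So the inequalities as stated follow for sufficiently large $\vn$.
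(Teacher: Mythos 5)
Your proposal is correct and follows essentially the same route as the paper: the upper bound via the trivial $\vT\vd\vn$ worst case combined with $\vd \le \log_K(2\vn/\vP)$, and the lower bound via reduction to the initial segment property followed by the exchange argument showing the optimal cached set is a balanced initial segment of expected depth at most $\log_K \vW$ along a random path. Your extra care about the $\vW$ versus $\vW+\vd$ slack is a reasonable refinement of a step the paper treats as an assumption without further justification.
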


\subsection{Binary Search} We do \vn binary searches in an array of length \vn.
Each search  searches for a random element of the array.
For simplicity, we assume that \vn is a power of two minus one.
A binary search in an array is equivalent to a search in a balanced tree where the root is stored in location $n/2$, the children of the root are stored in locations $n/4$ and $3n/4$, and so on.
We cache the translation paths of the top $\ell$ layers of the search tree and the translation path of the current node of the search.
The top $\ell$ layers contain $2^{\ell + 1} - 1$ vertices, and hence, we need to store at most $\vd 2^{\ell + 1}$ nodes\footnote{We use vertex for the nodes of the search tree and node for the nodes of the translation tree.} of the translation tree.
This is feasible if $d 2^{\ell + 1} \le \vW$. For next two paragraphs, let $\ell = \log (\vW/2\vd)$.

Any of the remaining $\log n - \ell$ steps of the binary search causes at most $d$ cache faults.
Therefore, the total cost per search is bounded by 
\begin{align*}
\vT \vd (\log \vn - \ell) \le& \vT\log_K(2\vn/\vP)(\log n - \ell) = \vT \log_K\frac{2\vn}{\vP} \log \frac{2\vn \vd}{\vW}.
\end{align*}

This analysis may seem unrefined.
After all once the search leaves the top $\ell$ layers of the search tree, addresses of subsequent nodes differ only by $n/2^{\ell}$, $n/2^{\ell + 1}$, \ldots, 1.
However, we will next argue that the bound above is essentially sharp for our caching strategy. In a second step, we will extend the bound to all caching strategies.
By Lemma~\ref{lem: large D}, if two virtual addresses differ by $D$, their translation paths differ in the last $\ceil{\log_K (D/\vP)}$ nodes.
Thus, the scheme above incurs at least 
\begin{align*}
\sum_{\ell \le i \le \log n - p}& \ceil{ \frac{1}{k} \log \frac{n}{2^i P}} \ge \sum_{0 \le j \le \log n - \ell - p} \frac{1}{k} \log 2^j \ge\\
\ge&\frac{1}{2k} (\log n - \ell - p)^2 = \frac{1}{2} \log_K \frac{2nd}{\vP\vW} \log \frac{2nd}{\vP\vW}
\end{align*}
cache faults.
We next show that it essentially holds true for any caching strategy. 

By Theorem~\ref{the:equivalences}, we may assume that ISLRU is used as the cache replacement strategy, i.e., TC contains top nodes of the recent translation paths.
Let $\ell = \ceil{\log (2\vW)}$.
There are $2^\ell \ge 2\vW$ vertices of depth $\ell$ in a binary search tree.
Their addresses differ by at least $\vn/2^\ell$, and hence, for any two such addresses, their translation paths differ in at least the last $z = \ceil{\log_K(n/(2^\ell \vP)}$ nodes.
Call a node at depth $\ell$ \emph{expensive} if none of the last $z$ nodes of its translation path are contained in the TC and \emph{inexpensive} otherwise.
There can be at most \vW inexpensive nodes, and hence, with probability at least 1/2, a random binary search goes through an expensive node, call it $v$, at depth $\ell$.
Since ISLRU is the cache replacement strategy, the last $z$ nodes of the translation path are missing for all descendants of $v$.
Thus, by the argument in the preceding paragraph, the expected number of cache misses per search is at least 
\begin{align*}
\frac{1}{2} \sum_{\ell \le i \le \log n - p}& \ceil{ \frac{1}{k} \log \frac{n}{2^i P}} \ge \frac{1}{4} \log_K \frac{2nd}{\vP\vW} \log \frac{2nd}{\vP\vW}
\end{align*}

\begin{lemma} The memory access cost of $n$ random binary searches in an array of size $n$ is at most $\vT \log_K\frac{2\vn}{\vP} \log \frac{2\vn \vd}{\vW}$ and at least $\frac{\vT}{4} \log_K \frac{2nd}{\vP\vW} \log \frac{2nd}{\vP\vW}$
 \end{lemma}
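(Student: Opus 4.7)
The plan is to establish the upper and lower bounds separately, largely formalizing the argument already sketched before the lemma. For the upper bound, I would fix $\ell \assign \floor{\log(\vW/(2\vd))}$ and exhibit an explicit caching policy: at all times keep in the TC the union of the translation paths of the vertices in the top $\ell$ layers of the balanced binary search tree (at most $2^{\ell+1}-1$ vertices, hence at most $\vd\cdot 2^{\ell+1}\le \vW$ translation-tree nodes), together with the translation path of the currently visited search vertex. The first $\ell$ steps of any search then hit the cache, and each of the remaining $\log \vn-\ell$ steps costs at most $\vd$ faults. Using $\vd \le \log_K(2\vn/\vP)$ from assumption~3 and $\log \vn-\ell \le \log(2\vn\vd/\vW)$, the per-search cost is at most $\vT \vd(\log \vn-\ell)\le \vT \log_K(2\vn/\vP)\,\log(2\vn\vd/\vW)$. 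Appealing to Theorem~\ref{the:equivalences} (and Proposition~\ref{pro:islru<lru}) would then give a matching bound, up to constants, for the cost actually incurred by an optimal replacement strategy.

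For the lower bound, I would first invoke Theorem~\ref{the:equivalences} so that it suffices to analyze \islru. Set $\ell \assign \ceil{\log(2\vW)}$, so the search tree has at least $2\vW$ vertices at depth $\ell$, whose array addresses pairwise differ by at least $\vn/2^\ell$. By Lemma~\ref{lem: large D}, any two such translation paths differ in at least the last $z \assign \ceil{\log_K(\vn/(2^\ell\vP))}$ nodes; since at most $\vW$ translation-tree nodes are stored, at most $\vW$ depth-$\ell$ vertices can have any of the last $z$ nodes of their translation path present in the cache at the moment they are first visited. Call the remaining (at least $\vW$) vertices \emph{expensive}. A random search therefore passes through an expensive vertex $v$ with probability at least $1/2$.

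Once the search enters the subtree rooted at $v$, I would iterate the same Lemma~\ref{lem: large D} argument one level at a time: the descendants of $v$ at depth $\ell+j$ still have pairwise address differences of order $\vn/2^{\ell+j}$, so their translation paths differ in the last $\ceil{\log_K(\vn/(2^{\ell+j}\vP))}$ nodes. Because \islru has the initial-segment property and the cache held none of the bottom $z$ nodes of $v$'s translation path when $v$ was visited, none of the corresponding lower tails for descendants of $v$ are in the cache either. Summing $\sum_{0\le j\le \log \vn-\ell-p}\ceil{\log_K(\vn/(2^{\ell+j}\vP))}$ as in the text yields the expected per-search lower bound $\frac{\vT}{4}\log_K(2\vn\vd/(\vP\vW))\,\log(2\vn\vd/(\vP\vW))$.

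The hard part will be the rigorous book-keeping in the lower-bound step: I must show that after $v$ is visited, \islru cannot have miraculously filled in useful tails for the descendants during the descent, nor can the $\vW$ translation-tree cells be re-allocated to cover all $2^j$ distinct tails at depth $\ell+j$ simultaneously. A clean way to do this is probably an amortized charging scheme that splits every search into its ``top $\ell$ portion'' (whose cost is absorbed by the prefix assumption) and its ``bottom portion'' (where, by the counting above, only $\vW$ of the many distinct possible tails can be cache-resident at any one time), and charges expensive searches with the sum above. Modulo that bookkeeping, both bounds follow directly from Lemma~\ref{lem: large D} and Theorem~\ref{the:equivalences}.
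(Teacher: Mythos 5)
Your proposal follows essentially the same route as the paper's own proof: the identical upper-bound construction (cache the translation paths of the top $\log(\vW/(2\vd))$ layers plus the current path, then charge $\vd$ faults per remaining step), and the identical lower bound via Theorem~\ref{the:equivalences}, ISLRU, expensive vertices at depth $\ceil{\log(2\vW)}$ hit with probability at least $1/2$, and the level-by-level sum from Lemma~\ref{lem: large D}. The bookkeeping you flag as the remaining gap is exactly what the paper discharges by appealing to the initial-segment property of ISLRU, so no genuinely different idea is needed.
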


We know from cache-oblivious algorithms that the van-Emde Boas layout of a search tree improves locality.
We will show in Section~\ref{cache-oblivious algs} that this improves the translation cost.% to $O(\log n \log\log n)$. 

\subsection{Heapify and Heapsort}
We prove a bound on the translation cost of heapify. The following proposition generalizes the analysis of sequential scan. 

\begin{definition}
\concept{Extremal translation paths} of \vn consecutive addresses are the paths to the first and last address in the range.
\concept{Non-extremal nodes} are the nodes on translation paths to addresses in the range that are not on the extremal paths.
\end{definition}

\begin{proposition}\label{pro:scan-cost}
A sequence of memory accesses that gains access to each page in a range causes at least one TC miss for each non-extremal node of the range.
If the sequence of pages in the range \vn is accessed in the decreasing order, this bound is matched by storing the extremal paths and dedicating $\log_K(\vn/\vP)$ cells in the TC for the required translations.
\end{proposition}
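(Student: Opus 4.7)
The proposition has two directions: a lower bound asserting at least one TC miss per non-extremal node, and a matching upper bound achieved by the decreasing-order strategy. I would address them in turn.

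\textbf{Lower bound.} For each non-extremal node $v$, I would argue that at least one insertion of $v$ occurs during the sequence. By definition, $v$ lies on the translation path of some page $p$ in the range, and $p$ is accessed at least once. When $p$ is translated, $v$ must reside in the TC. Since the TC is empty at the outset and $v$ is not present initially, $v$ must be inserted at some point---one insertion, one miss. Distinct non-extremal nodes yield distinct insertions, so the total number of TC misses is at least the number of non-extremal nodes.

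\textbf{Upper bound.} Let $h = \ceil{\log_K(\vn/\vP)}$. All translation paths of pages in the range share a common prefix of length $\vd - h$ from the root down to the lowest common ancestor (LCA) of the first and last data pages; below the LCA, the paths live inside a subtree of height $h$. The two extremal paths together cover this shared prefix and two descending sub-paths of length at most $h$, using at most $\vd + h$ cells. Pinning these, and dedicating $h$ further cells to the varying lower portion of the current translation path, yields a total cache usage of roughly $\vd + 2h$, which fits into the TC under the model's assumptions. Now scan the pages in decreasing order. The common prefix is pinned, so no misses occur on the top $\vd - h$ nodes of any translation path. On the lower $h$ levels, by Lemma~\ref{lem: large D}, the node at level $i$ below the LCA changes only every $K^i$ consecutive addresses. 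Each such change triggers exactly one insertion---into the dedicated cell for that level---while the evicted node is never revisited because the sweep is monotone over a contiguous range. Summing across levels produces exactly as many insertions as non-extremal nodes in the range, matching the lower bound.

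\textbf{Main obstacle.} The delicate part is the upper bound bookkeeping: making rigorous the claim that each non-extremal node is loaded exactly once under the decreasing-order sweep, and verifying that the $h$ dedicated cells truly suffice to host the changing sub-path without ever overflowing. Edge effects at the boundary of the range---where a node at some level may have only part of its descendant range inside $[0,\vn)$, or where an extremal path happens to share more with an inner path than expected---require a careful matching between the combinatorial count of non-extremal nodes and the level-by-level insertion count produced by the sweep, so as to avoid off-by-one discrepancies.
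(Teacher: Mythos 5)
The paper states Proposition~\ref{pro:scan-cost} without proof, so there is no official argument to compare against; your two-part strategy --- a mandatory-insertion counting argument for the lower bound, and pinned extremal paths plus one dedicated cell per low level with a monotone sweep for the upper bound --- is clearly the intended one, and your lower bound is correct as written.

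One step in your upper bound is, however, false as stated: the claim that all translation paths of pages in the range share a common prefix of length $\vd-h$ ending at the LCA of the first and last page, so that below the LCA everything lives in a subtree of height $h=\ceil{\log_K(\vn/\vP)}$. The LCA can be arbitrarily high --- even the root --- whenever the range straddles a high-level block boundary (e.g., two adjacent pages whose indices differ in the leading digit), so ``the common prefix is pinned'' does not by itself explain why no misses occur above level $h$, and it also makes your $\vd+h$ bound on the size of the union of the two extremal paths incorrect (it can be as large as $2\vd+1$; this is harmless, since the proposition simply charges separately for ``storing the extremal paths''). The correct justification for the key step is slightly different: at any level $i$ with $K^i\vP\ge\vn$, a range of \vn consecutive addresses meets at most two aligned blocks of size $K^i\vP$, hence at most two nodes of that level lie on translation paths of the range, and both are necessarily extremal (one contains the first address of the range, the other the last). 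So every node at level $\ge h$ on any current translation path is pinned, and only the bottom $h$ levels ever need the dedicated cells --- one cell per level, whose occupant changes monotonically during the decreasing sweep and is therefore inserted once and never revisited after eviction. With that substitution your level-by-level bookkeeping goes through, and the ``edge effects'' you worry about are exactly absorbed by this observation: at each level the boundary nodes are the extremal ones, which are excluded from the count by definition.
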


\begin{proposition}\label{pro:count} Let $n$, $\ell$, and $x$ be nonnegative integers. 
The number of non-extremal nodes in the union of the translation paths of any $x$ out of \vn consecutive addresses is at most  $$x \ell + \frac{2\vn}{\vP K^\ell}.$$
Moreover, there is a set of $x = \ceil{n/(P K^\ell)}$ addresses such that the union of the paths has size at least $x(\ell + 1) + d - \ell$. 
\end{proposition}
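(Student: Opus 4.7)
\textbf{Proof plan for Proposition~\ref{pro:count}.}

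My plan is to analyze the union of translation paths layer by layer, using that at layer $j$ a single node of the translation tree covers $PK^j$ consecutive virtual addresses, so the $n$ consecutive addresses of the range span only a bounded number of nodes per layer.

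\emph{Upper bound.} I would bound the number of non-extremal nodes separately for layers $j < \ell$ and layers $j \ge \ell$. At any single layer $j$, the union of $x$ translation paths contains at most $x$ nodes (each path contributes exactly one node per layer), so summing this crude bound over $j = 0, 1, \ldots, \ell - 1$ yields at most $x\ell$ non-extremal nodes in those layers. For $j \ge \ell$, I use the geometric observation that the whole range of $n$ consecutive addresses intersects at most $n/(PK^j) + 2$ nodes at layer $j$ (a direct floor/ceiling estimate on $\lfloor (a+n-1)/(PK^j) \rfloor - \lfloor a/(PK^j) \rfloor + 1$); since two of these nodes are the first and last layer-$j$ node of the range and hence lie on the extremal paths, the number of non-extremal layer-$j$ nodes in the range (and hence also in the union of the $x$ chosen paths) is at most $n/(PK^j)$. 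Summing the geometric series $\sum_{j \ge \ell} n/(PK^j) = \frac{K}{K-1}\cdot\frac{n}{PK^\ell}$ and using $K \ge 2$ gives at most $2n/(PK^\ell)$. Adding the two contributions yields the claimed bound $x\ell + 2n/(PK^\ell)$.

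\emph{Lower bound.} Set $x = \lceil n/(PK^\ell)\rceil$ and pick one address from each of $x$ distinct layer-$\ell$ subtrees that intersect the range. Such a choice is possible because any $n$ consecutive addresses must touch at least $\lceil n/(PK^\ell)\rceil$ layer-$\ell$ subtrees (each subtree covers only $PK^\ell$ consecutive addresses, so fewer subtrees could not cover the whole range). Since these $x$ chosen addresses lie in pairwise distinct layer-$\ell$ subtrees, Lemma~\ref{lem: large D} applied to any two of them shows that their translation paths are pairwise disjoint on layers $0, 1, \ldots, \ell$; this contributes $x(\ell + 1)$ distinct nodes to the union. On the remaining $d - \ell$ layers $\ell + 1, \ldots, d$, each path has one node, so the union contains at least one node per layer there, contributing at least $d - \ell$ more nodes even if they are shared by all $x$ paths. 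Summing gives at least $x(\ell + 1) + d - \ell$ nodes.

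\emph{Main obstacle.} I do not expect any serious obstacle; the only mildly delicate steps are the floor/ceiling arithmetic bounding the number of layer-$j$ nodes touched by an unaligned range of $n$ consecutive addresses (giving the $+2$ slack) and the matching lower bound $\lceil n/(PK^\ell)\rceil$ on the number of layer-$\ell$ subtrees it intersects. Both reduce to standard estimates once the layered partition of the address space induced by the translation tree is made explicit.
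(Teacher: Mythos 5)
Your proposal is correct and follows essentially the same route as the paper: charge one node per path on the $\ell$ lowest layers for the $x\ell$ term, bound the non-extremal nodes on layers $\ge \ell$ by the geometric series $\sum_{j\ge\ell} n/(PK^j) \le 2n/(PK^\ell)$, and for the lower bound select one address per layer-$\ell$ subtree (the paper uses the explicit arithmetic progression $z + iPK^\ell$, which amounts to the same choice) to get $x(\ell+1)$ pairwise-disjoint nodes on the bottom $\ell+1$ layers plus $d-\ell$ nodes above. Your explicit floor/ceiling accounting of the two extremal nodes per layer just makes precise what the paper asserts directly.
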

\begin{proof}
The union of the translation paths to all $n$ addresses contains at most $\vn/\vP$ non-extremal nodes on the leaf level (= level 0) of the translation tree.
On level $i$, $i \ge 0$, from the bottom, it contains at most $\vn/(\vP K^i)$ non-extremal nodes.

We overestimate the size of the union of $x$ translation paths by counting one node each on levels $0$ to $\ell - 1$ for every translation path and all non-extremal nodes contained in all the $n$ translation paths on the levels above. 
Thus, the size of the union is bounded by $$x \ell + \sum_{\ell \le i \le d} \vn/(\vP K^i) < x \ell + \frac{K}{K-1} \frac{n}{P K^\ell}\leqslant x \ell + \frac{2\vn}{\vP K^\ell}.$$

A node on level $\ell$ lies on the translation path of $K^\ell P$ consecutive addresses.
Consider addresses $z + i P K^\ell$ for $i = 0,1,\ldots, \ceil{n/PK^\ell} - 1$, where $z$ is the smallest in our set of $n$ addresses.
The translation paths to these addresses are disjoint from level $\ell$ down to level zero and use at least one node on levels $\ell+1$ to $d$.
Thus, the size of the union is at least $x(\ell+1) + d - \ell$. 
\end{proof}

An array $A[1..n]$ storing elements from an ordered set is heap-ordered if $A[i] \le A[2i]$ and $A[i] \le A[2i+1]$ for all $i$ with $1 \le i \le \floor{n/2}$.
An array can be turned into a heap by calling operation $\sift(i)$ for $i = \floor{n/2}$ down to $1$.
$\sift(i)$ repeatedly interchanges $z = A[i]$ with the smaller of its two children until the heap property is restored.
We use the following translation replacement strategy.
Let $z =\min(\log n,\floor{(\vW - 2\vd-1)/\floor{\log_K(\vn/\vP)}}-1)$.
We store the extremal translation paths ($2\vd-1$ nodes), non-extremal parts of the translation paths for $z$ addresses $a_0$, \ldots, $a_{z-1}$, and one additional translation path $a_\infty$ ($\floor{\log_K(\vn/\vP)}$ nodes for each).
The additional translation path is only needed when $z \neq \log n$.
During the siftdown of $A[i]$, $a_0$ is equal to the address of $A[i]$, $a_1$ is the address of one of the children of $i$ (the one to which $A[i]$ is moved, if it is moved), $a_2$ is the address of one of the grandchildren of $i$ (the one to which $A[i]$ is moved, if is moved two levels down), and so on.
The additional translation path $a_\infty$ is used for all addresses that are more than $z$ levels below the level containing $i$. 

Let us upper bound the number of the TC misses.
Preparing the extremal paths causes up to $2d+1$ misses.
Next, consider the translation cost for $a_i$, $0 \le i \le z-1$.
$a_i$ assumes $\vn/2^i$ distinct values.
Assuming that siblings in the heap always lie in the same page\footnote{This assumption can be easily lifted by allowing an additional constant in the running time or in the TC size.},
the index (= the part of the address that is being translated) of each $a_i$ decreasses over time, and hence, Proposition \ref{pro:scan-cost} bounds the number of TC misses to the number of the non-extremal nodes in the range.
We use Proposition~\ref{pro:count} to count them.
For $i \in \{0,\ldots,p\}$, we use the Proposition with $x = \vn$ and $\ell = 0$ and obtain a bound of  $$\frac{2\vn}{\vP} = \bigO{\frac{\vn}{\vP}}$$ TC misses.
For $i$ with $p+(\ell-1) k<i\leqslant p+\ell k$, where $\ell \geqslant 1$ and $i\leqslant z - 1$, we use the proposition with $x = \vn/2^i$ and obtain a bound of at most
\begin{align*}
\frac{\vn}{2^i}\cdot\ell + \frac{2\vn}{\vP K^\ell} = \bigO{\frac{\vn}{2^i}\cdot\ell + \frac{2\vn}{2^i}} = \bigO{\frac{\vn}{2^i}(\ell + 2)} = \bigO{\vn \frac{i}{2^i}}
\end{align*}
TC misses.
There are $\vn/2^z$ siftdowns starting in layers $z$ and above; they use $a_\infty$.
For each such siftdown, we need to translate at most $\log n$ addresses, and each translation causes less than \vd misses.
The total is less than $\vn (\log \vn) \vd/2^a$.
Summation yields
\begin{align*}
2\vd+1 + (p+1) \bigO{\frac{\vn}{\vP}} +\!\! \sum_{p < i \le z-1}\!\!\!\! \bigO{\vn \frac{i}{2^i}} + \frac{ \vn d \log \vn}{2^z} =\bigO{\vd + \frac{\vn p}{\vP} + \frac{ \vn d \log \vn}{2^z}}.
\end{align*}
For any realistic values of the parameters, the third term is insignificant,
hence, the cost is \bigO{\vT(d+\frac{\vn p}{\vP})}.  

We next prove the corresponding lower bound under the additional assumption that $\vW < \frac12\vn/\vP$.
At least one address must be completely translated; hence the cost of \bigOm{\vT\vd}.
The addresses in $a_0 \ldots a_{p-1}$ assume at least one address per page in the subarray $[n/2 .. n]$ because $a_i$ can never jump by more than $2^{i+1}$.
First, the addresses are swept by $a_0$, then by $a_1$, and so on, and no other accesses to the subarray occur in the meantime. 
Hence, if the LRU strategy is in use and $\vW < \frac12\vn/\vP$, there are at least $p \vn/(2\vP)$ TC misses to the lowest level of the translation tree.
This gives the \bigOm{\frac{\vn p}{\vP}} part of the lower bound.
Hence, the total cost is \bigOm{\vT(d+\frac{\vn p}{\vP})}.

In the sorting phase of heapsort, we repeatedly remove the element stored in the root, move the element in the rightmost leaf to the root, and then let this element sift-down to restore the heap-property. The sift-down starts in the root and after accessing address $i$ of the heap moves to address $2i$ or $2i+1$. For the analysis, we make the additional assumption $W = M$, i.e., the data cache and the TC cache have the same size. We store the top $\ell$ layers of the heap in the data cache and the translation paths to the vertices to these layers in the TC cache, where 
$2^{\ell + 1} < M$, say $\ell = \log (M/4) = \log(W/4)$. Each of the remaining $\log n - \ell$ sift-down steps may cause $d$ cache misses. The total number of cache faults is therefore bounded by 
\[        n d (\log n - \ell) \le n \log_K(2n/P) \log (4n/W). \]
We leave the lower bound as an open problem.

\section{Cache-Oblivious Algorithms}\label{cache-oblivious algs}

Algorithms for the EM model are allowed to use the parameters of the memory hierarchy in the program code.
For any two adjacent levels of the hierarchy, there are two parameters,
the size $M$ of the faster memory and the size $B$ of the blocks in which data is transferred between the faster and the slower memory.
Cache-oblivious algorithms are formulated without reference to these parameters, i.e., they are formulated as RAM-algorithms.
Only the analysis makes use of the parameters.
A transfer of a block of memory is called an IO-operation.
For a cache-oblivious algorithm, let $C(M,B,\vn)$ be the number of IO-operations on an input of size $\vn$, where $M$ is the size of the faster memory (also called cache memory) and $B$ is the block size.
Of course, $B \leqslant M$. 

Good cache-oblivious algorithms exhibit good locality of
reference at all scales, and therefore, one may hope that they also show good
behavior in the VAT model. The following theorem gives an upper bound of
VAT-complexity in terms of the EM-complexity of an algorithm.

%For this class of algorithms, we have the following upper bound in VAT.

\begin{theorem}\label{thm: cache-to-VAT}
Consider a cache-oblivious algorithm with IO-complexity $C(M,B, \vn)$, where $M$ is the size of the cache, $B$ is the size of a block, and \vn is the input size.
Let $a\assign\lfloor\vW/d\rfloor$, where $W$ is the size of the translation cache, and let $P = 2^p$ be the size of a page.
Then, the number of cache faults in the VAT-model is at most $$\sum_{i=0}^d C(aK^i\vP,K^i\vP,\vn).$$
\end{theorem}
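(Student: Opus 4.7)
The plan is to exhibit an explicit replacement strategy $S$ for the translation cache whose total number of faults is at most the claimed sum, and then invoke the optimality of $\beladys$ (Theorem~\ref{the:equivalences}) to transfer the bound to the VAT cost. The key idea is to decouple the analysis layer-by-layer. I would partition the $W$ cells of the TC into $d+1$ regions $R_0, R_1, \ldots, R_d$, where $R_i$ is dedicated to caching nodes of layer $i$ of the translation tree, and manage $R_i$ by running plain LRU on the subsequence of accesses restricted to layer-$i$ nodes, each region being given capacity $a = \lfloor W/d\rfloor$.

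The central observation is that any two virtual addresses whose difference is at most $K^i \vP$ share their layer-$i$ ancestor, and in fact all ancestors at layer $i$ or higher; this is exactly the contrapositive of Lemma~\ref{lem: large D}. Hence, the trace of distinct layer-$i$ nodes visited during execution is in one-to-one correspondence with the trace of super-blocks of size $K^i \vP$ touched by the algorithm, viewed as a sequence of block accesses in an EM machine with block size $B = K^i \vP$. It follows that the number of misses in $R_i$ under LRU equals the number of block transfers produced by running the algorithm in an EM machine with cache size $a K^i \vP$ and block size $K^i \vP$; by the cache-obliviousness of the algorithm this is at most $C(a K^i \vP, K^i \vP, \vn)$. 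Summing over $i = 0, 1, \ldots, d$ yields the stated bound for $S$, and the optimality of $\beladys$ then gives the theorem.

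The main obstacle I foresee is the cell budget: the na\"{\i}ve partition uses $a(d+1)$ cells, slightly overshooting $W$. I would absorb this by exploiting that the root is permanently cached after its first insertion (Corollary~\ref{col:root-forever}), so $R_d$ costs only a single cell, and by taking the ISP viewpoint, so that the shared upper portions of the at most $a$ currently cached translation paths are counted only once rather than once per path; Lemma~\ref{lem:eager-to-lazy} then lets me rewrite the resulting eager strategy in lazy form without increasing the fault count. A secondary worry is that the $C(M,B,\vn)$ bound for cache-oblivious algorithms is typically formulated against LRU or $\beladys$; since our per-layer strategy \emph{is} LRU on the correct access subsequence, this matches directly, with only a possible constant-factor loss absorbed via Lemma~\ref{lem:lru<2min} if one ultimately wants to phrase the conclusion in terms of \lru$(W)$ rather than $\beladys(W)$.
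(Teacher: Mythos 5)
Your proposal is correct and follows essentially the same route as the paper: the paper's proof likewise partitions the TC into level-dedicated regions of size $a$ and observes that the layer-$i$ translation faults coincide with the block transfers of the algorithm run on an EM machine with block size $K^i\vP$ and memory $aK^i\vP$. Your extra care about the $a(d+1)$-versus-$\vW$ cell budget and about which replacement policy realizes $C(\cdot)$ addresses details the paper glosses over, but does not change the argument.
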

\begin{proof}
We divide the cache into $\vd$ parts of size $a$ and reserve one part for each level of the translation tree.

Consider any level $i$, where the leaves of the translation tree (= data pages) are on level 0.
Each node on level $i$ stands for $K^i \vP$ addresses, and we can store $a$ nodes.
Thus, the number of faults on level $i$ in the translation process is the same as the number of faults of the algorithm on blocks of size $K^i \vP$ and a memory of $a$ blocks (i.e., size $ak^i\vP$).
Therefore, the number of cache faults is at most $$\sum_{i=0}^\vd C(aK^i\vP, K^i\vP, \vn).$$
\end{proof}

Theorem~\ref{thm: cache-to-VAT} allows us to rederive some of the results from Section~\ref{analysis of algorithms}.
For example, a linear scan of an array of length $n$ has IO-complexity at most $2 + \floor{n/B}$.
Thus, the number of cache faults in the VAT-model is at most 
\[ \sum_{i=0}^d \left(2 + \frac{n}{K^i P}\right) < 2d + \frac{K}{K-1}\frac{n}{P}.\]

It also allows us to derive new results.
Quicksort has IO-complexity $O((n/B) \log (n/B))$, and hence, the number of cache faults in the VAT-model is at most 
\[ \sum_{i=0}^d \bigO{\frac{n}{K^i P} \log \frac{n}{K^i P}} = \bigO{\frac{n}{P} \log\frac{n}{P}}.\]

Binary search in the van Emde Boas layout has IO-complexity $\log_B n$, and hence, the number of cache faults in the VAT-model is at most
\begin{align*}
\sum_{i=0}^\vd \frac{\log \vn}{\log (K^i \vP)} &\le \frac{\log\vn}{\log\vP}+ \log n \int\limits_0^\vd\!\! \frac{1}{\log\vP + x\log K} dx\\
  &= \log_\vP \vn + \frac{\log n}{\log K} \ln(\log P + x \log K) \big|_0^d \\ 
  &= \log_\vP \vn + (\log_K n)(\ln(\log K^d P) - \ln (\log P))\\
  &= \log_\vP\vn+ (\log_K\vn) \ln \log_\vP(\vP K^\vd) \le  \log_\vP\vn+ \log_K\vn \ln \log_\vP 4n,
\end{align*}
where the last inequality follows from our assumption that $K^d P$ is at most twice the memory footprint of an algorithm and that the memory footprint of a binary tree with $n$ leaves is bounded by $2n$.

A matrix multiplication with a recursive layout of matrices has IO-complexity $n^3/(M^{1/2}B)$, and hence, the number of cache faults in the VAT-model is at most 
$$\sum_{i=0}^d \frac{n^3}{ (a K^i P)^{1/2} K^i P}   <   {\frac{K^{3/2}}{K^{3/2} - 1}} \frac{n^3}{a^{1/2} P^{3/2} }.$$
%TODO: is this optimal? intuition. 

%\noindent Funnel sort has IO-complexity $(n/B) \log_{M/B} (n/B)$ provided that
%$M \ge B^2$. For $M < B^2$, the IO-complexity is ...

Cache-oblivious algorithms that match the performance of the best EM-algorithm
for the problem are known for several fundamental algorithmic problems, e.g., 
sorting, FFT, matrix multiply, and searching~\cite{FrigoEtAl12}. Do all these
algorithms automatically have small VAT-complexity via Theorem~\ref{thm:
  cache-to-VAT}? Unfortunately, the answer is \emph{no}. 
Observe that the theorem refers to the cache misses in a machine with memory
size $aK^i P$ and block size $K^i P$, i.e., memory consists of $a$
blocks. However, many of the good cache-oblivious algorithms require a
tall-cache assumption  $M \geqslant B^2$; sometimes, the assumption $M \ge B^{1 + \epsilon}$ for some positive $\epsilon$ suffices. For such algorithms, the theorem above does
not give good bounds.

\newcommand{\barM}{\overline{M}}
\newcommand{\barB}{\overline{B}}

\newcommand{\tildeM}{\widetilde{M}}
\newcommand{\tildeB}{\widetilde{B}}

\newcommand{\N}{\mathbf{N}}

In joint work with Pat Nicholson~\cite{Cache-Oblivious-VAT}, we have recently shown that cache-oblivious algorithms requiring a tall-cache assumption also perform well in the VAT-model provided a somewhat more stringent tall cache assumption holds. More precisely, consider a cache-oblivious algorithm that incurs $C(\tildeM,\tildeB,n)$ cache faults, when run on a machine with cache size $\tildeM$ and block size $\tildeB$, provided that $\tildeM \ge g(\tildeB)$. Here $g: \N \mapsto \N$ is a function that captures the ``tallness'' requirement on the cache. We consider the execution of the algorithm on a VAT-machine with cache size $\barM$ and page size $P$ and show that the number of cache faults is bounded by $4d C(M/4,dB,n)$ provided that $M \ge 4g(dB)$. Here $M = \barM/a$, $B = P/a$ and $a \ge 1$ is the size (in addressable units) of the items handled by the algorithm. 

Funnel sort~\cite{FrigoEtAl12} is an optimal cache-oblivious sorting algorithm. On an EM-machine with cache size $\tildeM$ and block size $\tildeB$, it  sorts $n$ items with 
 \[        C(\tildeM,\tildeB,n) = O\left(\frac{n}{\tildeB} \ceil{\frac{\log n/\tildeM}{\log \tildeM/\tildeB}}\right)  \]
cache faults provided that $\tildeM \ge \tildeB^2$. As a consequence of our main theorem, we obtain:

\begin{theorem} Funnel sort sorts $n$ items, each of size $a \ge 1$ addressable units, on a VAT-machine with cache size $\barM$ and page size $P$, with at most 
\[  O\left(\frac{4n}{B} \ceil{\frac{\log 4n/M}{\log M/(4dB)}}\right)\]
cache faults, where $M = \barM/a$ and $B = P/a$. This assumes $(B \log_K(2n/P))^2 \le M/4$. \end{theorem}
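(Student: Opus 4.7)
The plan is to apply the Nicholson--Jurkiewicz--Mehlhorn transfer theorem cited just before the statement: if a cache-oblivious algorithm is correct on an EM-machine with cache $\widetilde{M} \ge g(\widetilde{B})$ and incurs $C(\widetilde{M},\widetilde{B},n)$ cache faults there, then on a VAT-machine with (normalized) cache $M$ and (normalized) block $B$ it incurs at most $4d\,C(M/4,dB,n)$ cache faults, provided $M \ge 4 g(dB)$. So the proof is just a substitution: first identify $g$, verify the VAT-side tall cache precondition, then plug funnel sort's EM bound into $C(M/4,dB,n)$ and simplify.

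First, I would read off $g(\widetilde{B}) = \widetilde{B}^2$ from the stated tall-cache assumption of funnel sort. Next, interpret the hypothesis $M \ge 4g(dB) = 4(dB)^2$: since $d \le \log_K(2n/P)$ under the asymptotic order relations in Section~\ref{analysis of algorithms}, the condition $(B \log_K(2n/P))^2 \le M/4$ in the theorem statement implies $(dB)^2 \le M/4$, so the transfer theorem applies.

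Then I would invoke the transfer theorem with $\widetilde{M} = M/4$ and $\widetilde{B} = dB$, obtaining a VAT cache-fault bound of
\[ 4d \cdot O\!\left(\frac{n}{dB}\Bigl\lceil\frac{\log(n/(M/4))}{\log((M/4)/(dB))}\Bigr\rceil\right). \]
The $d$ outside cancels with the $dB$ in the denominator, leaving
\[ O\!\left(\frac{4n}{B}\Bigl\lceil\frac{\log(4n/M)}{\log(M/(4dB))}\Bigr\rceil\right), \]
which is exactly the claimed bound.

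There is essentially no obstacle beyond bookkeeping: the only subtle point is the identification of $d$ with $\log_K(2n/P)$ in the tall-cache precondition, which is justified by the VAT assumption that $K^d P$ is within a factor of two of the memory footprint (here, $\Theta(n)$ since funnel sort sorts $n$ items of constant size in the $a = 1$ normalization, or generally uses $\Theta(an)$ addressable units so that $K^d P = \Theta(an)$ and $d = \Theta(\log_K(n/B))$). Once this identification is made, both the precondition and the simplification are direct algebra, and no further step is required.
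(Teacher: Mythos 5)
Your proposal is correct and is precisely the argument the paper intends: the theorem is stated as an immediate consequence of the transfer theorem $4d\,C(M/4,dB,n)$ with the tall-cache requirement $M\ge 4g(dB)$, and your substitution $g(\widetilde{B})=\widetilde{B}^2$, $\widetilde{M}=M/4$, $\widetilde{B}=dB$ followed by the cancellation $4d\cdot\frac{n}{dB}=\frac{4n}{B}$ reproduces the stated bound exactly. The identification $d\le\log_K(2n/P)$ used to verify the precondition is the same (mildly informal) step the paper itself relies on in phrasing the hypothesis $(B\log_K(2n/P))^2\le M/4$.
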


Since $M/(4dB) \ge (M/B)^{1/2}$ for realistic values of $M$, $B$, $K$, and $n$, this implies funnel-sort is essentially optimal also in the VAT-model.

\section{Discussion}\label{discussion}
In this section, we discuss additional topics that extend the scope of our research.
In particular, we address the comments that we received from the ALENEX13 program committee and other researchers.

\subsection{Double Address Translation on Virtual Machines}
Nowadays, more and more computation is performed on virtual machines in the clouds.
In this environment, address translation must be performed twice, first to the virtual machine addressing space and then to the host.
The cost of address translation to host can be as high as \bigO{\vT\log(\mathtt{size\ of\ virtual\ machine})}.
Moreover, big enough virtual machines may require translation for memory tables in the virtual machine, not just for the data.
This is independent of the problem input size and significant in the case of random access, but still negligible in the case of sequential access.
To test the impact of the double address translation, we timed permutation and introsort on a virtual machine; results are provided in Figure \ref{vrt-big-divnlogn}.
\begin{figure}[h]
  	\begin{center}
	\begin{tabular}{cc}
	\adjustbox{valign=m}{\begin{sideways}\mbox{running time/\vn}\end{sideways}}&
	\adjustbox{valign=m}{\includegraphics[width=0.9\textwidth]{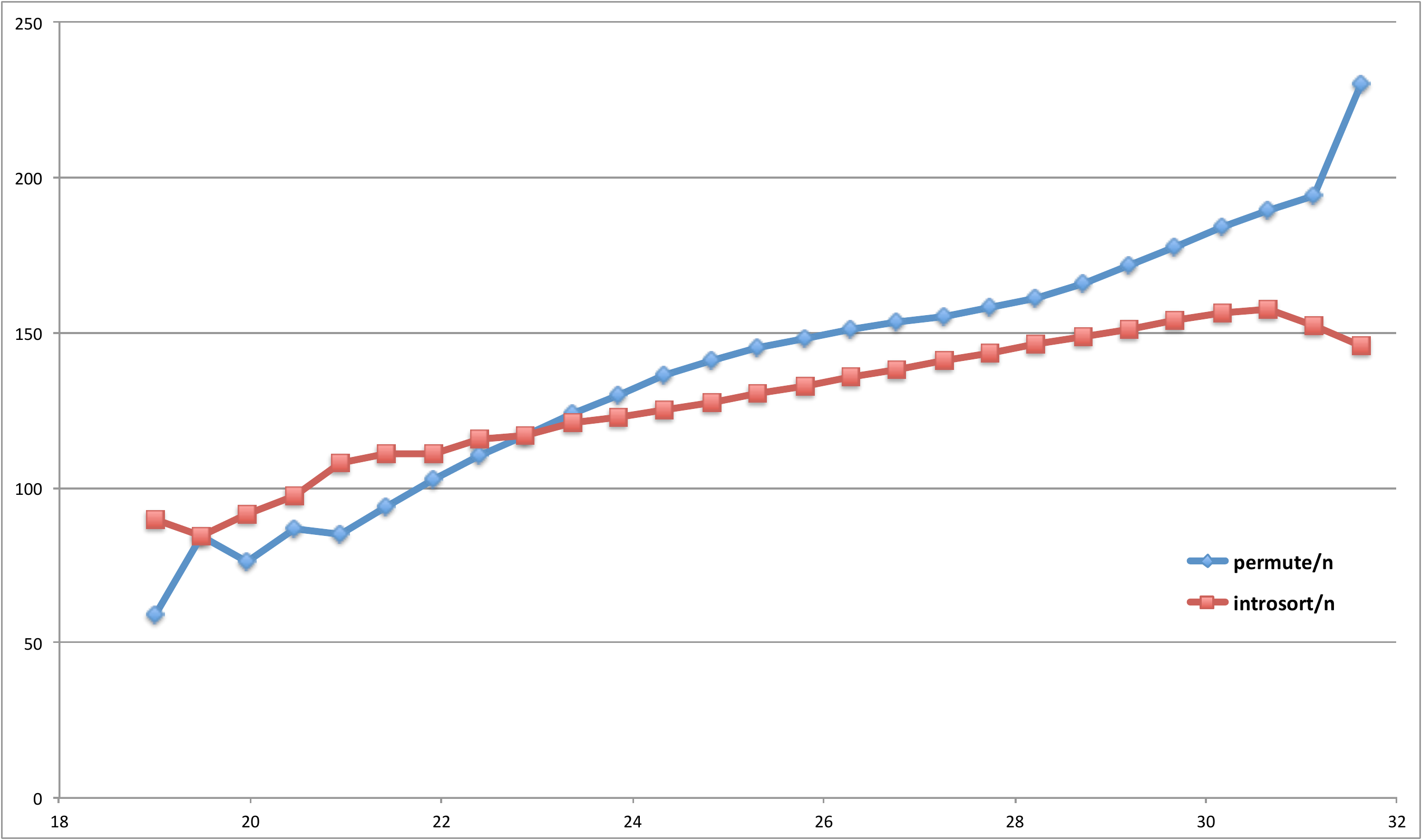}}\\
	&$\log(\text{input size})$
	\end{tabular}
  	\end{center}
\caption{Execution times divided by $\vn$ (\emph{not} the normalized operation time) on a virtual machine with the following specification:\newline
cpu: Intel(R) Xeon(R) CPU X5660 @ 2.67GHz\newline
operating system: Windows 7 Enterprise\newline
compiler: Microsoft Visual Studio 2010\newline
%compilation mode: Release x64
}\label{vrt-big-divnlogn}
\end{figure}

Please note that STL introsort takes actually less time than the permutation generator, even for very small data.
This is very surprising at first but means that a high VAT cost is especially harmful for programs launched on virtual machines.
Since many cloud systems are meant primarily for computing, the discussed phenomenon should be of primal concern for such environments.

\subsection{The Model is Too Complicated}

While we received comments that the model is too simple, we also received ones saying that the model is too complicated.
This impression is probably due to the fact that some of our proofs are somewhat technical.
Some arguments simplify if asymptotic notation is used earlier, or if the VAT cost is upper bounded by the RAM cost ahead of time (for sequential access patterns to the memory), or the other way around for the randomized access.
However, as this is the first work addressing the subject, we find it appropriate to be more detailed than absolutely necessary.
With time, more and more simplifications will appear.
Let us briefly discuss a few candidates.

\subsubsection{Value of $K$}
There is evidence that for many algorithms, the exact value of $K$ does not matter, and hence, $K = 2$ may be used.
In some cases, like repeated binary search, the exact value of $K$ seems to have only a little impact both in theory and practice.
In other cases, like permutation, it seems to be the cause of bumps on the chart in Figure \ref{data of first experiment}, but the impact is moderate.
A notable exception is matrix transpose and matrix multiplication, where the value of $K$ is blatantly visible.
The classic matrix transpose algorithm uses \bigO{\vn} operations, where \vn is the input size.
%This is frequently expressed as \bigO{r*c} where $r$ and $c$ are dimensions of the matrix.
However, if the matrix is stored and read row by row, the output matrix must be written one element per row.
For a square matrix, this means a jump of $\sqrt{\vn}$ cells between writes, which means $\sqrt{\vn}$ translations of cost \bigTh{\vT\vd} to produce the first column.
As there are $\sqrt{\vn}$ translations before another element is written to the same row, no translation path can be reused if we consider the LRU algorithm.
Therefore, the total VAT cost is \bigTh{\vT\vn\vd}, which is \bigTh{\vT\vn\log\vn}.
Figure \ref{fig:matrix-transpose} shows that even though the asymptotic growth is intact, the translation cost grows in jumps rather than in a smooth logarithmic fashion.
The distance between the jumps appears to be directly related to the value of $K$, namely, the jump occurs when the matrix dimension is $K$ times greater than during the previous jump.
%This is due to the fact that arity of the real translation tree is more than 2, unlike we assume.
Note that the EM cost of this algorithm is \bigTh{\vn} for $\sqrt{\vn}\cdot\vB>\vM$, and \bigTh{\vn/\vB} for $\sqrt{\vn}\cdot\vB<\vM$.
In fact, the first cost jump is due to this barrier itself.
\begin{figure}[!h]
  	\begin{center}
	\begin{tabular}{cc}
	\adjustbox{valign=m}{\begin{sideways}\mbox{running time/RAM complexity}\end{sideways}}&
	\adjustbox{valign=m}{\includegraphics[width=0.9\textwidth]{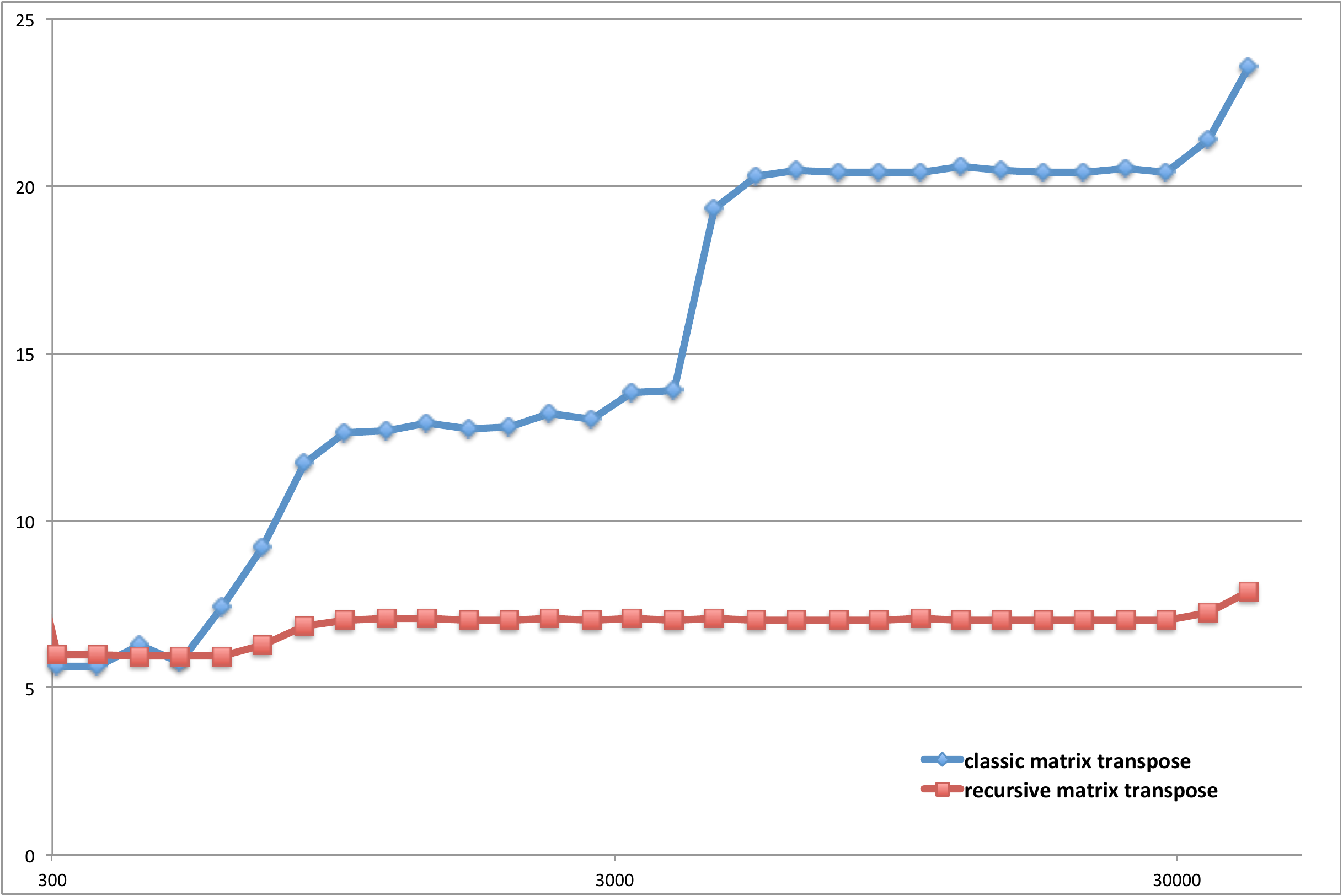}}\\
	&matrix dimension in logarithmic scale
	\end{tabular}
  	\end{center}
\caption{Running time of the matrix transpose in row by row layout and in the recursive layout}\label{fig:matrix-transpose}
\end{figure}

\subsubsection{CAT, or Sequence of Consecutive Address Translations}
In our analysis, for many algorithms precisely calculated VAT complexity was much smaller than the RAM complexity.
We believe that our approach can bring valuable insight for future research, but some of our results can be obtained in a simpler way.
The memory access patterns in the algorithms in question share some common characteristics.
There are not too few elements, they are not overspread in the memory, and the accesses are more or less performed in a sequence.
We formalize these properties in the following definition.
\begin{definition}
We call a sequence of $\ell$ memory accesses a \concept{CAT} (sequence of consecutive address translations) if it fulfills the following conditions:
\begin{itemize}
\item $\ell=\bigOm{\vT\vd}$. 
\item On average, \bigOm{\vT} elements are accessed per page in the access range.
%\item All accesses are from a range of \bigO{k/\vT} pages\footnote{In other words, in average \bigOm{\vT} elements are accessed per page in the range}.
\item The pages are used in the increasing or decreasing order. (But accesses in the page need not follow this rule).
\item Memory accesses from the sequence are separated by at most a constant number of other operations.
\end{itemize}
\end{definition}

%\begin{proposition}\label{pro:scan-is-cat}
%For a large enough \vn, accessing sequentially \vn cells is a CAT.
%\end{proposition}

\begin{lemma}\label{lem:cats}
In case of a CAT, the cost of the address translation is dominated by the cost of RAM operations and therefore negligible.
Hence, for CATs, it is sufficient to account for them only in the RAM part of the analysis.
\end{lemma}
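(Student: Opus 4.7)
My plan is to reduce the claim to the sequential-scan analysis already carried out in the paper. The only real content is bookkeeping: bound the number of distinct pages touched by the CAT and then apply a sequential-scan style counting argument level by level in the translation tree.

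First, I would fix the cache replacement policy. By Theorem~\ref{the:equivalences} we may assume \islru up to constant factors (and we have $W\ge d$ by the model's asymptotic assumptions, so at least one translation path fits). Under \islru, the current translation path is always retained while the CAT is processed, so accesses falling inside a single page pay no translation misses beyond the first time that page is entered; this is the step that exploits the clause ``accesses in the page need not follow this rule.''

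Next, let $\ell$ be the length of the CAT and let $m$ denote the number of distinct pages the CAT touches. The second bullet of the definition gives $\ell/m = \Omega(\tau)$, hence $m = O(\ell/\tau)$. Since pages are entered in monotone order, the situation is exactly that of a sequential scan over these $m$ pages: each level of the translation tree changes at a geometrically smaller rate than the level below it. Arguing as in the analysis of Sequential Access, the total number of translation-cache misses is at most
\begin{align*}
d \;+\; \sum_{i=0}^{d}\Big\lceil \tfrac{m}{K^{i}}\Big\rceil \;\le\; 2d \;+\; \tfrac{K}{K-1}\, m.
\end{align*}
Multiplying by $\tau$, the translation cost is
\begin{align*}
O(\tau d + \tau m) \;=\; O(\ell) + O(\ell) \;=\; O(\ell),
\end{align*}
where we used the first bullet ($\ell = \Omega(\tau d)$) for the first term and $m=O(\ell/\tau)$ for the second.

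Finally, the fourth bullet says that consecutive memory accesses in the CAT are separated by $O(1)$ other operations, so the RAM cost incurred while processing the CAT is $\Theta(\ell)$. Combining this with the bound above yields that the translation cost is $O(\text{RAM cost})$, which is the lemma. The only potential pitfall is the within-page accesses, but because pages are visited in monotone order and the current translation path survives in an \islru cache of size $\ge d$, these accesses contribute nothing beyond the single miss already charged when the page is first entered; once this is observed, the argument is just arithmetic.
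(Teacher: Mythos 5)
Your argument is correct and follows essentially the same route as the paper's proof: reduce the CAT to a monotone traversal of $O(\ell/\tau)$ pages, bound the translation-cache misses by $O(d+\ell/\tau)$, multiply by $\tau$, and absorb both terms into $O(\ell)$ using the first and second bullets, then compare with the $\Theta(\ell)$ RAM cost guaranteed by the fourth bullet. The only packaging difference is in the middle step: you count node changes level by level with an explicit geometric sum (as in the sequential-access analysis), whereas the paper phrases the same count as an amortized $K$-ary counter update with a potential of at most $d$; the two give the same bound.

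Two small points worth tightening. First, your appeal to Theorem~\ref{the:equivalences} to ``assume ISLRU'' goes the wrong way for an upper bound: that theorem gives $\islru(W)\le\lru(W)$, so bounding ISLRU does not bound LRU. Either argue directly with LRU (as the paper does --- the fourth bullet is what guarantees that only $O(1)$ intervening operations occur, so LRU still holds the previous translation path when the next access arrives) or invoke the tighter inequality $\lru(W+d)\le\islru(W)$. Second, since a CAT may skip pages, the number of distinct level-$i$ ancestors is bounded by $\ceil{R/K^i}+1$ where $R$ is the number of pages in the access range, not by $\ceil{m/K^i}$ with $m$ the number of distinct pages visited; since $R=O(\ell/\tau)$ by the second bullet, the conclusion is unaffected, but the sum should run over $R$ rather than $m$.
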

\begin{proof}
We assume the LRU replacement strategy.
First, let us assess the cost of translating addresses for all the \bigO{\ell/\vT} pages in the increasing order.
The first translation causes \vd TC misses.
Since we allow only a constant number of operations between accesses from the considered sequence, the LRU replacement strategy holds translation path of the last translation when the next one starts.
Hence, the addresses to be translated change like in a classic $K$-nary counter.
The amortized cost of an update of a $K$-nary counter is \bigO{1}.
Since on average \bigOm{\vT} elements are accessed per page, the access range is at most of length \bigO{\ell/\vT}, and so the cost of updates is \bigO{\ell}.
However, we do not start counting from zero, and the potential function in the $K$-nary counter analysis can reach up to $\log{(\text{the highest number seen})}$, which in our case can reach \vd.
Hence, we need to add the cost of another \vd TC misses to our estimation.
The cost of all translations is therefore equal to $\bigO{\vT\vd+\ell}=\bigO{\ell}$.

In the definition of a CAT, we do not assume that every page is used exactly once.
However, neither skipping values in the counter, nor reusing them causes extra TC misses.

Since the RAM cost is exactly \bigTh{\ell}, it dominates the translation cost.
\end{proof}

\subsubsection{RAT, or Sequence of Random Address Translations}
Similarly, algorithms with a high VAT cost share common properties.

\begin{definition}
A sequence of memory accesses is called a \concept{RAT} (sequence of random address translations) if:
\begin{itemize}
\item There is a memory partitioning such that each part consists of all memory cells with some common virtual address prefix, and parts are of size at least $\vP\vm^\theta \text{ for }\theta\in(0,1)$.
\item For at least a constant fraction of the accesses with at least a constant probability, each access is to a part that was not accessed since \vW TC misses.
\end{itemize}
\end{definition}

\begin{lemma}\label{lem:rats}
The cost of a RAT of length $\ell$ is \bigTh{\vT \ell\vd}. It is the same as the cost of the address translations.% since it dominates the cost of RAM operations.
\end{lemma}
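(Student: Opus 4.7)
The upper bound is immediate: every translation traverses only $d$ tree nodes, so any access causes at most $d$ TC misses, giving a total cost of at most $\tau \ell d$. The real work is matching this from below.

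For the lower bound, my plan is to reason under the LRU replacement strategy and then transfer the bound to the optimal strategy via Theorem~\ref{the:equivalences}. First I would translate the size hypothesis on parts into a geometric statement about the translation tree: because each part has size at least $\vP \vm^\theta$ and consists of all addresses sharing a fixed prefix, the pages of a part form a subtree of the translation tree whose root sits at height $h \ge \theta \log_K \vm = \Theta(d)$ above the leaves. In particular, any translation path hitting this part contains at least $h$ nodes \emph{strictly below} the part's root, and these "specific" nodes are touched exclusively by accesses to addresses inside the part.

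The key step is then the following observation about LRU on a cache of size $\vW$: if a part $X$ was not accessed during the most recent $\vW$ TC misses, then at the time of the next access to $X$, none of the $X$-specific nodes is still in the cache. The reason is that each of those $\vW$ misses inserts a brand-new node at the most-recent end of the LRU order, shifting every older node one position toward the least-recent end; accessing a node already in cache can only move nodes down toward the LRU end, never up. Since no $X$-specific node is either inserted or accessed during that window, each one is shifted down at least $\vW$ positions and is therefore evicted. Consequently, every such "cold" access to $X$ forces at least $h = \Omega(d)$ TC misses just to reload the $X$-specific suffix of the translation path. By the RAT hypothesis, a constant fraction of the $\ell$ accesses are cold with constant probability, so the expected number of TC misses is $\Omega(\ell d)$, and the cost is $\Omega(\tau \ell d)$.

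Finally, to lift the bound from LRU to the optimal strategy, apply Theorem~\ref{the:equivalences}, which gives $\beladys(\vW) \ge \lru(2\vW)/2$; the cold-access property is an access-sequence condition, and doubling the cache changes the required cold window by only a constant factor, so the $\Omega(\tau \ell d)$ lower bound survives into $\beladys(\vW)$ with possibly adjusted constants. Combined with the trivial upper bound, this yields the claimed $\Theta(\tau \ell d)$. The main obstacle I expect is the bookkeeping around this last conversion: making precise that "not accessed during the last $\vW$ TC misses" for LRU on cache $\vW$ degrades to "not accessed during the last $c\vW$ TC misses" for LRU on cache $c\vW$, and checking that the constant probability assumed in the RAT definition has enough slack to absorb this constant-factor blow-up.
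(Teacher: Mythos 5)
Your argument is essentially the paper's: assume LRU, observe that a part of size at least $\vP\vm^\theta$ owns a subtree of the translation tree with $\Theta(\vd)$ nodes private to it, and show that a cold access (one to a part not touched during the last $\vW$ TC misses) finds that subtree's root evicted and hence incurs $\Theta(\vd)$ misses by Lemma~\ref{lem:lru-many-misses}; your LRU-eviction bookkeeping merely spells out what the paper asserts in one clause, and the final transfer to the optimal strategy is unnecessary overhead, since the paper's convention (justified by Lemma~\ref{lem:lru<2min}) is simply to assume LRU in lower-bound arguments. The only piece you omit is the lemma's second sentence: the RAM cost of the sequence is $\Theta(\ell)$, which is dominated by the translation cost $\Theta(\vT\ell\vd)$ because $\vT\vd\ge 1$ by the first asymptotic order assumption.
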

\begin{proof}
We assume the LRU replacement strategy.
Since parts are of size at least $\vP\vm^\theta \text{ for }\theta\in(0,1)$, a translation of an address from each part uses \bigTh{\vd} translation nodes unique to its translation subtree.
Therefore, an access to a part that was not accessed since \vW TC misses, misses the root of the subtree, and by Lemma \ref{lem:lru-many-misses}, the access causes \bigTh{\vd} misses.
As this happens for at least a constant fraction of the accesses with at least a constant probability, the total cost is \bigTh{\vT \ell\vd}.
The RAM cost is only \bigTh{\ell}, which is less than the VAT cost by the order assumption \ref{ass:tdp}.
\end{proof}

\subsection{Larger Page Sizes}
The straight-forward method to determine how the Virtual Address Translation
affects the running time of programs would be to switch it off and compare the
results. 
Unfortunately, no modern operating system provides such an option. One can
approximate the elimination of address translation by increasing the page
size. If all the data fits into a single page, address translation is
essentially eliminated. If all the data fits into a small number of pages, the
number of translations and their cost is reduced. We performed experiments with
larger page sizes. 
However, while hardware architectures support pages sized in gigabytes, operating systems do not.
Quoting \cite{hennessy}:
\begin{quote}
%\hspace{-0.5cm}\textbf{Pitfall.}
``\emph{Relying on the operating systems to change the page size over time.}

The Alpha architects had an elaborate plan to grow the architecture over time by growing its page size, even building it into the size of its virtual address.
When it came time to grow page sizes with later Alphas, the operating system designers balked and the virtual memory system was revised to grow the address space while maintaining the 8 KB page.

Architects of other computers noticed very high TLB miss rates, and so added multiple, larger page sizes to the TLB.
The hope was that operating systems programmers would allocate an object to the largest page that made sense, thereby preserving TLB entries.
After a decade of trying, most operating systems use these “superpages” only for handpicked functions: mapping the display memory or other I/O devices, or using very large pages for the database code.''
\end{quote}

There are good reasons why operating systems designer are reluctant to offer
larger pages. 
The main concern is space.
Pages must be correctly aligned in memory, so bigger pages lead to a greater waste of memory and limited flexibility while paging to disk.
Another problem is that since most processes are small, using bigger pages would lengthen their initialization time.
Therefore, current operating systems kernels provide only basic, nontransparent support for bigger pages.
The \emph{hugetlbpage} feature of current Linux kernels allows one  to use pages
of size 2MiB on AMD64 machines. The following links describe the
hugetlbpage-feature. 
\begin{itemize}
\item\url{http://linuxgazette.net/155/krishnakumar.html}
\item\url{https://www.kernel.org/doc/Documentation/vm/hugetlbpage.txt}
\item\url{https://www.kernel.org/doc/Documentation/vm/hugepage-shm.c}
\item\url{http://man7.org/linux/man-pages/man2/shmget.2.html}
%\item\url{http://linux.die.net/man/2/shmget}
\end{itemize}
The feature attaches a final real address one level higher in the memory table,
i.e., the last layer of nodes is eliminated from the translation trees and
pages are now of size $2^{9 + 12}$. 
This slightly decreases cache usage, decreases the number of nodes needed in each single translation but one, and finally, increases the range of addresses covered by the related entry in the TLB by $512$.

\begin{figure}[!h]
  	\begin{center}
	\begin{tabular}{cc}
	\adjustbox{valign=m}{\begin{sideways}\mbox{2MiB pages time/4kiB pages time}\end{sideways}}&
	\adjustbox{valign=m}{\includegraphics[width=0.9\textwidth]{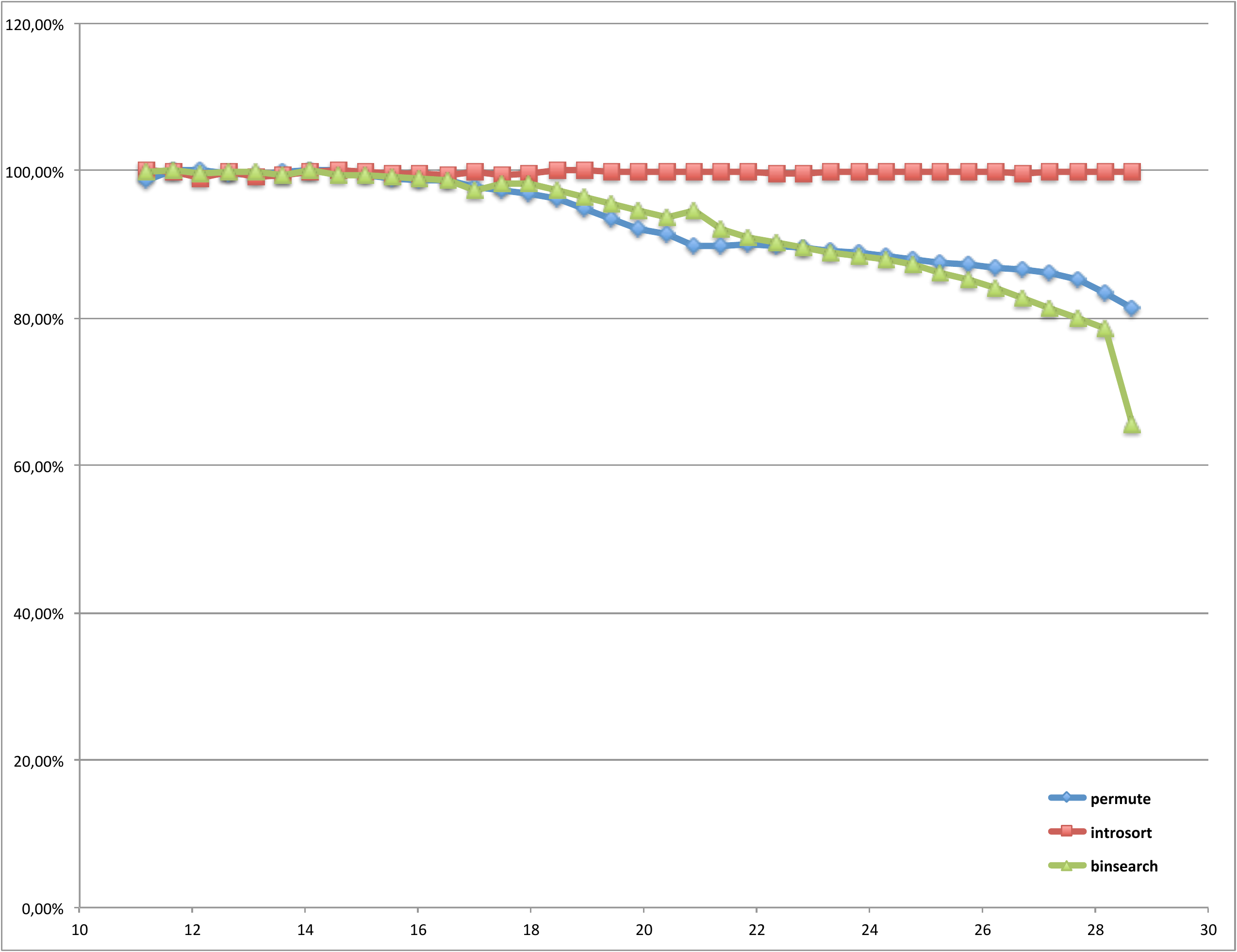}}\\
	&$\log(\text{input size})$
	\end{tabular}
  	\end{center}
\caption{
The abscissa shows the logarithm of the input size.
The ordinate shows the measured running time with use of the 2MiB pages as a percentage of the running time with 4kiB pages.}\label{fig:big-div-small}
\end{figure}

We rerun the \emph{permute}, \emph{introsort} and \emph{binsearch} on the same
machine, with and without use of the big pages. 
Figure \ref{fig:big-div-small} clearly shows that use of bigger pages can introduce a speedup.
In other words, the cost of virtual address translation can be partially
reduced by use of the bigger pages.

% While for some applications this can be the only solution, for others, there are algorithmic solutions independent of the system configuration capabilities.
% However, one needs to be careful while constructing a model that efficiently aids with work on such algorithms.
% Current computers are highly sophisticated machines with many features.
% Each single feature requires a lot of attention to be modeled properly.
% We will concentrate on the virtual address translation --- a feature that we believe leads to the greatest analysis discrepancies for sequential algorithms.
% While memory translation is not the only cost factor of a memory access, to our best knowledge, other processes are either asymptotically insignificant, or strongly correlated to the memory translation.
% However, there is no reason to believe that other significant cost factors will
% not appear in the future.

\subsection{The Translation Tree is Shallow}

It is true that the height of the translation tree on today's machines is bounded by $4$, and so the translation cost is bounded.
However, even though our experiments use only $3$ levels, the slowdown appears to be at least as significant in practice as the one caused by a factor of $\log\vn$ in the operational complexity.
Therefore, decreasing VAT complexity has a prominent practical significance.
Please note that while $64$ bit addresses are sufficient to address any memory that can ever be constructed according to known physics, there are other practical reasons to consider longer addresses.
Therefore, the current bound for the height of the translation tree is not absolute.

%- but the cost is visible
%- we're going from 0
%- we could use less shallow
%\subsection{It is similar to other models}
%- It is a model for RAM memory
%- it's not a bug it's a feature

\subsection{What about Hashing?}
We have been asked whether the current virtual address translation system could be replaced with one based on hashing tables to achieve a constant amortized translation time.
Let us argue that it is not a good idea.
First and foremost, hashing tables sometimes need rehashing, and this would mean the complete blockage of an operating system.
Moreover, an adversary can try to increase a number of necessary rehashes.
Note that probabilistic guarantees are on the frequencies of the rehashes and the program isolation is insufficient to discard this concern, because an attack can be performed with side-channels like, for example, differential power analysis (see \cite{tiri}).
Finally, a tree walk is simple enough to be supported by hardware to obtain significant speedups; in case of hashing, this would be not so easy.

On the other hand, simple hash tables can be used to implement efficient caches.
In fact, associative memory can be seen as a hardware implementation of a hashing table.
If we no longer require from the associative memory that it reliable stores all the previous entries, then associative memories of small enough sizes can be well supported by hardware.
This is in fact how the TLB is implemented, and one of the reasons why it is so small.

\section{Conclusions} \label{conclusions}
We have introduced the VAT model and have analyzed some fundamental algorithms in this model.
We have shown that the predictions made by the model agree well with measured running times. 
Our work is just the beginning. In follow-up, we show together with Patrick Nicholson~\cite{Cache-Oblivious-VAT} that all cache-oblivious algorithms perform well in the VAT-model provided a tall cache assumption that is somewhat more stringent than for the EM-model. It would be interesting to know, whether this more stringent assumption is necessary. 

We believe that every data structure and algorithms course must also discuss algorithm engineering issues.
One such issue is that the RAM model ignores essential aspects of modern hardware.
The EM model and the VAT model capture additional aspects.

% Bibliography
\bibliographystyle{ACM-Reference-Format-Journals}
\bibliography{ref,dissertation}
%\bibliography{acmsmall-sample-bibfile}
                             % Sample .bib file with references that match those in
                             % the 'Specifications Document (V1.5)' as well containing
                             % 'legacy' bibs and bibs with 'alternate codings'.
                             % Gerry Murray - March 2012

% History dates
\received{March 2013}{??}{??}
\end{document}
% End of v2-acmsmall-sample.tex (March 2012) - Gerry Murray, ACM